\documentclass[lettersize,journal]{IEEEtran}
\usepackage{amsmath,amsfonts}
\usepackage{amsmath, amssymb}
\usepackage{amsthm}

\usepackage[ruled,vlined]{algorithm2e}
\usepackage{array}
\usepackage[caption=false,font=normalsize,labelfont=sf,textfont=sf]{subfig}
\usepackage{textcomp}
\usepackage{stfloats}
\usepackage{url}
\usepackage{verbatim}
\usepackage{graphicx}
\usepackage{booktabs}
\usepackage{makecell}
\usepackage{multirow}
\usepackage{tabularx}
\usepackage{threeparttable}
\usepackage{cite}
\usepackage[colorlinks, linkcolor=red]{hyperref}
\DeclareMathSizes{10}{9}{7}{5}
\newtheorem{theorem}{Theorem}

\newtheorem{proposition}[theorem]{Proposition}
\newtheorem{corollary}{Corollary}
\SetKwBlock{Initialization}{Initialization}{end}

\begin{document}

\title{Coverage-Preserving Self-Healing in Partially Observable 3D UAV Swarm Networks via an Advection-Diffusion Graph Transformer}


\author{\IEEEauthorblockN{Dongbin~Jiao\IEEEauthorrefmark{1}, \IEEEmembership{Member,~IEEE,}
Zhihao~Li\IEEEauthorrefmark{1}, Zongsheng~Geng\IEEEauthorrefmark{1}, Dongdong~Zhao\IEEEauthorrefmark{1}, \IEEEmembership{Senior Member, IEEE}, Shengcai~ Liu\IEEEauthorrefmark{3}, \IEEEmembership{Member, IEEE}, and Shi Yan\IEEEauthorrefmark{1}, \IEEEmembership{Senior Member, IEEE}}
\thanks{\IEEEauthorblockA{\IEEEauthorrefmark{1}School of Information Science and Engineering, Lanzhou University, Lanzhou, 730000, P. R. China (e-mail: \{jiaodb, lzhihao2025, gengzsh2023,  ddzhao, yanshi\}@lzu.edu.cn).}}
\thanks{\IEEEauthorblockA{\IEEEauthorrefmark{3}Guangdong Provincial Key Laboratory of Brain-inspired Intelligent Computation, Department of Computer Science and Engineering, Southern University of Science and Technology, Shenzhen, 518055, P. R. China (e-mail: liusc3@sustech.edu.cn).}}
}


\markboth{}
{Shell \MakeLowercase{\textit{et al.}}: A Sample Article Using IEEEtran.cls for IEEE Journals}


\maketitle

\begin{abstract}
This paper investigates the resilience and self-healing of unmanned aerial vehicle swarm communication networks  under severe network partitioning in complex and adversarial environments. To address limited global observability after damage, we develop a damage-state inference mechanism that enables the ground control station (GCS) to infer global failures from sparse local reports without transmitting location-sensitive coordinates. To recover multi-cluster partitioned topologies, we then propose AdDiff-GT, a physics-inspired graph Transformer that incorporates advection-diffusion dynamics into a node-supernode hybrid graph. Local message propagation captures intra-cluster structural relationships, while global attention models inter-cluster interactions. A weighted composite loss jointly considers connectivity restoration, maneuvering cost, motion consistency, inter-cluster coordination, and spatial preservation during model training. We further develop a meta-learning strategy for transferable initialization across damage patterns and swarm scales, enabling efficient online GCS adaptation and distributed airborne inference. Simulations on 8,000 partitioned scenarios, including 5,000 large-scale cases, show that AdDiff-GT achieves effective topology recovery and spatial preservation over six representative baselines. Meta-learning reduces online planning overhead with a minor increase in recovery steps, balancing prior knowledge transfer and online adaptation. 
\end{abstract}

\begin{IEEEkeywords}
UAV swarm networks, resilient self-healing, advection-diffusion graph Transformer (AdDiff-GT), partial observability, meta-learning.
\end{IEEEkeywords}

\section{Introduction}
\IEEEPARstart{U}{nmanned} aerial vehicle (UAV) swarm communication network (USNET) have emerged as a promising networking paradigm for disaster response, emergency communications, cooperative sensing, and complex or adversarial operations~\cite{cao2024computational,zha2026large,al2026uav}. However, their dynamic and distributed nature makes them vulnerable to node failures, link disruptions, electromagnetic interference, and malicious attacks. Severe damage can fragment an connected swarm into multiple disconnected clusters, disrupting information exchange and degrading network connectivity. Thus, autonomous topology recovery is essential for resilient USNET operation\cite{mou2022resilient,kong2024resilience}.

Existing resilient UAV network studies have progressed from connectivity restoration to multidimensional resilience involving communication capacity, resource constraints, coverage, quality of service (QoS), and adversarial conditions~\cite{li2023soft,liu2024multistate,kong2024resilience,zhang2024dynamic}. Representative topology-recovery strategies include local interaction and bio-inspired swarm behaviors~\cite{chen2020sidr}, recovery controlled by software-defined networking ~\cite{liu2023sdn}, Delaunay triangulation for relay placement~\cite{huo2025dynamic}, virtual coordinate-based topology maintenance~\cite{gaydamaka2024dynamic}, and robust relay repositioning~\cite{liu2025robusttopology}. Recent studies have further incorporated coverage preservation and QoS guarantees into connectivity restoration~\cite{gao2022coverage,ge2024joint,zheng2025uavswarm,peng2026fastcoverage,wang2026a2g}. However, reliance on centralized optimization, explicit topology information, or repeated planning limits scalability under severe partitioning. Moreover, connectivity-driven repositioning may cause UAV aggregation and degrade spatial distribution, requiring self-healing to jointly consider connectivity, spatial preservation, and online efficiency.

Severe network damage limits global observability from fragmented local observations. Related studies have highlighted partial observability~\cite{ye2022multi} and multi-level damage awareness~\cite{lin2025multi}. However, conventional monitoring depends on continuous collection of node positions and link states~\cite{chen2026ris}, incurring substantial communication overhead and becoming unreliable when complete network information is unavailable~\cite{wei2024survey}. Existing multi-agent reinforcement learning and distributed control approaches, such as MADDPG-APF~\cite{li2025fast}, zero-shot physics-guided learning~\cite{lin2026zero}, and spatio-temporal aggregation~\cite{guo2024fanets,zhuang2026network}, mainly address scalable decision-making and distributed execution rather than the preceding global damage-state inference problem. Therefore, reconstructing global failure states from sparse local observations without continuous transmission of location-sensitive information remains insufficiently addressed.

Graph learning has emerged as an effective paradigm for UAV network recovery and resilient swarm control~\cite{mou2022resilient,ye2022multi,lin2025multi,peng2025prioritized,wang2025uav,peng2026fastcoverage}. Graph neural networks (GNNs) and graph attention networks (GATs) have been widely adopted to learn topology-aware representations for graph-based recovery~\cite{kipf2017semisupervised,wang2025uav}, including graph reinforcement learning under partial observability
~\cite{ye2022multi,peng2025prioritized} and multi-branch damage attention for damage-state modeling~\cite{lin2025multi}. Graph learning has also been extended to recovery planning through online GCN-based recovery with meta-learning~\cite{mou2022resilient}, 
GAT-DRL-based joint neighbor selection and trajectory planning~\cite{wang2025uav}, and energy-constrained graph convolutional networks (GCNs) recovery
~\cite{wang2025energy}. Other recent efforts have explored zero-shot physics-guided learning~\cite{lin2026zero}, distributed hybrid MADDPG-APF frameworks~\cite{li2025fast}, spatio-temporal attention aggregation ~\cite{guo2024fanets}, and heterogeneous GNNs for IoT-assisted systems~\cite{zhuang2026network} to improve scalability and support distributed execution. 

Despite these advances, graph-based recovery methods remain challenged by severely partitioned, large-scale USNET. Local message propagation has limited capability to capture long-range dependencies among disconnected clusters, while separate treatment of intra-cluster preservation and inter-cluster coordination may cause inefficient reconnection and excessive spatial aggregation. Moreover, models trained on specific scales or damage distributions often require costly online optimization for unseen scenarios. These gaps motivate a unified framework for global damage-state inference, UAV- and cluster-level topology modeling, and efficient cross-scenario adaptation.

To address these challenges, we develop an integrated self-healing framework for partially observable USNET based on an advection-diffusion graph Transformer (AdDiff-GT). A damage-state perception mechanism infers the global post-damage state from sparse local reports without sharing location-sensitive coordinates. A node-supernode hybrid graph captures intra-cluster structures and inter-cluster relationships. AdDiff-GT then combines local message propagation with global attention, where advection-diffusion dynamics~\cite{medvedev2014nonlinear} provide a physics-inspired inductive bias for topology recovery. A meta-learning framework transfers recovery knowledge across damage patterns and swarm scales, reducing online planning overhead and enabling distributed airborne inference. A weighted composite loss further balances connectivity restoration, maneuvering cost, motion consistency, inter-cluster coordination, and spatial preservation.


The main contributions of this paper are summarized as follows: (i) We formulate resilient self-healing for severely partitioned USNET under incomplete damage-state information, jointly considering connectivity, spatial preservation, and online efficiency, and develop sparse-observation-based global damage inference. (ii) We propose AdDiff-GT, a physics-inspired graph Transformer that integrates advection-diffusion dynamics with a node-supernode hybrid graph for local-global topology recovery, trained with a weighted composite loss. (iii) We develop scale-aware meta-learning for efficient online adaptation and distributed airborne inference. (iv) Experiments on 8,000 partitioned scenarios, including 5,000 large-scale cases, demonstrate effective connectivity recovery and spatial dispersion, while meta-initialization reduces online planning latency with a moderate increase in recovery steps.

The remainder of this paper is organized as follows. Section~\ref{sec:system} presents the system model and problem formulation. Section~\ref{sec:damage_airground} describes damage assessment and air-ground asymmetric recovery. Section~\ref{sec:AdDiff-GT} presents the proposed AdDiff-GT framework and fast adaptation.  Section~\ref{sec:simulation} provides the experimental evaluation. Section~\ref{sec:conclusion} concludes this paper.

Definitions of the main symbols and operators used in this paper are provided in Appendix~\ref{app:notation}. 

\section{System Model and Problem Formulation}\label{sec:system}
\subsection{UAV Swarm and Communication Architecture}
Consider a USNET consisting of $N$ UAVs. The three-dimensional (3D) position of UAV $i$ at time $t$ is denoted by $\mathbf{p}_{i,t} =[x_{i,t},y_{i,t},z_{i,t}]^{\mathrm{T}}\in \mathbb{R}^3$, and its motion state is defined as $\mathbf{x}_{i,t} =\left[\mathbf{p}_{i,t}^{\mathrm{T}},s_{i,t}\right]^{\mathrm{T}}$,
where $s_{i,t}\in\{0,v_{\mathrm{mov}}\}$ represents whether UAV $i$ is stationary or executing recovery maneuvers at a constant speed $v_{\mathrm{mov}}$.

\subsubsection{Hierarchical Communication Architecture}
The USNET adopts a hierarchical communication architecture comprising short-range information interaction links (SIILs) and long-range data transmission links (LDTLs). SIILs support local information exchange and intra-swarm topology construction, whereas LDTLs provide a low-rate air-ground control channel between the GCS and surviving UAVs for post-damage coordination. This channel carries the exchange of disconnected-neighbor IDs, inferred damage information, model parameters, and recovery commands.

Under adversarial conditions, post-damage air-ground communication is restricted to ID-level state information and model-level parameter updates to limit the exposure of location-sensitive information. In particular, surviving UAVs do not transmit real-time coordinates or recovery trajectories to the GCS. Instead, spatial information is maintained in a GCS-side virtual topology for global damage-state inference and online model adaptation, reducing location-sensitive information exposure while preserving closed-loop self-healing.

\subsubsection{Communication Links}
The inter-UAV and UAV-to-GCS communication channels are modeled as air-to-air (A2A) and air-to-ground (A2G) links, corresponding to SIILs and LDTLs, respectively.

\textbf{A2A Links}: A line-of-sight (LoS) propagation model is adopted for inter-UAV communication~\cite{khuwaja2018survey}. The received power from UAV $j$ to UAV $i$ is given by
\begin{equation}
P^{\mathrm{A2A}}_{ij,t}=P_{\mathrm{tx}}G_iG_j\left(\frac{\lambda}{4\pi d_{ij,t}}\right)^2|h_{ij,t}|^2,
\end{equation}
where $P_{\mathrm{tx}}$ is the transmission power, $G_i$ and $G_j$ are the antenna gains, $\lambda$ is the carrier wavelength, $d_{ij,t}=\|\mathbf{p}_{i,t}-\mathbf{p}_{j,t}\|_2$ is the Euclidean distance between UAVs $i$ and $j$, and $|h_{ij,t}|^2$ denotes the small-scale fading gain. An A2A link is established when the received power exceeds the sensitivity threshold $\tau$, i.e., 
$P^{\mathrm{A2A}}_{ij,t}\ge \tau$, which is equivalently characterized by
\begin{equation}
\|\mathbf{p}_{i,t}-\mathbf{p}_{j,t}\|_2\le R_c,
\label{eq:a2a_distance}
\end{equation}
where $R_c$ denotes the effective communication radius. Thus, Eq.~\eqref{eq:a2a_distance} defines the physical edge criterion for constructing the intra-swarm topology and evaluating connectivity recovery.

\textbf{A2G Links}: A large-scale fading model is adopted for GCS-UAV communication~\cite{khawaja2019survey}. The received power at the GCS from UAV $i$ is modeled as
\begin{equation}
P_{i,t}^{\mathrm{A2G}}=P_{\mathrm{tx}}\beta_0 d_{i,t}^{-\alpha}\chi,
\end{equation}
where $\beta_0$ denotes the channel gain at the reference distance, $d_{i,t}$  is the distance between UAV $i$ and the GCS, $\alpha$ is the path loss exponent, and $\chi$ represents the shadow fading factor. The A2G link is available when $\frac{P_{i,t}^{\mathrm{A2G}}}{\sigma^2}\ge \gamma$, where $\sigma^2$ is the noise power and $\gamma$ is the signal-to-noise ratio threshold. A2G links are used solely for GCS-assisted control and do not participate in intra-swarm topology construction or connectivity recovery.

\subsection{Residual UAV~(\text{RUAV}) Graph Construction}
Let the intact USNET at time $t$ be represented by $\mathcal{G}_t=(\mathcal{V}_t,\mathcal{E}_t,\mathbf{X}_t)$. Following an unpredictable extreme disaster (UED), let $\mathcal{R}_t\subseteq \mathcal{V}_t$ denote the set of surviving
UAVs. We assume that all surviving UAVs can report their status to the GCS through backhaul links.
The resulting \text{RUAV} graph is defined as $\mathcal{G}'_t=(\mathcal{R}_t,\mathcal{E}'_t,\mathbf{X}'_t)$,
where $\mathcal{E}'_t=\{(i,j)\mid i,j\in\mathcal{R}_t,\ i\ne j,\ \|\mathbf{p}_{i,t}-\mathbf{p}_{j,t}\|_2\le R_c\}$,
and $\mathbf{X}'_t=[\mathbf{p}_{r_1,t},\ldots,\mathbf{p}_{r_{|\mathcal{R}_t|},t}]^{\mathrm{T}}$ denotes the position matrix of the surviving UAVs. Thus, $\mathcal{G}'_t$ characterizes the post-damage communication topology. If $\mathcal{G}'_t$ forms a single connected cluster, the surviving UAVs retain global communication capability; otherwise, the RUAV network is partitioned into multiple disconnected clusters, which must be reconnected through cooperative UAV repositioning.

Let $\mathbf{A}_t=(a_{ij,t})$ denote the adjacency matrix of $\mathcal{G}'_t$, where $a_{ij,t}=1$ if $(i,j)\in \mathcal{E}'_t$ and $a_{ij,t}=0$
otherwise. The corresponding degree matrix is $\mathbf{D}_t=\operatorname{diag}\left(\sum_{j}a_{1j,t},\ldots,\sum_{j}a_{|\mathcal{R}_t|j,t}\right)$, and the graph Laplacian matrix is $\mathbf{L}_t=\mathbf{D}_t-\mathbf{A}_t$.


\subsection{Problem Formulation}
When the residual network is partitioned into $C_t>1$ connected clusters, the aim of self-healing is to determine the displacement of surviving UAVs such that global connectivity is restored while limiting mobility cost and preserving the spatial distribution. Let the repaired position of UAV $i$ be 
\begin{equation}
\widehat{\mathbf{p}}_i=\mathbf{p}_{i,t}+\Delta\mathbf{p}_i,\quad i\in\mathcal{R}_t,
\end{equation}
where $\Delta\mathbf{p}_i$ denotes the displacement. Assuming that all surviving UAVs move at a constant speed $v_{\mathrm{mov}}$, the recovery time is determined by the maximum displacement:
\begin{equation}
T_{\mathrm{rec}}=\frac{1}{v_{\mathrm{mov}}}\max_{i\in\mathcal{R}_t}\|\Delta\mathbf{p}_i\|_2.
\end{equation}
The maximum and average displacement are defined as
\begin{equation}
d_{\max}^{\mathrm{mov}}=\max_{i\in\mathcal{R}_t}\|\Delta\mathbf{p}_i\|_2, \quad d_{\mathrm{avg}}^{\mathrm{mov}}=\frac{1}{|\mathcal{R}_t|}\sum_{i\in\mathcal{R}_t}\|\Delta\mathbf{p}_i\|_2.
\end{equation}
The former characterizes the recovery latency and peak mobility burden, whereas the latter measures the overall mobility cost of the swarm.

To quantify the spatial distribution among recovered UAVs, we define the minimum pairwise distance as
\begin{equation}
d_{\min}^{\mathrm{sep}}=\min_{i\ne j}\|\widehat{\mathbf{p}}_i-\widehat{\mathbf{p}}_j\|_2.
\end{equation}
A larger $d_{\min}^{\mathrm{sep}}$ promotes spatial dispersion while maintaining the required network connectivity. 

Accordingly, the self-healing problem is formulated as
\begin{subequations}
\label{eq:problem}
\begin{align}
\min_{\{\Delta\mathbf{p}_i\}}\quad
& J=\alpha_{\max}d_{\max}^{\mathrm{mov}}
\!+\alpha_{\mathrm{avg}}d_{\mathrm{avg}}^{\mathrm{mov}}
-\lambda_{\mathrm{sep}} d_{\min}^{\mathrm{sep}} \\
\mathrm{s.t.}\quad
& C(\widehat{\mathbf{P}})=1, \label{eq:problem_conn}\\
& \|\widehat{\mathbf{p}}_i-\widehat{\mathbf{p}}_j\|_2\ge d_{\mathrm{safe}},\quad \forall (i,j)\in \mathcal{E}'_t(\widehat{\mathbf{P}}), \label{eq:safe-dis}\\
& \widehat{\mathbf{p}}_i\in\Omega,\quad i\in\mathcal{R}_t \label{eq:fly-dis},
\end{align}
\end{subequations}
where $\alpha_{\max}$, $\alpha_{\mathrm{avg}}$, and $\lambda_{\mathrm{sep}}$ are nonnegative weighting coefficients that balance recovery latency, mobility cost, and spatial dispersion, respectively. Here,  $C(\widehat{\mathbf{P}})$ denotes the number of connected clusters induced by the recovered UAV positions $\widehat{\mathbf{P}}$, $\Omega$ is the feasible 3D flight region, and $d_{\mathrm{safe}}$ is the anti-collision safety threshold. Constraint \eqref{eq:problem_conn} ensures global  connectivity, while 
constraint \eqref{eq:safe-dis} and 
constraint \eqref{eq:fly-dis} ensure collision avoidance and flight-space feasibility, respectively. The recovered communication topology is determined by the effective A2A distance criterion in Eq.~\eqref{eq:a2a_distance}.

The optimization problem in \eqref{eq:problem} is nonlinear and nonconvex due to discrete connectivity, continuous UAV positions, and nonsmooth max/min operators. Therefore, conventional online optimization is costly, while purely local strategies lack cross-cluster awareness. We reformulate self-healing as a graph-based displacement prediction task, where AdDiff-GT predicts $\Delta\mathbf{P}$ from the hybrid graph and jointly optimizes connectivity, mobility efficiency, and spatial preservation.

\section{Damage Assessment and Air-Ground Asymmetric Recovery}
\label{sec:damage_airground}

\subsection{Damage-State Inference under Partial Observations}
Following a UED event, the 
UAV set $\mathcal{V}_t$ is partitioned into the surviving set $\mathcal{R}_t$ and the failed set $\mathcal{V}_{\mathrm{fail}}$, with $\mathcal{V}_t=\mathcal{R}_t\cup\mathcal{V}_{\mathrm{fail}}$ and $\mathcal{R}_t\cap\mathcal{V}_{\mathrm{fail}}=\emptyset$. Under partial observability, the GCS cannot directly access $\mathcal{V}_{\mathrm{fail}}$ and must infer the global damage-state from local reports generated by surviving UAVs.

To reduce communication overhead and avoid the exposure of location-sensitive information, an event-triggered local index reporting mechanism is adopted. Specifically, 
a surviving UAV $u\in \mathcal{R}_t$ triggers a report when it detects that a previously connected neighbor has become unreachable. The report contains only the indices of disconnected neighbors, without transmitting real-time coordinates or recovery trajectories. Let  $\mathcal{R}_{\mathrm{rep}}=\{u\in\mathcal{R}_t\mid \exists v\in\mathcal{V}_{\mathrm{fail}},\ (u,v)\in\mathcal{E}_t\}$ denote the reporting-UAV set, and let $\mathcal{D}_{u}^{\mathrm{rep}}$ be the corresponding locally observed failed-neighbor set. The aggregated initial damage boundary is then given by $\mathcal{D}_{\mathrm{rep}}=\bigcup_{u\in\mathcal{R}_{\mathrm{rep}}}\mathcal{D}_{u}^{\mathrm{rep}}$.
\begin{algorithm}[htp]\small
\SetKwInput{KwIn}{Inputs}
\SetKwInput{KwOut}{Output}
\KwIn{Historical topology $\mathcal{G}_t=(\mathcal{V}_t,\mathcal{E}_t)$, reporting node set $\mathcal{R}_{\mathrm{rep}}$, local ID reports $\{\mathcal{D}_{u}^{\mathrm{rep}}\}$}
\KwOut{Inferred damaged node set $\mathcal{D}_{\mathrm{inf}}$}
\Initialization{
    $\mathcal{D}_{\mathrm{inf}}\leftarrow\bigcup_{u\in\mathcal{R}_{\mathrm{rep}}}\mathcal{D}_{u}^{\mathrm{rep}}$\;
    $\mathcal{F}\leftarrow\mathcal{D}_{\mathrm{inf}}$\;
}
\While{$\mathcal{F}$ is not empty}{
    Pop one damaged node $d$ from $\mathcal{F}$\;
    \For{\textbf{each} historical neighbor $n\in\mathcal{N}_{\mathcal{G}_t}(d)$}{
        \If{$n\notin\mathcal{D}_{\mathrm{inf}}$ and $n\notin\mathcal{R}_{\mathrm{rep}}$}{
            $\mathcal{D}_{\mathrm{inf}}\leftarrow\mathcal{D}_{\mathrm{inf}}\cup\{n\}$\;
            Push $n$ into $\mathcal{F}$\;
        }
    }
}
\KwRet $\mathcal{D}_{\mathrm{inf}}$\;
\caption{Damage Inference from Local Reports}
\label{alg:damage_inference}
\end{algorithm}

Based on $\mathcal{D}_{\mathrm{rep}}$, the GCS performs topological closure expansion over the historical graph $\mathcal{G}_t$, as detailed in Algorithm \ref{alg:damage_inference}.
This process follows the boundary diffusion principle under the \emph{silence-as-failure} rule: a node adjacent to a confirmed failed node is classified as failed if it does not belong to the reporting set. The resulting inferred damage set is $\mathcal{D}_{\mathrm{inf}}
=\operatorname{BFS_{Closure}}_{\mathcal{G}_t}\left(\mathcal{D}_{\mathrm{rep}},\mathcal{R}_{\mathrm{rep}}\right)$. This inference mechanism exploits the boundary-shielding property of event-triggered reporting: every surviving node adjacent to a failed UAV is included in $\mathcal{R}_{\mathrm{rep}}$. Hence, a non-reporting node adjacent to a confirmed failed node can be recursively classified as failed, allowing the GCS to reconstruct the damage region from its observable boundary toward the interior of the damaged area.

\subsection{Connectivity Detection of RUAV Graph}
Repeated eigendecomposition of $\mathbf{L}_t$ 
incurs a computational complexity of
$\mathcal{O}(|\mathcal{R}_t|^3)$, which is unsuitable for large-scale online recovery. Therefore, Algorithm~\ref{alg:bfs_connectivity} employs the breadth-first search (BFS) algorithm to identify connected clusters by traversing adjacent UAVs satisfying $[\mathbf{L}_t]_{uv}<0$. Since each UAV and communication edge is visited at most a constant number of times, the resulting complexity is $\mathcal{O}(|\mathcal{R}_t|+|\mathcal{E}'_t|)$.
\begin{algorithm}[htp]\small
\SetKwInput{KwIn}{Inputs}
\SetKwInput{KwOut}{Outputs}
\KwIn{Surviving node set $\mathcal{R}_t$, Laplacian matrix $\mathbf{L}_t$}
\KwOut{Number of connected components $C_t$, connectivity indicator $b_t$}
\Initialization{$\mathbf{v}\leftarrow\mathbf{0}_{|\mathcal{R}_t|}$, $C_t\leftarrow0$\;}
\For{\textbf{each} node index $i=1,\ldots,|\mathcal{R}_t|$}{
    \If{$\mathbf{v}[i]=0$}{
        $C_t\leftarrow C_t+1$, $\mathcal{F}\leftarrow[i]$, $\mathbf{v}[i]\leftarrow1$\;
        \While{$\mathcal{F}$ is not empty}{
            Pop the head node $u$ from $\mathcal{F}$\;
            $\mathcal{N}_u\leftarrow\{j\mid [\mathbf{L}_t]_{u,j}<0\}$\;
            \For{\textbf{each} $j\in\mathcal{N}_u$}{
                \If{$\mathbf{v}[j]=0$}{
                    Push $j$ into $\mathcal{F}$; $\mathbf{v}[j]\leftarrow1$\;
                }
            }
        }
    }
}
$b_t\leftarrow\mathbb{I}\{C_t=1\}$\;
\KwRet $(C_t,b_t)$\;
\caption{BFS-Based Connectivity Detection}
\label{alg:bfs_connectivity}
\end{algorithm}

\vspace{-0.20in}
\subsection{Air-Ground Asymmetric Online Recovery}
Upon completion of damage inference, the GCS obtains the surviving UAV set $\mathcal{R}_t=\mathcal{V}_t\setminus\mathcal{D}_{\mathrm{inf}}$ and the corresponding RUAV graph $\mathcal{G}'_t$. If the BFS-based connectivity detection in Algorithm \ref{alg:bfs_connectivity} returns $C_t=1$,  the surviving network is already connected and no recovery maneuver is required. Otherwise, for $C_t > 1$, the online self-healing procedure in Algorithm \ref{alg:airground} is activated.
\begin{algorithm}[ht]\small
\SetKwInput{KwIn}{Inputs}
\SetKwInput{KwOut}{Outputs}
\KwIn{Historical topology $\mathcal{G}_t$, local ID reports, meta-parameter library $\{\mathbf{\Theta}_n^\star\}_{n=2}^{N}$}
\KwOut{Displacement decision $\Delta\mathbf{p}_i$ for each surviving UAV}
$\mathcal{D}_{\mathrm{inf}}\leftarrow$ \text{Algorithm~\ref{alg:damage_inference}} (Damage Inference)\;
$\mathcal{R}_t\leftarrow\mathcal{V}_t\setminus\mathcal{D}_{\mathrm{inf}}$, and construct $\mathcal{G}'_t$\;
$(C_t,b_t)\leftarrow$ Algorithm~\ref{alg:bfs_connectivity} $(\mathcal{R}_t,\mathbf{L}_t)$\;
\If{$b_t=1$}{
    GCS broadcasts a hold command and all RUAV keep positions\;
}
\Else{
    GCS loads meta prior $\mathbf{\Theta}_{|\mathcal{R}_t|}^\star$, fine-tunes AdDiff-GT in the virtual damaged topology, and broadcasts $\mathcal{D}_{\mathrm{inf}}$ and tuned parameters $\mathbf{\Theta}^{+}$\;
    \For{\textbf{each} surviving UAV $i\in\mathcal{R}_t$}{
        Update local graph view by removing $\mathcal{D}_{\mathrm{inf}}$\;
        Run onboard forward inference with $\mathbf{\Theta}^{+}$\;
        Execute the decoded displacement $\Delta\mathbf{p}_i$\;
    }
}
\caption{Air-Ground Asymmetric AdDiff-GT Recovery} \label{alg:airground}
\end{algorithm}

The recovery procedure adopts an asymmetric control architecture in which the GCS performs global damage inference and online model adaptation, while surviving UAVs independently execute local displacement inference using the adapted parameters. This design leverages GCS-side computation while avoiding high-frequency centralized trajectory control. Air-ground communication is limited to ID-level reports and model-level updates, reducing the exposure of position-sensitive information.



\section{AdDiff-GT for Coverage-Preservin Self-Healing}\label{sec:AdDiff-GT}
The proposed AdDiff-GT self-healing process is illustrated
in Appendix~\ref{app:process}.
\subsection{Hierarchical Repair Graph Representation}
\label{sec:model}
\subsubsection{From Physical Manifold to UAV Repair Graph}
AdDiff-GT is a physics-inspired graph Transformer designed for coverage-preserving self-healing in partially observable USNET. Inspired by physical models~\cite{medvedev2014nonlinear,wu2025supercharging}, we incorporate advection-diffusion dynamics into a hybrid node-supernode graph to introduce physics-guided inductive bias for  displacement-field learning and topology recovery.

We model the constrained 3D task airspace as a compact manifold $\Omega\subset\mathbb{R}^3$ with boundaries.  The tangent space $T_m\Omega$ at any position $m\in\Omega$ characterizes the locally feasible differential maneuvering directions. Accordingly, the position $\mathbf{p}_{i,t}$ of UAV $i$ 
 is treated as a discrete sampling point on $\Omega$, and its recovery displacement $\Delta\mathbf{p}_i \in T_{\mathbf{p}_{i,t}}\Omega$ represents an executable maneuvering vector.

Let $\varphi(m,t)$ denote a continuous repair-potential field over $\Omega$ that captures both local connectivity and cross-cluster bridging potential. Its spatial propagation can be modeled by the diffusion process
\begin{equation}
\frac{\partial \varphi(m,t)}{\partial t}
=\nabla ^{\ast}\left(S(m,t)\odot\nabla \varphi(m,t)\right),
\quad t \ge 0, m\in \Omega,
\label{eq:manifold_diffusion}
\end{equation}
where $\nabla$ and $\nabla ^{\ast}$ denote the gradient and divergence operators on the manifold, respectively, and $S(m,t)$ is the diffusivity that controls the spatial propagation of repair information. 

Diffusion alone cannot capture the directional cooperative motion required for topology recovery. Therefore, we introduce a tangential velocity field $V(m,t)\in T_m\Omega$, whose direction is constrained by the observed communication topology. The resulting advection-diffusion dynamics are given by 
\begin{equation}
\frac{\partial \varphi(m,t)}{\partial t}
= \nabla^{\ast}\left(S(m,t)\odot\nabla \varphi(m,t)\right) 
 + \beta\nabla^{\ast}\left(V(m,t)\cdot\varphi(m,t)\right),
\label{eq:manifold_advdiff}
\end{equation}
where $t\ge 0$, $\beta \ge 0$ controls the intensity of the advection process. The diffusion term propagates repair potential across the task airspace, whereas the advection term introduces topology-constrained directional evolution to guide self-healing.

Node failures caused by UED remove sampling points from the manifold $\Omega$, resulting in abrupt changes in local adjacency structures. Since the continuous state field cannot be fully observed after damage, it is discretized over a hybrid node set $\mathbf{Z}(t)=[\mathbf{z}_u(t)]_{u\in\mathcal{V}_t^h}$, where $\mathcal{V}_t^h$ comprises surviving UAV nodes, representing manifold samples, and supernodes, representing coarse-grained abstractions of disconnected clusters. The resulting RUAV hybrid graph serves as a discrete approximation of the damaged communication manifold under partial observations. AdDiff-GT subsequently learns the recovery state over this hybrid graph and decodes it into 3D displacement increments for the surviving UAVs.


\subsubsection{Hybrid Node-Supernode Representation}
To jointly capture node-level connectivity and cluster-level fragmentation structure,
we extend the RUAV graph
$\mathcal{G}'_t$ into a hybrid node-supernode graph. 
Specifically, Algorithm~\ref{alg:bfs_connectivity} partitions $\mathcal{G}'_t$ into $C_t$  connected clusters $\{\mathcal{C}_1,\mathcal{C}_2,\ldots,\mathcal{C}_{C_t}\}$, satisfying $
\bigcup_{c=1}^{C_t}\mathcal{C}_c=\mathcal{R}_t$, $\mathcal{C}_c \cap \mathcal{C}_{c'} = \varnothing$, $c \neq c'$. Physical A2A links are preserved within each cluster, while each disconnected cluster is represented by an abstract supernode. For cluster $\mathcal{C}_c$, a supernode $s_c$ is introduced with its position defined as the cluster centroid:   
\begin{equation}
\mathbf{s}_{c,t}=\frac{1}{|\mathcal{C}_c|}\sum_{i\in\mathcal{C}_c}\mathbf{p}_{i,t}.
\label{eq:supernode_center}
\end{equation}
Let $q(i)$ denote the index of the connected cluster containing node $i$, i.e., $i\in \mathcal{C}_{q(i)}$. 
Supernodes provide cluster-level representations, enabling AdDiff-GT to jointly model individual UAV movements and global cluster evolution.

The resulting hybrid graph is defined as $\mathcal{G}_t^{h}=(\mathcal{V}_t^{h},\mathcal{E}_t^{h},\mathbf{H}_t)$, where the node set is ${\mathcal{V}_t^{h}=\mathcal{R}_t\cup\mathcal{S}_t}$,  the supernode set is ${\mathcal{S}_t=\{s_c\}_{c=1}^{C_t}}$, and the corresponding node-feature matrix $\mathbf{H}_t = [\mathbf{h}_{i,t}]_{i \in \mathcal{V}_t^{h}}$. The hybrid edge set $\mathcal{E}_t^{h}=\mathcal{E}_{\mathrm{phy}}\cup\mathcal{E}_{\mathrm{vir}}\cup\mathcal{E}_{\mathrm{ns}}\cup\mathcal{E}_{\mathrm{ss}}$ contains four types of edges: $\mathcal{E}_{\mathrm{phy}}$ denotes intra-cluster physical A2A links; $\mathcal{E}_{\mathrm{vir}}$ contains virtual links from each boundary UAV to its nearest UAV in another cluster; $\mathcal{E}_{\mathrm{ns}}$ contains association edges between UAV nodes and their corresponding supernodes; and $\mathcal{E}_{\mathrm{ss}}$ contains fully connected cluster-level edges among supernodes. 

Specifically, physical edges are defined as 
\begin{equation}
(i,j)\in\mathcal{E}_{\mathrm{phy}}\Leftrightarrow
 \|\mathbf{p}_{i,t}-\mathbf{p}_{j,t}\|_2\le R_c, i,j\in \mathcal{C}_{q(i)}, 
\end{equation}
with a physical-edge weight $w^{\mathrm{phy}}_{ij}$.

For a boundary UAV  $i \in \mathcal{B}_{q(i)}$, where $\mathcal{B}_{c}$ denotes the boundary-UAV set of cluster $\mathcal{C}_c$, its nearest UAV in another cluster is selected as
\begin{equation}
j^\star(i)=\arg\min_{j\in \mathcal{R}_t\setminus\mathcal{C}_{q(i)}}\|\mathbf{p}_{i,t}-\mathbf{p}_{j,t}\|_2.
\end{equation}
The corresponding virtual edge is assigned the weight
\begin{equation}
w_{ij}^{\mathrm{vir}}=\frac{1}{\|\mathbf{p}_{i,t}-\mathbf{p}_{j,t}\|_2+\varepsilon}, \quad \varepsilon > 0.
\end{equation}

Each UAV node is additionally connected to its corresponding supernode through an association edge with a fixed weight $w^{\mathrm{ns}}_{is_c}$. Supernodes are fully connected to capture inter-cluster relationships, with
\begin{equation}
w^{\mathrm{ss}}_{s_cs_{c'}}=\frac{1}{\|\mathbf{s}_{c,t}-\mathbf{s}_{c',t}\|_2+\varepsilon},\quad c\ne c' .
\end{equation}

Thus, physical, association, virtual, and supernode edges respectively encode intra-cluster connectivity, UAV-cluster coupling, cross-cluster geometric guidance, and global inter-cluster relationships.

For each UAV node, a 13-dimensional (13D) input feature is constructed as
\begin{equation}
\mathbf{h}_{i,t}=
[\bar{\mathbf{p}}_{i,t},\ \bar{\mathbf{p}}_{i,t}-\bar{\mathbf{c}}_{q(i),t},\
\rho_{q(i),t},\ \delta_{i,t},\ r_{i,t},\ \kappa_{q(i),t},\ b_{i,t},\ \mu_{i,t},\ \nu_{i,t}],
\label{eq:feature}
\end{equation}
where $\bar{\mathbf{p}}_{i,t}\in\mathbb{R}^3$ denotes the normalized UAV position, $\bar{\mathbf{c}}_{q(i),t}\in\mathbb{R}^3$ is the normalized center of the cluster containing UAV $i$, and $\rho_{q(i),t}=|\mathcal{C}_{q(i)}|/|\mathcal{R}_t|$ is the cluster-size ratio. The degree ratio $\delta_{i,t}$ is normalized by the maximum node degree within its cluster, $r_{i,t} =\|\bar{\mathbf{p}}_{i,t}-\bar{\mathbf{g}_t}\|_2$ is the radial distance to the global swarm center $\bar{\mathbf{g}_t}$, $\kappa_{q(i),t}$ is the normalized cluster index. Furthermore, $b_{i,t}\in\{0,1\}$ indicates whether UAV $i$ is a boundary node, $\mu_{i,t} = \min(d_{i,t}^{\mathrm{near}}/R_c,5)$ is the truncated normalized distance to the nearest UAV in another cluster, and $\nu_{i,t}$ denotes the normalized index of the nearest other cluster. These features jointly encode absolute geometry, intra-cluster relative position, cluster scale, local connectivity, global spatial relations, cluster identity, boundary status, and inter-cluster proximity, enabling AdDiff-GT to identify UAVs with high potential for cross-cluster bridging. Supernodes use the same 13D feature space, with their positions determined by Eq.~\eqref{eq:supernode_center}  and cluster-level attributes derived from the corresponding UAV cluster. The relative-position and boundary-specific terms are set according to their cluster-level semantics, while the remaining geometric and structural features are computed consistently at the supernode level. This shared representation enables UAV nodes and supernodes to be processed within a unified AdDiff-GT architecture.

\subsection{Advection-Diffusion Graph Transformer}
As illustrated in Fig. \ref{fig:model_architecture}, AdDiff-GT maps the surviving communication topology to a self-healing displacement field. The model first constructs the RUAV graph $\mathcal{G}'_t$ from the surviving UAVs and then augments it with cluster-level supernodes to form the hybrid graph $\mathcal{G}_t^h$. The resulting 13D node features are  processed by advection-diffusion propagation layers to capture topology-aware feature evolution, followed by a linear decoder that generates 3D UAV displacement vectors.
\begin{figure*}[htbp]
    \centering
   \includegraphics[width=0.9\linewidth]{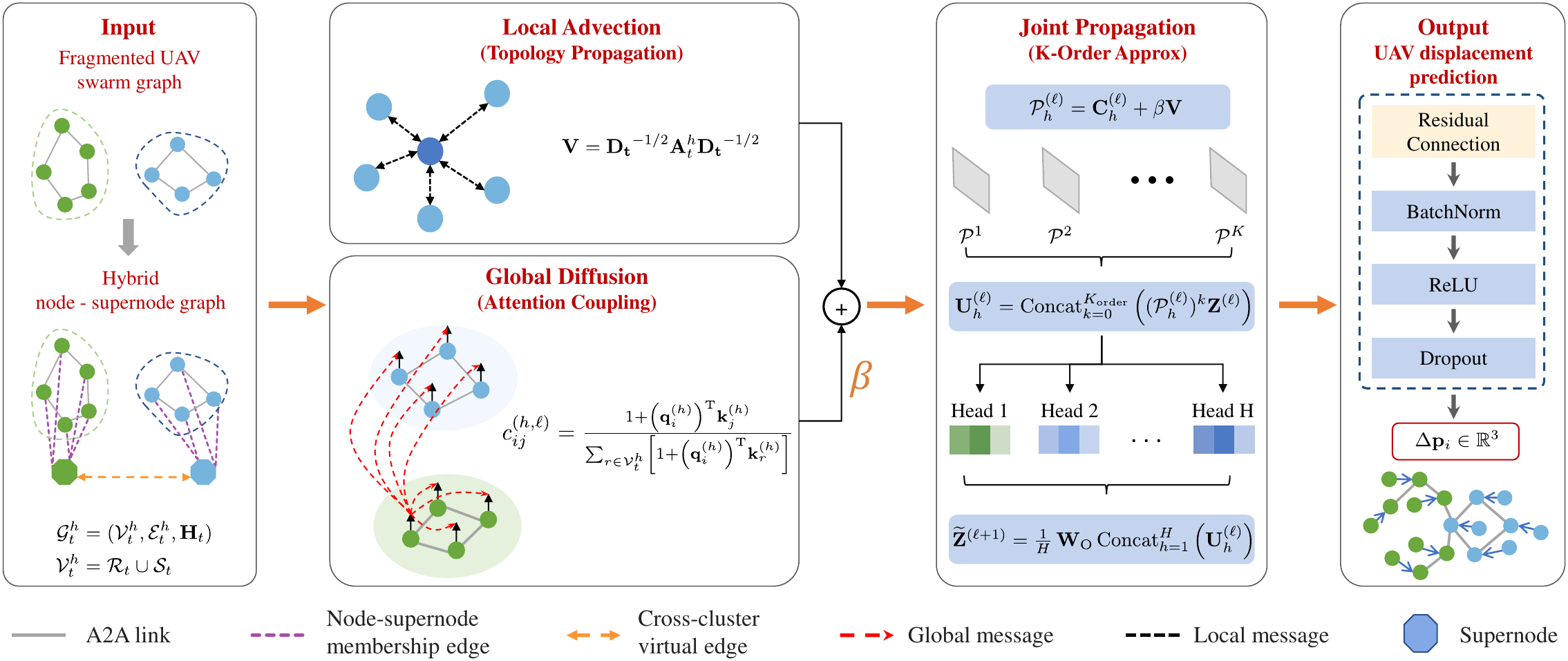}
    \caption{Architecture of AdDiff-GT, comprising five modules: 1) \emph{Residual Graph Processing}  constructs a hybrid graph from surviving UAVs and supernodes; 2) \emph{Local Advection} models intra-cluster topology and local motion; 3) \emph{Global Diffusion} captures cross-cluster interactions via multi-head attention; 4) \emph{Joint Propagation} integrates advection and diffusion through learnable multi-order propagation; and 5) \emph{Decoding Output} generates 3D recovery displacements.}
    \label{fig:model_architecture}
\end{figure*}

\subsubsection{Dynamics Modeling and Discretization} The core idea of AdDiff-GT is to model self-healing displacement prediction through advection-diffusion propagation over the hybrid graph. To connect the graph representation with the continuous dynamics in Eq.~\eqref{eq:manifold_advdiff}, we define graph gradient and divergence operators on  $\mathcal{G}_t^h$. For a node representation matrix $\mathbf{Z}=[\mathbf{z}_i]_{i\in\mathcal{V}_t^h}$, the graph gradient along edge $(i,j)$ is defined as  $(\nabla_{\mathcal{G}}\mathbf{Z})_{ij}=\mathbf{z}_j-\mathbf{z}_i$, while the divergence of an edge signal $\mathbf{E} = [\mathbf{e}_{ij}]$ is given by $(\nabla_{\mathcal{G}} ^{\ast} \mathbf{E})_i
=\sum_{j:(i,j)\in\mathcal{E}_t^h}\mathbf{e}_{ij}$. Under a linearized graph discretization of Eq.~\eqref{eq:manifold_advdiff}, the resulting dynamics are written as
\begin{equation}
\frac{\partial\mathbf{Z}(t)}{\partial t}
=\left[\mathbf{C}+\beta\mathbf{V}-\mathbf{I}\right]\mathbf{Z}(t),
\quad 0\le t\le T_{\mathrm{prop}},
\label{eq:advection_diffusion}
\end{equation}
where $\mathbf{V}=\mathbf{D}^{-1/2}_t\mathbf{A}_t^{h}\mathbf{D}^{-1/2}_t$ is the normalized local structure operator induced by the hybrid-graph adjacency matrix $\mathbf{A}_t^{h}$, and $\mathbf{C}$ denotes a non-local diffusion coupling matrix generated by multi-head attention. The operator $\mathbf{V}$ propagates information along observable communication links, $\mathbf{C}$ captures non-local interactions and cluster-level dependencies, $T_{\mathrm{prop}}$ denotes the propagation duration of intra-layer state evolution, and $\mathbf{Z}(0)$ denotes the initial node representation at the input of the layer. Thus, the two operators jointly encode local topology and global inter-cluster relationships for self-healing decisions.

\subsubsection{Multi-Order Propagation and Learnable Series Approximation} At the $\ell$-th layer, let $\mathbf{Z}^{(\ell)}$ denote the input node representation. The diffusion coupling weight matrix $\mathbf{C}_h^{(\ell)}$ for the $h$-th attention head is determined by the Query–Key similarity. Specifically, for each query node $i$ and key node $j$, linear projection and $L_2$-normalization are applied. Let $\mathbf{W}_{Q,h}$ and $\mathbf{W}_{K,h}$ denote the learnable query and key projection matrices, respectively, and let $\varepsilon_q>0$ be a numerical smoothing constant introduced to avoid division by zero. The normalized query and key vectors are given by
\begin{equation}
\mathbf{q}_i^{(h)} = \frac{\mathbf{W}_{Q,h}\mathbf{z}_i^{(\ell)}}
                        {\|\mathbf{W}_{Q,h}\mathbf{z}_i^{(\ell)}\|_2+\varepsilon_q}, \quad
\mathbf{k}_j^{(h)} = \frac{\mathbf{W}_{K,h}\mathbf{z}_j^{(\ell)}}
                        {\|\mathbf{W}_{K,h}\mathbf{z}_j^{(\ell)}\|_2+\varepsilon_q},
\end{equation}
and the non-negative similarity score is defined as
\begin{equation}
\eta_{ij}^{(h,\ell)}
=1+\left(\mathbf{q}_i^{(h)}\right)^{\mathrm T}\mathbf{k}_j^{(h)}.
\end{equation}
Since the normalized inner product satisfies $\bigl(\mathbf{q}_i^{(h)}\bigr)^{\!\top} \mathbf{k}_j^{(h)} \in [-1,1]$, we have $\eta_{ij}^{(h,\ell)}\ge 0$. Row normalization is then applied to obtain the  diffusion coupling weights: 
\begin{equation}
c_{ij}^{(h,\ell)}
=\frac{\eta_{ij}^{(h,\ell)}}{\sum_{r\in\mathcal{V}_t^h}\eta_{ir}^{(h,\ell)}},
\quad
\mathbf{C}_h^{(\ell)}
=[c_{ij}^{(h,\ell)}]_{i,j\in\mathcal{V}_t^h},\\
h=1,\ldots,H .
\label{eq:positive_attention_head}
\end{equation}
Here, the summation index $r$ traverses the node set $\mathcal{V}_t^h = \mathcal{R}_t \cup \mathcal{S}_t$ in the hybrid graph. Therefore,  $\mathbf{C}_h^{(\ell)}$ is a non-negative row-stochastic matrix that characterizes non-local diffusion coupling induced by attention head $h$. In contrast to the local structure operator $\mathbf{V}$, which propagates information through explicit hybrid-graph edges, $\mathbf{C}_h^{(\ell)}$ enables adaptive interactions among UAV and supernode pairs according to their learned feature similarity.
When 
$\mathbf{C}_h^{(\ell)}$  and $\mathbf{V}$ are fixed within a single propagation layer, the linearized dynamics in Eq. \eqref{eq:advection_diffusion} admit the closed-form solution
\begin{equation}
\mathbf{Z}(T_{\mathrm{prop}})
=\mathrm{e}^{(\mathbf{C}+\beta\mathbf{V}-\mathbf{I})T_{\mathrm{prop}}}\mathbf{Z}(0).
\label{eq:matrix_exp_solution}
\end{equation}

However, directly computing the matrix exponential for large-scale graphs incurs high computational costs. To avoid this overhead, we construct a learnable finite-order propagation approximation using propagation bases from the zeroth to the $K_{\mathrm{order}}$-th order. Specifically, for the $h$-th attention head, we define the joint advection-diffusion propagation operator as $\mathcal{P}_h^{(\ell)}=\mathbf{C}_h^{(\ell)}+\beta\mathbf{V}$. The propagation features from the order 0 to order $K_{\mathrm{order}}$ are then stacked as 
\begin{equation}
\mathbf{U}_h^{(\ell)}
=\operatorname{Concat}_{k=0}^{K_{\mathrm{order}}}
\left((\mathcal{P}_h^{(\ell)})^{k}\mathbf{Z}^{(\ell)}\right).
\label{eq:series_stack}
\end{equation}
Subsequently, the multi-order propagation features from all attention heads are concatenated along the feature dimension and aggregated through a single learnable linear output mapping:
\begin{equation}
\widetilde{\mathbf{Z}}^{(\ell+1)}
=
\frac{1}{H}
\mathbf{W}_{\mathrm{o}}
\operatorname{Concat}_{h=1}^{H}
\left(\mathbf{U}_h^{(\ell)}\right),
\label{eq:series_forward}
\end{equation}
where $\operatorname{Concat}_{h=1}^{H}(\cdot)$ denotes the concatenation of the multi-order propagation features from all attention heads,  $\mathbf{W}_{\mathrm{o}}$ is a learnable linear projection matrix that maps the concatenated high-dimensional features to the target dimension, and the scaling factor $1/H$ balances the contributions of different attention heads. Accordingly, the proposed propagation can be viewed as a learnable finite-order polynomial approximation of the layer-wise advection-diffusion evolution, where information from different propagation orders and attention heads is adaptively fused through a single linear mapping:
\begin{equation}
\mathbf{Z}(T_{\mathrm{prop}}) \approx \frac{1}{H}\, \mathbf{W}_{\mathrm{o}} \,
\operatorname{Concat}_{h=1}^{H}
\Bigl[ \mathbf{Z}(0),\,
\mathcal{P}_h\mathbf{Z}(0), 
 \ldots,\,
\mathcal{P}_h^{K_{\mathrm{order}}}\mathbf{Z}(0) \Bigr].
\label{eq:learnable_series}
\end{equation}

To further enhance the stability of deep propagation, the output of each layer is processed by a residual connection, batch normalization (BN), ReLU activation, and dropout regularization: 
\begin{align}
\bar{\mathbf{Z}}^{(\ell+1)}
&={\mathbf{Z}^{(\ell)}
+\widetilde{\mathbf{Z}}^{(\ell+1)}},\\
\mathbf{Z}^{(\ell+1)}
&={\operatorname{Dropout}\!\left(
\operatorname{ReLU}\!\left(\operatorname{BN}(\bar{\mathbf{Z}}^{(\ell+1)})\right)
\right)}.
\label{eq:series_residual}
\end{align}

The above formulation establishes a correspondence between the continuous advection-diffusion dynamics and the forward propagation of AdDiff-GT. This correspondence is summarized by the following layer-wise consistency property. 

\subsubsection{Output Mapping and Displacement Prediction} Finally, a linear decoder maps the learned node representations to 3D displacement vectors $\Delta\mathbf{P}=f_{\mathbf{\Theta}}(\mathcal{G}_t^{h})\in\mathbb{R}^{|\mathcal{R}_t|\times3}$,
where $f_{\mathbf{\Theta}}$ denotes the forward mapping of AdDiff-GT parameterized by set $\mathbf{\Theta}$.
During inference, the predicted displacements of the $|\mathcal{R}_t|$ surviving UAV nodes are converted into maneuvering commands, whereas supernodes participate only in feature propagation and cluster-level structural modeling.

\subsection{Recovery Objective and Fast Adaptation}
\subsubsection{Smallest Connected Distance (SCD)}
The SCD quantifies the minimum virtual communication radius required for the surviving graph $\mathcal{G}'_t$ to become fully connected. Let $d \ge R_c$ be a candidate communication radius, and let $\mathcal{G} (d)$  be the topology constructed by replacing $R_c$ 
with $d$ as the connectivity threshold. If BFS identifies a single connected cluster in $\mathcal{G} (d)$, then $d$ is feasible. Accordingly, the SCD is defined as $d_{\mathrm{SCD}}=\min\{d\mid C(\mathbf{P};d)=1,\ d\ge R_c\}$.
\begin{algorithm}[h]\small
\SetKwInput{KwIn}{Inputs}
\SetKwInput{KwOut}{Output}
\KwIn{Surviving node positions $\mathbf{P}_t$, physical communication radius $R_c$}
\KwOut{$d_{\mathrm{SCD}}$}
Compute all pairwise distances and sort unique candidates no smaller than $R_c$\;
\For{\textbf{each} candidate radius $d$ in ascending order}{
    Construct adjacency matrix $\mathbf{A}(d)$ by thresholding distances with $d$\;
    $\mathbf{L}(d)\leftarrow\mathbf{D}(d)-\mathbf{A}(d)$\;
    $(C,b)\leftarrow$ Algorithm~\ref{alg:bfs_connectivity} $(\mathcal{R}_t,\mathbf{L}(d))$\;
    \If{$b=1$}{
    $d_{\mathrm{SCD}}\leftarrow d$\;
        \KwRet $d_{\mathrm{SCD}}$\;
    }
}
\KwRet the largest candidate distance\;
\caption{SCD Calculation}
\label{alg:scd}
\end{algorithm}

Computationally, $d_{\mathrm{SCD}}$ is obtained by sorting the pairwise Euclidean distances among surviving UAVs and progressively activating the corresponding edges until the threshold graph $\mathcal{G} (d)$ becomes connected.
Equivalently, $d_{\mathrm{SCD}}$ is the maximum edge weight of a minimum-bottleneck spanning tree of the complete distance-weighted graph. Thus, it represents the minimum virtual communication range required to bridge the connectivity bottleneck of the current partitioned topology.
Increasing the radius virtually from $R_c$ to $d_{\mathrm{SCD}}$ would restore connectivity without physical movement. We define the topological recovery threshold as $T_m=d_{\mathrm{SCD}}-R_c$ to quantify the current degree of network connectivity impairment a priori. Algorithm~\ref{alg:scd} summarizes the connectivity-based procedure for computing $d_{\mathrm{SCD}}$.

Therefore, $d_{\mathrm{SCD}}$ is not a heuristic tuning metric but a topology-aware quantity determined by the connectivity bottleneck of the partitioned topology. In contrast, $T_m$ quantifies the additional spatial scale that physical UAV movement must compensate for under the fixed communication radius $R_c$.



\subsubsection{Loss Function for AdDiff-GT}
To enable efficient and constraint-aware self-healing, we define a weighted composite loss function that jointly accounts for connectivity restoration, maneuvering cost, motion consistency, inter-cluster coordination, and spatial preservation. Let $\mathcal{W} = \{ w_c, w_m, w_{\max}, w_s, w_a, w_p, w_q, w_l \}$ denote the corresponding loss weights. The overall loss is defined as 
\begin{align}
\mathcal{L}
=&\ w_c\mathcal{L}_{\mathrm{conn}}
+w_m\mathcal{L}_{\mathrm{move}}
+w_{\max}\mathcal{L}_{\max}
+w_s\mathcal{L}_{\mathrm{smooth}} \nonumber\\
&+w_a\mathcal{L}_{\mathrm{align}}
+w_p\mathcal{L}_{\mathrm{pull}}
+w_q\mathcal{L}_{\mathrm{soft}}
+w_l\mathcal{L}_{\mathrm{link}} .
\label{eq:loss}
\end{align}
Let $\widehat{\mathbf{p}}_i=\mathbf{p}_{i,t}+\Delta\mathbf{p}_{i}$ denote the candidate recovery position of surviving UAV $i$. The individual loss terms are defined as follows:

\emph{Connectivity Term}:
\begin{equation}
\mathcal{L}_{\mathrm{conn}}=C(\widehat{\mathbf{P}})-1,
\end{equation}
where the corresponding weight coefficient $w_c$ is set sufficiently large to impose a strong penalty on disconnected candidate topologies.

\emph{Average Displacement Term}:
\begin{equation}
\mathcal{L}_{\mathrm{move}}=\frac{1}{|\mathcal{R}_t|}\sum_{i\in\mathcal{R}_t}\|\Delta\mathbf{p}_i\|_2,
\end{equation}
which penalizes the average UAV displacement and serves as a surrogate for overall maneuvering cost.

\emph{Maximum Displacement Violation Term}:
\begin{equation}
\mathcal{L}_{\max}=
\frac{1}{|\mathcal{R}_t|}\sum_{i\in\mathcal{R}_t}
\max\left(0,\|\Delta\mathbf{p}_i\|_2-T_m\right),
\end{equation}
which penalizes UAV displacements that exceed the topology-aware recovery scale $T_m$.

\emph{Local Smoothness Term}:
\begin{equation}
\mathcal{L}_{\mathrm{smooth}}=
\frac{1}{|\mathcal{E}_{\mathrm{phy}}|}\sum_{(i,j)\in\mathcal{E}_{\mathrm{phy}}}
\|\Delta\mathbf{p}_i-\Delta\mathbf{p}_j\|_2,
\end{equation}
which suppresses abrupt displacement differences among physically adjacent UAVs and promotes locally consistent maneuvering.

\emph{Component Alignment Term}:
\begin{equation}
\mathcal{L}_{\mathrm{align}}=
\frac{1}{C_t}\sum_{c=1}^{C_t}\left\|
\frac{1}{|\mathcal{C}_c|}\sum_{i\in\mathcal{C}_c} \Delta\mathbf{p}_i
-\Delta\mathbf{p}_{s_c}
\right\|_2,
\end{equation}
which encourages the UAV-level displacement fields within each cluster to remain consistent with the corresponding cluster-level recovery direction.

\emph{Inter-Cluster Attraction Term}:
\begin{equation}
\mathcal{L}_{\mathrm{pull}}=
\frac{2}{C_t(C_t-1)}
\sum_{1\le c<k\le C_t}
\|\widehat{\mathbf{s}}_c-\widehat{\mathbf{s}}_k\|_2 ,
\end{equation}
where $\widehat{\mathbf{s}}_c=\frac{1}{|\mathcal{C}_c|}\sum_{i\in\mathcal{C}_c}\widehat{\mathbf{p}}_i$ denotes post-recovery center of cluster $\mathcal{C}_c$. This term provides a continuous optimization signal that encourages disconnected clusters to move toward one another and therefore facilitates cross-cluster connectivity restoration.

\emph{Spatial Dispersion Term}:
To prevent the inter-cluster attraction term from causing excessive aggregation, we introduce a soft minimum distance penalty:
\begin{equation}
\mathcal{L}_{\mathrm{soft}}=
\max\left(0,\ d_{\mathrm{tar}} +
\tau_{\mathrm{soft}}\log\sum_{c<k}\mathrm{e}^{\left(-\frac{\|\widehat{\mathbf{s}}_c-\widehat{\mathbf{s}}_k\|_2}{\tau_{\mathrm{soft}}}\right)}\right),
\end{equation}
where $d_{\mathrm{tar}}$ is adaptively determined according to the current connected-cluster ratio,  and $\tau_{\mathrm{soft}}>0$ is the soft-min temperature. This temperature controls the sensitivity of the soft minimum distance to the closest cluster pair.

\emph{Minimum Link Spacing Term}: To regulate the spatial separation of UAVs after recovery, we define
\begin{equation}
\mathcal{L}_{\mathrm{link}}=
\max\left(0,\ d_{\mathrm{link}}-\min_{(i,j)\in\mathcal{E}'_t(\widehat{\mathbf{P}})}
\|\widehat{\mathbf{p}}_i-\widehat{\mathbf{p}}_j\|_2\right),
\end{equation}
which penalizes candidate communicating UAV pairs whose spacing falls below the threshold $d_{\mathrm{link}}$.

The above loss terms provide differentiable objectives or surrogate penalties for the objectives and constraints in Eq. \eqref{eq:problem}. Specifically, $\mathcal{L}_{\mathrm{conn}}$ and $\mathcal{L}_{\mathrm{pull}}$ promote connectivity restoration, while $\mathcal{L}_{\mathrm{align}}$ couples cluster-level recovery with UAV-level displacement. Together, these terms facilitate satisfaction of the connectivity constraint in Eq.~\eqref{eq:problem_conn}. Meanwhile, 
$\mathcal{L}_{\mathrm{smooth}}$, $\mathcal{L}_{\mathrm{move}}$, and $\mathcal{L}_{\max}$ regulate motion consistency and maneuvering cost. Finally, $\mathcal{L}_{\mathrm{soft}}$ and $\mathcal{L}_{\mathrm{link}}$ provide soft surrogates for the spatial-dispersion objective and safety-spacing constraint in Eq.~\eqref{eq:safe-dis}. Thus, the composite loss jointly balances connectivity restoration, maneuvering efficiency, motion consistency, and spatial preservation.
 
During online model adaptation, a greedy caching strategy is employed to retain promising candidate solutions. When the candidate topology remains disconnected, updates prioritize reducing $\mathcal{L}_{\mathrm{conn}}$ and the overall loss. Once connectivity is achieved, a candidate is accepted only if it improves the maximum displacement or minimum link spacing without causing a significant increase in the overall recovery cost. Early stopping is triggered when connectivity remains stable, the relevant penalty terms approach zero, or no effective improvement is observed over consecutive adaptation steps.


\subsubsection{Meta-Learning for Fast Adaptation}
To enable rapid adaptation to unseen
UED induced damage topologies, AdDiff-GT adopts a  first-order meta-initialization strategy inspired by model-agnostic meta-learning (MAML)~\cite{finn2017model}.  Rather than adapting from random initialization or maintaining topology-specific models, we learn a scale-specific 
meta-prior $\mathbf{\Theta}_n^\star$ 
for each surviving-node scale $n\in\{2,\ldots,N\}$,  which is retrieved according to the inferred network scale. The first-order strategy avoids Hessian computation and constructs the meta-update from the parameter deviation between query-set correction and support-set adaptation, thereby reducing offline optimization overhead for large-scale graph networks. 

\textbf{Offline Meta-Learning}:
For the $n$-th meta-task, unconnected support samples $\mathcal{S}_n$ and query samples $\mathcal{Q}_n$ are randomly sampled, where initially connected samples are excluded using BFS. Each sample is converted into a node-supernode hybrid graph. Given the meta-parameters $\mathbf{\Theta}_n^{(e-1)}$ at epoch $e$, support-set adaptation is first performed by initializing $\widetilde{\mathbf{\Theta}}_n^{(e,0)}=\mathbf{\Theta}_n^{(e-1)}$ and applying $M$ gradient steps:
\begin{equation}
\widetilde{\mathbf{\Theta}}_n^{(e,m)}
=\widetilde{\mathbf{\Theta}}_n^{(e,m-1)}
-\alpha_e\nabla_{\mathbf{\Theta}}\mathcal{L}
\left(\widetilde{\mathbf{\Theta}}_n^{(e,m-1)};\mathcal{S}_n^{(e)}\right),
\end{equation}
where $m=1,\ldots,M$ and $\alpha_e$ denotes the support-set adaptation learning rate, which is decayed over training epochs to improve adaptation stability. Starting from $\widetilde{\mathbf{\Theta}}_n^{(e,M)}$, query-set correction is subsequently performed for $Q$ steps:
\begin{equation}
\bar{\mathbf{\Theta}}_n^{(e,q)}
=\bar{\mathbf{\Theta}}_n^{(e,q-1)}
-\alpha_e\nabla_{\mathbf{\Theta}}\mathcal{L}
\left(\bar{\mathbf{\Theta}}_n^{(e,q-1)};\mathcal{Q}_n^{(e)}\right),
\end{equation}
where $q=1,\ldots,Q$, $\bar{\mathbf{\Theta}}_n^{(e,0)}=\widetilde{\mathbf{\Theta}}_n^{(e,M)}$.

The meta-update direction is defined as $\Delta\mathbf{\Theta}_n^{(e)}
=\bar{\mathbf{\Theta}}_n^{(e,Q)}
-\widetilde{\mathbf{\Theta}}_n^{(e,M)}$, the meta-parameters are updated by
\begin{equation}
\mathbf{\Theta}_n^{(e)}
=\mathbf{\Theta}_n^{(e-1)}
+\gamma_e\Delta\mathbf{\Theta}_n^{(e)},
\end{equation}
where $\gamma_e$ denotes the meta-update  rate. After offline training, the resulting scale-specific meta-parameter library $\{\mathbf{\Theta}_n^\star\}$ is stored for online recovery. The complete procedure is summarized in Algorithm \ref{alg:meta}.

\begin{algorithm}[h]\small
\SetKwInput{KwIn}{Inputs}
\SetKwInput{KwOut}{Output}
\KwIn{Surviving scale set $\{2,\ldots,N\}$, communication radius $R_c$, training epochs $E$}
\KwOut{Meta-parameter library $\{\mathbf{\Theta}_n^\star\}$}
\For{$n=2,\ldots,N$}{
    Initialize meta parameters $\mathbf{\Theta}_n^{(0)}$\;
    \For{$e=1,\ldots,E$}{
        Load $\mathbf{\Theta}_n^{(e-1)}$ into AdDiff-GT\;
        Sample disconnected support scenario $\mathcal{S}_n$\;
        Adapt on $\mathcal{S}_n$ to obtain $\widetilde{\mathbf{\Theta}}_n^{(e,M)}$\;
        Sample disconnected query scenario $\mathcal{Q}_n$\;
        Evaluate/correct from $\widetilde{\mathbf{\Theta}}_n^{(e,M)}$ on $\mathcal{Q}_n$ to obtain $\bar{\mathbf{\Theta}}_n^{(e,Q)}$\;
        $\Delta\mathbf{\Theta}_n^{(e)}\leftarrow\bar{\mathbf{\Theta}}_n^{(e,Q)}-\widetilde{\mathbf{\Theta}}_n^{(e,M)}$\;
        $\mathbf{\Theta}_n^{(e)}\leftarrow\mathbf{\Theta}_n^{(e-1)}+\gamma_e\Delta\mathbf{\Theta}_n^{(e)}$\;
    }
    Save $\mathbf{\Theta}_n^\star\leftarrow\mathbf{\Theta}_n^{(E)}$\;
}
\caption{Offline Meta-Learning of AdDiff-GT}
\label{alg:meta}
\end{algorithm}

\textbf{Online Rapid Adaptation}:
Given the inferred surviving-node scale $n_t=|\mathcal{R}_t|$, the GCS retrieves the corresponding meta-initialization $\mathbf{\Theta}_{n_t}^{\star}$ and performs a limited gradient update on the current hybrid graph: 
\begin{equation}
\mathbf{\Theta}^{+}\leftarrow
\mathbf{\Theta}_{n_t}^{\star}
-\eta_{\mathrm{ol}}\nabla_{\mathbf{\Theta}}\mathcal{L}\left(\mathbf{\Theta}_{n_t}^{\star};\mathcal{G}_t^{h}\right),
\end{equation}
where $\eta_{\mathrm{ol}}$ denotes the online adaptation learning rate. The adapted parameters $\mathbf{\Theta}^{+}$ are subsequently broadcast to the surviving UAVs, which perform forward inference locally without backpropagation. This design confines gradient-based adaptation to the GCS while enabling lightweight and low-latency execution on airborne nodes.

\subsection{Theoretical Analysis}\label{sec: Theoretical-Analysis}
This section establishes three theoretical properties of AdDiff-GT: online tuning stability, robustness to topological perturbations, and finite-step connectivity recovery.
\begin{proposition}[Convergence of Online Self-Healing Tuning]
\label{prop:convergence}
Suppose that the differentiable surrogate loss $\mathcal{L}$ used for online tuning is lower-bounded over the bounded flight region and has Lipschitz-continuous gradients with respect to the UAV positions. Then, under a standard step-size condition, gradient-based online tuning generates a monotonically non-increasing loss sequence and converges to a first-order stationary point of the surrogate objective, corresponding to a locally stable self-healing state.
\end{proposition}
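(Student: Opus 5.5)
The plan is to treat online tuning as plain gradient descent on a smooth, lower-bounded function and apply the descent lemma. Let $\mathbf{x}$ denote the decision variable that the gradient step updates. This is either the stacked candidate positions $\widehat{\mathbf{P}}=\mathbf{P}_t+\Delta\mathbf{P}$ or, through the decoder, the parameters $\mathbf{\Theta}$. In the parameter case, Lipschitz gradients in $\mathbf{\Theta}$ follow by composing the position-Lipschitz assumption with the Lipschitz and bounded Jacobian of the finite-depth forward map $f_{\mathbf{\Theta}}$ on the bounded region. The iteration is $\mathbf{x}^{k+1}=\mathbf{x}^k-\eta\nabla\mathcal{L}(\mathbf{x}^k)$. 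By hypothesis $\nabla\mathcal{L}$ is $L$-Lipschitz and $\mathcal{L}\ge\mathcal{L}_{\inf}>-\infty$ on the bounded flight region $\Omega$. The ``standard step-size condition'' is taken to be $0<\eta\le 1/L$, or more generally $\eta\in(0,2/L)$.

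The first step is the descent lemma. For any $\mathbf{x},\mathbf{y}$,
\[
\mathcal{L}(\mathbf{y})\le\mathcal{L}(\mathbf{x})+\nabla\mathcal{L}(\mathbf{x})^{\mathrm T}(\mathbf{y}-\mathbf{x})+\tfrac{L}{2}\|\mathbf{y}-\mathbf{x}\|_2^2 .
\]
Substituting the update gives
\[
\mathcal{L}(\mathbf{x}^{k+1})\le\mathcal{L}(\mathbf{x}^k)-\eta\bigl(1-\tfrac{L\eta}{2}\bigr)\|\nabla\mathcal{L}(\mathbf{x}^k)\|_2^2 ,
\]
which already yields the monotone non-increase claimed in the proposition. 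The greedy caching and early-stopping rules only accept candidates that do not increase the loss, so they preserve this monotonicity.

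The second step is a telescoping argument. Summing the inequality over $k=0,\dots,K-1$ and using the lower bound gives
\[
\sum_{k}\|\nabla\mathcal{L}(\mathbf{x}^k)\|_2^2\le\frac{\mathcal{L}(\mathbf{x}^0)-\mathcal{L}_{\inf}}{\eta(1-L\eta/2)}<\infty .
\]
Hence $\nabla\mathcal{L}(\mathbf{x}^k)\to 0$, and $\min_{k<K}\|\nabla\mathcal{L}(\mathbf{x}^k)\|_2^2=\mathcal{O}(1/K)$. Since $\Omega$ is compact, the iterates have accumulation points. Continuity of $\nabla\mathcal{L}$ then implies every accumulation point $\mathbf{x}^\star$ satisfies $\nabla\mathcal{L}(\mathbf{x}^\star)=0$, i.e., it is a first-order stationary point. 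Because the loss sequence is monotone and bounded below, it converges to $\mathcal{L}(\mathbf{x}^\star)$. The interpretation as a ``locally stable self-healing state'' is that no first-order descent direction remains for any UAV's displacement.

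The main obstacle is the mismatch between the hypotheses and the actual loss. $\mathcal{L}_{\mathrm{conn}}=C(\widehat{\mathbf{P}})-1$ is piecewise constant, and the $\max$, $\min$, hinge and norm terms are nonsmooth. The proposition is therefore about the differentiable surrogate only, and this needs to be stated explicitly. On that surrogate, the integer connectivity term contributes a zero gradient almost everywhere, and the hinge, max and min terms are replaced by softplus or log-sum-exp smoothings. A second concern is the flight-region constraint $\widehat{\mathbf{p}}_i\in\Omega$. To handle it I would use projected gradient descent onto the convex hull of $\Omega$ and rely on nonexpansiveness of the projection. The descent inequality then holds with the gradient mapping in place of $\nabla\mathcal{L}$, and the conclusion becomes stationarity in the projected (KKT) sense.
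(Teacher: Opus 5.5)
Your argument is the paper's proof. The descent lemma with $0<\eta<2/L$ gives $\mathcal{L}(\mathbf{P}_{k+1})\le\mathcal{L}(\mathbf{P}_k)-\eta(1-L\eta/2)\|\nabla\mathcal{L}(\mathbf{P}_k)\|_2^2$, telescoping against the lower bound forces $\nabla\mathcal{L}(\mathbf{P}_k)\to 0$, and compactness of $\Omega^{|\mathcal{R}_t|}$ with continuity of $\nabla\mathcal{L}$ makes every cluster point stationary. Your gradient-mapping treatment of the projected case is more careful than the paper's, which only mentions a projected form and still concludes $\nabla\mathcal{L}(\mathbf{P}^\star)=0$.

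Keep the argument in position space, as the statement and the paper do. Your parenthetical extension to the parameters $\mathbf{\Theta}$ does not go through, for two reasons: the compactness step bounds the positions but not the $\mathbf{\Theta}$-iterates, and the ReLU layers make the Jacobian of $f_{\mathbf{\Theta}}$ discontinuous in $\mathbf{\Theta}$ rather than Lipschitz.
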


This proposition characterizes the stability of the GCS-side online tuning. Since the loss terms are non-negative, $\mathcal{L} \ge 0$ and is therefore bounded below during online optimization. Under the stated smoothness and step-size conditions, each update yields sufficient decrease in the surrogate objective, preventing unbounded oscillations during online tuning. 


\begin{proposition}[Representation Boundedness under Topological Perturbations]
\label{prop:boundedness}
Let a topological perturbation induce a variation $\Delta\mathbf{V}$ in the normalized adjacency operator and a corresponding variation $\Delta\mathbf{C}$ in the global attention coupling matrix. Suppose that, over the trained bounded input domain, there exist constants $a_1,a_2>0$ such that
\begin{equation}
\label{compensation condition}
\|\beta\Delta\mathbf{V}+\Delta\mathbf{C}\|_2
\le a_1\|\Delta\mathbf{V}\|_2
+a_2\|\Delta\mathbf{V}\|_2^2.
\end{equation}
Assume further that the propagation spectral radius, the number of propagation layers, and the Lipschitz constants of encoder and decoder are bounded. Then, the perturbation of the output representation satisfies
\begin{equation}
\|\Delta\mathbf{Z}_{\mathrm{out}}\|_2
\le K_1\|\Delta\mathbf{V}\|_2
+K_2\|\Delta\mathbf{V}\|_2^2,
\end{equation}
where $K_1,K_2>0$ are constants determined by the propagation depth, spectral bounds, and network Lipschitz constants.   
\end{proposition}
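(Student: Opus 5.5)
The plan is a layer-by-layer perturbation argument on the finite-order propagation in Eq.~\eqref{eq:learnable_series}, followed by composition through the $L$ layers and the decoder. At layer $\ell$, consider a single head, so that $\mathcal{P}=\mathbf{C}+\beta\mathbf{V}$. The perturbed operator is $\mathcal{P}'=\mathcal{P}+\Delta\mathcal{P}$ with $\Delta\mathcal{P}=\Delta\mathbf{C}+\beta\Delta\mathbf{V}$. By the compensation condition \eqref{compensation condition}, $\|\Delta\mathcal{P}\|_2\le a_1\delta+a_2\delta^2$, where $\delta:=\|\Delta\mathbf{V}\|_2$. Let $\rho$ be a uniform bound on the propagation norms $\|\mathcal{P}\|_2$ and $\|\mathcal{P}'\|_2$; this is the assumed spectral bound, made usable because $\mathbf{C}$ is row-stochastic and $\mathbf{V}$ is a normalized adjacency. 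Then I would use the telescoping identity
\begin{equation*}
(\mathcal{P}')^{k}-\mathcal{P}^{k}=\sum_{r=0}^{k-1}(\mathcal{P}')^{r}\,\Delta\mathcal{P}\,\mathcal{P}^{k-1-r}.
\end{equation*}
It gives $\|(\mathcal{P}')^{k}-\mathcal{P}^{k}\|_2\le k\rho^{k-1}\|\Delta\mathcal{P}\|_2$, and summing over $k\le K_{\mathrm{order}}$ bounds the change of $\mathbf{U}_h^{(\ell)}$ for fixed input.

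Next I would handle the dependence on the input. Write $\widetilde{\mathbf{Z}}'-\widetilde{\mathbf{Z}}$ as two pieces: the operator change at fixed $\mathbf{Z}^{(\ell)}$, plus the old operator applied to the input change $\Delta\mathbf{Z}^{(\ell)}$. Using $\|\mathbf{W}_{\mathrm{o}}\|_2$, the factor $1/H$ over $H$ heads, the residual connection, and the fact that BN (in inference mode, an affine map), ReLU and dropout (inference) are Lipschitz with constant $L_\sigma$, I expect a recursion of the form
\begin{equation*}
\|\Delta\mathbf{Z}^{(\ell+1)}\|_2\le \Gamma\,\|\Delta\mathbf{Z}^{(\ell)}\|_2+\Lambda\,B\,(a_1\delta+a_2\delta^2).
\end{equation*}
Here $\Gamma=L_\sigma\bigl(1+\|\mathbf{W}_{\mathrm{o}}\|_2\sum_{k}\rho^k\bigr)$, $\Lambda=L_\sigma\|\mathbf{W}_{\mathrm{o}}\|_2\sum_k k\rho^{k-1}$, and $B$ bounds $\|\mathbf{Z}^{(\ell)}\|_2$ on the bounded trained input domain.

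The input $\mathbf{Z}^{(0)}$ is the encoder output. I would take the node features to be fixed, so $\Delta\mathbf{Z}^{(0)}=0$. If feature changes are also induced by the perturbation, they would be absorbed via the encoder Lipschitz constant as an extra $O(\delta)$ term. Unrolling the recursion over $L$ layers gives $\|\Delta\mathbf{Z}^{(L)}\|_2\le \Lambda B\sum_{\ell<L}\Gamma^{\ell}(a_1\delta+a_2\delta^2)$. Multiplying by the decoder Lipschitz constant $L_{\mathrm{dec}}$ then yields the claim with
\begin{equation*}
K_i=L_{\mathrm{dec}}\Lambda B a_i\textstyle\sum_{\ell<L}\Gamma^\ell,\qquad i=1,2.
\end{equation*}
Both constants depend only on depth, spectral bounds and Lipschitz constants, as stated.

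The main obstacle is the attention coupling. Since $\mathbf{C}^{(\ell)}$ is computed from $\mathbf{Z}^{(\ell)}$, its change also depends on $\Delta\mathbf{Z}^{(\ell)}$ and not only on $\Delta\mathbf{V}$. The compensation condition is stated for $\Delta\mathbf{C}$ as a whole, so the cleanest reading is to interpret \eqref{compensation condition} as holding for the total variation of $\mathcal{P}$ at each layer over the trained domain, and I would adopt that reading. Otherwise I would first show that $\mathbf{Z}\mapsto\mathbf{C}(\mathbf{Z})$ is Lipschitz on the bounded domain: the normalized queries and keys are Lipschitz thanks to $\varepsilon_q>0$, and row normalization is Lipschitz because every row sum satisfies $\sum_r\eta_{ir}\ge0$, which needs a positive lower bound. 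A strictly positive lower bound, such as $\eta_{ii}$ bounded away from $0$ or a margin $\varepsilon$, would have to be assumed. That Lipschitz constant would then be folded into $\Gamma$. A secondary point is the uniform bound $\rho$ for the perturbed operator, which I would obtain as $\rho+\|\Delta\mathcal{P}\|_2$ on the bounded domain.
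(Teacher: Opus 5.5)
Your argument is correct, but it follows a different route from the paper. The paper never analyzes the finite-order network. It works with the idealized single-layer continuous dynamics $\partial_t\mathbf{Z}=-(\mathbf{I}-\beta\mathbf{V}-\mathbf{C})\mathbf{Z}$ with $\mathbf{C}$ frozen, and writes the output as $\mathrm{e}^{-\mathbf{L}T_{\mathrm{prop}}}\mathbf{Z}(0)$. It then applies the Duhamel-type estimate $\|\mathrm{e}^{\mathbf{X}+\mathbf{E}}-\mathrm{e}^{\mathbf{X}}\|\le\|\mathbf{E}\|M^2\mathrm{e}^{\omega(\|\mathbf{X}\|+\|\mathbf{E}\|)}$ with $\mathbf{E}=T_{\mathrm{prop}}(\beta\Delta\mathbf{V}+\Delta\mathbf{C})$. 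The factor $\mathrm{e}^{\omega T_{\mathrm{prop}}(\|\mathbf{L}\|_2+\|\mathbf{L}'-\mathbf{L}\|_2)}$ is simply declared bounded by a constant $B_E$, and the compensation condition is inserted. Everything outside that one propagation step (encoder, residual/BN/ReLU, decoder, and implicitly $\|\mathbf{Z}(0)\|_2$) goes into a single Lipschitz constant $L_\Phi$, giving $K_i=L_\Phi B_E T_{\mathrm{prop}}a_i$.

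Your telescoping identity for $(\mathcal{P}')^k-\mathcal{P}^k$ is the discrete counterpart of that integral formula. The difference is that you apply it to the propagation actually implemented in Eqs.~\eqref{eq:series_stack}--\eqref{eq:series_forward}, and you carry the perturbation explicitly through the heads, the residual connection, the nonlinearities and all $L$ layers. The paper's route buys a direct link to the continuous advection--diffusion model, and it reuses that template for its purely-local-diffusion contrast. Yours buys three things:
\begin{itemize}
\item a bound for the network as it is run;
\item constants growing polynomially in $\rho$ and geometrically in $L$, instead of hidden in an exponential with $T_{\mathrm{prop}}=3000$;
\item an honest treatment of the dependence of $\mathbf{C}^{(\ell)}$ on $\mathbf{Z}^{(\ell)}$, which the paper sidesteps by fixing $\mathbf{C}$.
\end{itemize}

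Three small remarks.
\begin{itemize}
\item In your splitting, the exact identity is $(\mathcal{P}')^k\mathbf{Z}'-\mathcal{P}^k\mathbf{Z}=((\mathcal{P}')^k-\mathcal{P}^k)\mathbf{Z}+(\mathcal{P}')^k\Delta\mathbf{Z}$. So it is the perturbed operator that acts on the input change. This is harmless, since $\rho$ bounds both operators.
\item The positivity assumption in your fallback route is automatic. If $\|\mathbf{W}_{Q,h}\mathbf{z}\|_2,\|\mathbf{W}_{K,h}\mathbf{z}\|_2\le R$ on the bounded domain, then $\|\mathbf{q}_i^{(h)}\|_2,\|\mathbf{k}_j^{(h)}\|_2\le R/(R+\varepsilon_q)<1$. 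Hence $\eta_{ij}^{(h,\ell)}\ge 1-R^2/(R+\varepsilon_q)^2>0$ uniformly.
\item $\mathbf{C}^{(\ell)}$ depends on the topology only through $\mathbf{Z}^{(\ell)}$, and you take $\Delta\mathbf{Z}^{(0)}=0$. So in the fallback route the attention change is absorbed into $\Gamma$, and the per-layer forcing term is just $\Lambda B\beta\delta$. You therefore get a purely linear bound $K_1\delta$, which trivially implies the stated form, without using the compensation condition at all.
\end{itemize}
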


This proposition characterizes the sensitivity of AdDiff-GT representations to topological perturbations rather than providing a generalization error bound. Local adjacency changes alter message passing paths, while global attention and supernode channels provide additional cross-cluster coupling. Under the stated compensation condition, representation perturbation is bounded by a second-order function of the adjacency perturbation. 



\begin{corollary}[Finite-Step Connectivity Recovery]
\label{cor:finite_step}
If the initial post-damage topology consists of $C(0)=m$ connected clusters and, for every disconnected state, the proposed update produces a nonzero decrease in the inter-cluster attraction potential that persists until a cluster merge occurs, then there exists a finite number of recovery steps  $T_c<\infty$ such that $C(T_c)=1$.
\end{corollary}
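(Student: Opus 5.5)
The argument has two parts. First, between consecutive merge events the attraction potential $\mathcal{L}_{\mathrm{pull}}$ decreases by a definite amount at every step while staying nonnegative, so each inter-merge phase lasts only finitely many steps. Second, the cluster count $C(k)$ is a nonincreasing integer starting at $m$, so at most $m-1$ merges are needed to reach $C=1$. Summing the finitely many finite phases gives $T_c<\infty$.

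\textbf{Setting up the potential.} Index recovery steps by $k=0,1,2,\ldots$ and write $\Phi(k)=\mathcal{L}_{\mathrm{pull}}(k)$ for the inter-cluster attraction potential evaluated at the cluster partition current at step $k$. Since $\Phi(k)$ is an average of Euclidean distances between cluster centroids, $\Phi(k)\ge 0$. It is also bounded above by $\operatorname{diam}(\Omega)$, because all positions lie in the bounded flight region $\Omega$ by constraint~\eqref{eq:fly-dis}.

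\textbf{Making the decrease uniform.} I will read the hypothesis ``nonzero decrease that persists until a merge'' as follows: while the partition is unchanged and $C(k)>1$, there is a constant $\epsilon_j>0$ for the current phase $j$ such that $\Phi(k)-\Phi(k+1)\ge \epsilon_j$. This is the step I expect to be the main obstacle. If the decrease were merely positive at each step, the decrements could be summable and no merge would ever be forced.

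\begin{itemize}
\item My first attempt is to argue that this uniform reading is what the hypothesis means by ``persists''.
\item Failing that, I would obtain $\epsilon_j$ from Proposition~\ref{prop:convergence}: under the Lipschitz-gradient and step-size conditions, each update achieves a sufficient decrease of order $\eta\|\nabla\mathcal{L}\|^2$.
\item The gradient can be bounded away from zero while clusters stay separated. Two distinct centroids are at distance at least $R_c$ when no merge has occurred, and the gradient of a sum of pairwise distances has norm bounded below by a positive constant in that case.
\end{itemize}

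\textbf{Finiteness of each phase.} Given $\epsilon_j$, phase $j$ can last at most $\lceil \operatorname{diam}(\Omega)/\epsilon_j\rceil$ steps. Otherwise $\Phi$ would drop below zero, which contradicts $\Phi\ge 0$. So within that many steps some pair of clusters acquires an A2A link satisfying~\eqref{eq:a2a_distance}, and $C$ strictly decreases.

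\textbf{Monotonicity of the cluster count.} I must also check that connectivity once gained is not lost, i.e.\ that $C(k+1)\le C(k)$. For this I will appeal to the local smoothness term $\mathcal{L}_{\mathrm{smooth}}$ and the alignment term $\mathcal{L}_{\mathrm{align}}$, which keep the displacements of intra-cluster neighbours coherent. Alternatively, I will invoke the greedy caching rule, which only accepts candidates that do not increase $C$. I will state this non-splitting property explicitly as part of the standing assumptions.

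\textbf{Conclusion.} With non-splitting in hand, $C$ passes through at most $m-1$ phases. Therefore
\[
T_c\le \sum_{j=1}^{m-1}\left\lceil \operatorname{diam}(\Omega)/\epsilon_j\right\rceil<\infty,
\]
and $C(T_c)=1$.
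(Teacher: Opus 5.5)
Your main argument is correct under the uniform-decrement reading. You are also right that a merely positive decrease at each step would not suffice. The paper resolves this the same way, by simply assuming a fixed $\delta>0$ in Eq.~\eqref{eq:delta_merge}.

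The outer structure is also the paper's: non-increasing $C$, at most $m-1$ merges, and a finite sum of phase lengths. The per-phase mechanism, however, differs.
\begin{itemize}
\item The paper tracks the shortest gap $d_{ab}$ of one separated pair and forces the merge geometrically. Once $d_{ab}\le R_c$, criterion~\eqref{eq:a2a_distance} yields a link, giving the trajectory-dependent bound $\sum_r\lceil (d_r(0)-R_c)/\delta\rceil$.
\item You keep the attraction potential $\mathcal{L}_{\mathrm{pull}}$ that the corollary actually names; the paper's proof works with $d_{ab}$ instead. You force the merge by contradiction: a quantity confined to $[0,\operatorname{diam}(\Omega)]$ cannot drop by $\epsilon_j$ indefinitely, so the ``until a merge occurs'' clause must fire.
\end{itemize}
This is the appropriate argument for $\mathcal{L}_{\mathrm{pull}}$, because no threshold on centroid distances implies a link. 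Centroids of disconnected clusters can even coincide, e.g.\ a connected ring of radius larger than $R_c$ around a singleton cluster at its center. Your route also gives an a priori bound independent of the trajectory. The price is that the merge comes purely from the persistence hypothesis rather than from a geometric threshold crossing.

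Two small fixes are needed. First, since $\Phi$ may reach exactly $0$, the phase bound should be $\lfloor\operatorname{diam}(\Omega)/\epsilon_j\rfloor+1$. Second, state non-splitting as ``each new partition coarsens the previous one'' rather than $C(k+1)\le C(k)$. Otherwise a UAV switching clusters ends your phase and resets $\Phi$ without lowering $C$. Making this assumption explicit is more careful than the paper, which infers monotonicity of $C$ from a caching rule that only \emph{prioritizes} reducing $C$ while $C>1$.

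Drop the fallback, however: it cannot supply $\epsilon_j$.
\begin{itemize}
\item Proposition~\ref{prop:convergence} gives sufficient decrease of the total loss $\mathcal{L}$, not of $\mathcal{L}_{\mathrm{pull}}$.
\item Its own conclusion, $\sum_k\|\nabla\mathcal{L}(\mathbf{P}_k)\|_2^2<\infty$, says the decrements are summable. That is exactly the failure mode you wanted to exclude.
\item A positive lower bound on $\|\nabla\mathcal{L}\|_2$ over disconnected states fails in general. $\mathcal{L}_{\mathrm{conn}}$ is piecewise constant and contributes no gradient, while $\mathcal{L}_{\mathrm{move}}$ and $\mathcal{L}_{\mathrm{soft}}$ can balance the attraction. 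So disconnected stationary points exist, and the iteration of Proposition~\ref{prop:convergence} may converge to one.
\item The supporting claim that separated clusters have centroids at least $R_c$ apart is refuted by the ring example above.
\end{itemize}
Likewise, $\mathcal{L}_{\mathrm{smooth}}$ and $\mathcal{L}_{\mathrm{align}}$ are soft penalties and cannot guarantee that intra-cluster links survive. Both the uniform decrement and non-splitting must therefore remain assumptions, as they effectively are in the paper.
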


This corollary characterizes the finite-step connectivity-recovery property induced by inter-cluster attraction and supernode coupling. Starting from $m<\infty$ connected clusters, each effective merge reduces the number of  cluster by at least one. Therefore, a finite number of merges is  sufficient to reach a connected topology.

The proofs of the above two propositions and the corollary are provided in Appendix~\ref{app:theory}.


\section{Experimental Evaluation and Analysis}
\label{sec:simulation}
\subsection{Experiment Settings}
\subsubsection{Scenario Configuration}
To evaluate AdDiff-GT, we consider two partitioned USNET settings with different scales.

\textbf{Setting I (Small-to-Medium Scale)}: 
The simulation area is  $1000\times1000\times100 \, \mathrm{m}^3$, with an initial healthy network of $N=200$ UAVs. The number of surviving UAVs $n_t$ is uniformly sampled from $[2,200]$, yielding 3,000 partitioned scenarios with a maximum of $T_{\max}=1000$ recovery steps.

\textbf{Setting II (Large-Scale)}:
The simulation area is enlarged to $2000\times2000\times200\,   \mathrm{m}^3$, with $N=400$ UAVs. The number of surviving UAVs $n_t$ is uniformly sampled from $[2,400]$, generating  5,000  scenarios with  $T_{\max}=3000$.

In both settings, the A2A  communication radius and UAV speed are set to $R_c=120\, \mathrm{m}$ and $1\,\mathrm{m/s}$, respectively. BFS-based connectivity detection is used to exclude initially connected topologies, ensuring that all methods are evaluated on the same partitioned scenarios without additional test-time priors.

\subsubsection{Baseline Methods}
We compare AdDiff-GT with three representative classes of topology recovery methods: geometric methods (HERO~\cite{mi2011hero} and CEN~\cite{zhao2020efficient}), swarm-intelligence-based heuristic method (SIDR~\cite{chen2020sidr}), and learning-based methods (GCN-2017~\cite{kipf2017semisupervised}, CR-MGC~\cite{mou2022resilient}, and GDR-TS~\cite{wang2025uav}). These baselines cover deterministic geometric reconstruction, heuristic swarm-based recovery, and graph-based learning, respectively.

\subsubsection{Implementation Details}
The AdDiff-GT adopts 13D node features, 128-dimensional hidden representations, two advection-diffusion propagation layers, four attention heads, and a propagation order of
 $K_{\mathrm{order}}=2$.
 The decoder outputs 3D displacement vectors for surviving UAVs. The hybrid graph uses  $w^{\mathrm{phy}}_{ij}=1$ for physical edges and
 $w^{\mathrm{ns}}_{is_c} = 0.8$ for node-supernode association edges.  The advection coefficient, dropout rate, and propagation duration are set to $\beta=0.6$, $0.1$, and $T_{\mathrm{prop}}=3000$, respectively. Online tuning employs Adam with learning rate $\eta_{\mathrm{ol}}=10^{-3}$ and weight decay $10^{-5}$. The loss weights are $\mathcal{W} = \{ 10^4,50,220,2,10,500,80,500 \}$, with $d_{\mathrm{link}}=0.75R_c$. For meta-learning, $\alpha_e$ follows a three-stage schedule of $10^{-3}$, $10^{-4}$, and $10^{-5}$ over epochs 1–100, 101–250, and 251–500, respectively, with $\gamma_e=1$, $M=Q=1$, and $E=500$.
  To isolate the effect of meta-learning, AdDiff-GT performs online adaptation without meta-initialization, whereas AdDiff-GT+Meta uses scale-specific meta-parameters $\mathbf{\Theta}_{|\mathcal{R}_t|}^{\star}$ followed by limited online adaptation. Experiments are conducted on an AMD EPYC 7742 64-Core CPU and an NVIDIA RTX 4090 GPU. 
  

\subsubsection{Evaluation Metrics}
We evaluate self-healing performance using the following five metrics:

 \textbf{Recovery Steps~($T_{\mathrm{rec}}$)}: The number of recovery steps required to restore full connectivity ($C_t=1$). For unsuccessful cases, $T_{\mathrm{rec}} = T_{\max}$. We report both the average over all scenarios and the average over successfully recovered scenarios.
 
 \textbf{Online Planning Time~($T_{\mathrm{plan}}$)}: The wall-clock time from receiving the partitioned topology to generating recovery decisions, excluding physical maneuver execution.

\textbf{Communication Cost~($C_\mathrm{cost}$)}: Measured by interaction rounds, transmitted bits, and receiver-side message volume under broadcast and multicast.

\textbf{Spatial Dispersion~($D_{\mathrm{disp}}$)}: The spatial distribution of recovered UAVs is quantified using normalized grid entropy:
 \begin{equation}
    D_{\mathrm{disp}}=
        \frac{
            -\sum_{g=1}^{N_{\mathrm{g}}}p_g\log_2(p_g+\epsilon_{\mathrm{d}})
        }{
            \log_2 N_{\mathrm{g}}
        }, \quad\epsilon_{\mathrm{d}}>0,
\end{equation}
where $p_g$ is the proportion of UAVs in grid $g$, and $N_{\mathrm{g}}$ is the total number of grids. A higher value indicates better spatial dispersion after recovery.

\textbf{Scenario Win Rates (Best Connect. / Disp. Ratio)}: To quantify scenario-level comparative performance, we define the recovery-efficiency win rate $\eta_{\mathrm{con}}$ and spatial-dispersion win rate $\eta_{\mathrm{dis}}$. Let $\mathcal{S}_{\mathrm{test}}$ denote the test scenario set and let $\mathcal{A}$ denote the set of comparative algorithms. For each $a\in\mathcal{A}$ and $s\in \mathcal{S}_{\mathrm{test}}$, let $T_{\mathrm{rec}}^a(s)$ and $D_{\mathrm{disp}}^a(s)$ denote the recovery steps and spatial dispersion achieved by method $a$ in scenario $s$, respectively. The recovery win rate is defined as
\begin{equation}
    \eta_{\mathrm{con}}(a)=
    \frac{100\%}{|\mathcal{S}_{\mathrm{test}}|}
    \sum_{s\in\mathcal{S}_{\mathrm{test}}}
    \frac{\mathbb{I}\!\left[a\in\arg\min_{b\in\mathcal{A}}T_{\mathrm{rec}}^b(s)\right]}
    {\left|\arg\min_{b\in\mathcal{A}}T_{\mathrm{rec}}^b(s)\right|},
    \end{equation}
    \begin{equation}
    \eta_{\mathrm{dis}}(a)=
    \frac{100\%}{|\mathcal{S}_{\mathrm{test}}|}
    \sum_{s\in\mathcal{S}_{\mathrm{test}}}
    \frac{\mathbb{I}\!\left[a\in\arg\max_{b\in\mathcal{A}}D_{\mathrm{disp}}^b(s)\right]}
    {\left|\arg\max_{b\in\mathcal{A}}D_{\mathrm{disp}}^b(s)\right|},
\end{equation}
where the denominator equally divides the winning credit among tied methods, so the ratios sum to 100\%.

\begin{table*}[t]
\centering
\caption{Performance Comparison across Two Experimental Settings}
\label{tab:experiment_results}
\footnotesize
\begin{threeparttable}
\setlength{\tabcolsep}{4.5pt}
\begin{tabular}{lccccccc}
\toprule
\multirow{2}{*}{\textbf{Algorithm}} &
\multicolumn{2}{c}{\textbf{Avg. Recovery Steps}} &
\multirow{2}{*}{\textbf{\makecell{Avg. Spatial\\Dispersion}}} &
\multirow{2}{*}{\textbf{\makecell{Avg. Planning\\Time (s)}}} &
\multirow{2}{*}{\textbf{\makecell{Unconnected\\Cases}}} &
\multirow{2}{*}{\textbf{\makecell{Best Connect.\\Ratio (\%)}}} &
\multirow{2}{*}{\textbf{\makecell{Best Disp.\\Ratio (\%)}}} \\
\cmidrule(lr){2-3}
& \textbf{Successful Only} & \textbf{All Scenarios} & & & & & \\
\midrule
\multicolumn{8}{l}{\textbf{(a) Setting I}: $1000\times1000\times100 \, \mathrm{m}^3$, $N=200$, 3000 scenarios} \\
\midrule
HERO & 378.85 & 455.04 & 0.7757 & 0.0666 & 368 & 0.25 & 3.32 \\
CEN & 283.45 & 283.45 & 0.6752 & \textbf{0.0271} & \textbf{0} & 2.43 & 1.16 \\
SIDR & \textbf{137.17} & 823.41 & 0.9088 & \underline{0.0494} & 2386 & 0.55 & 5.82 \\
GCN-2017 & 297.43 & 316.40 & 0.8041 & 44.5903 & 81 & 1.92 & 0.72 \\
CR-MGC & 256.38 & 256.63 & 0.7851 & 10.9692 & \underline{1} & 4.53 & 1.15 \\
CR-MGC+Meta & 225.68 & 234.71 & 0.8013 & 16.3081 & 35 & 8.44 & 1.08 \\
GDR-TS & 255.49 & 265.91 & 0.8397 & 12.5365 & 42 & 4.98 & 2.72 \\
AdDiff-GT & 162.38 & \underline{162.38} & \textbf{0.9548} & 10.6942 & \textbf{0} & \underline{30.47} & \textbf{43.68} \\
AdDiff-GT+Meta & \underline{150.72} & \textbf{150.72} & \underline{0.9540} & 9.1844 & \textbf{0} & \textbf{46.43} & \underline{40.35} \\
\midrule
\multicolumn{8}{l}{\textbf{(b) Setting II}: $2000\times2000\times200 \, \mathrm{m}^3$, $N=400$, 5000 scenarios} \\
\midrule
HERO & 1795.80 & 1797.49 & 0.4802 & 0.2863 & \underline{7} & 0.00 & 0.09 \\
CEN & 1017.39 & 1017.39 & 0.3493 & \textbf{0.0933} & \textbf{0} & 2.48 & 0.16 \\
SIDR & -- & 3000.00 & 0.9519 & \underline{0.1405} & 5000 & 0.00 & 2.12 \\
GCN-2017 & 980.89 & 1152.51 & 0.7381 & 143.6654 & 425 & 1.17 & 0.42 \\
CR-MGC & 1168.69 & 1203.48 & 0.7652 & 71.7710 & 95 & 1.86 & 0.15 \\
CR-MGC+Meta & 1815.53 & 1824.30 & 0.3873 & 66.0270 & 37 & 0.47 & 0.05 \\
GDR-TS & 1018.97 & 1046.31 & 0.6796 & 1.1336 & 69 & 2.64 & 0.15 \\
AdDiff-GT & \textbf{681.44} & \textbf{681.44} & \textbf{0.9900} & 17.6835 & \textbf{0} & \textbf{59.53} & \textbf{62.38} \\
AdDiff-GT+Meta & \underline{715.73} & \underline{715.73} & \underline{0.9881} & 11.1862 & \textbf{0} & \underline{31.85} & \underline{34.49} \\
\bottomrule
\end{tabular}
\par\vspace{2pt}
\begin{tablenotes}
    \footnotesize
    \item 
    \textit{Note}: 
    The \textbf{best} and \underline{second-best} values are shown in bold and underlined, respectively.
\end{tablenotes}
\end{threeparttable}
\end{table*}
\subsection{Performance Evaluation}
\subsubsection{Comparative Recovery Performance}
Table~\ref{tab:experiment_results} compares the recovery performance under the two experimental settings. In Setting I, AdDiff-GT+Meta achieves an average recovery step $T_{\mathrm{rec}}$ of 150.72 with a 100$\%$ connectivity success rate.
AdDiff-GT without meta-initialization requires slightly more steps (162.38) while maintaining a high spatial dispersion of $D_{\mathrm{disp}}$  of 0.9548. Although SIDR achieves relatively lower recovery steps among successful cases, it fails in 2,386 scenarios, resulting in an all-scenario average of 823.41 when unsuccessful cases are assigned $T_{\max}=1000$. CEN achieves zero recovery failures and lower planning latency, but its spatial dispersion is limited to 0.6752, indicating a more concentrated post-recovery topology. 

In Setting II,  AdDiff-GT achieves an average recovery step of 681.44 with a 100$\%$ connectivity success rate under both successful-only and all-scenario statistics, indicating consistent performance across the evaluated partitioned topologies. Compared with CEN, which also achieves full recovery, AdDiff-GT requires fewer recovery steps and attains substantially higher spatial dispersion (0.9900 versus 0.3493). This result demonstrates that the proposed hybrid graph representation effectively captures inter-cluster relationships while avoiding excessive spatial aggregation during connectivity restoration.

The learning-based baselines exhibit reduced robustness as the network scale increases. In Setting II, GCN-2017, CR-MGC, and GDR-TS obtain all-scenario average recovery steps of 1152.51, 1203.48, and 1046.31,  with 425, 95, and 69 unconnected scenarios, respectively.  GDR-TS achieves a lower planning time of 1.1336 s but obtains a spatial dispersion of 0.6796 and a connectivity win rate of 2.64$\%$. SIDR maintains a high dispersion value 0.9519; however, since it fails to restore connectivity in all 5,000 scenarios, this value primarily reflects the spatial separation of disconnected clusters rather than the quality of a successfully recovered topology.

The comparative experimental results fully demonstrate that AdDiff-GT maintains stable connectivity recovery while preserving high spatial dispersion across both network scales. In particular, its performance remains stable as the number of surviving UAVs and partitioning complexity increase, demonstrating the scalability and robustness of the proposed graph-based self-healing mechanism.

\subsubsection{Recovery Evolution Dynamics}
Fig.~\ref{fig:cluster_convergence_cases} illustrates the evolution of the number of connected clusters $C_t$ over recovery steps for three representative partitioned scenarios in Setting II, with $n_t=$ 28, 193, and 292 surviving UAVs. For the highly fragmented case with $n_t=$ 28, AdDiff-GT restores full connectivity ($C_t=1$) in approximately 1,170 steps, whereas most baselines remain trapped in partially connected states. As the surviving scale increases to $n_t=$ 193 and 292, AdDiff-GT exhibits a more rapid reduction in cluster count and progressively merges the disconnected clusters.
These representative dynamics are consistent with the aggregate results in Table~\ref{tab:experiment_results}. Although Fig.~\ref{fig:cluster_convergence_cases} reflects individual scenarios rather than averaged performance, the sustained reduction in connected-cluster count provides qualitative evidence of the progressive topology recovery achieved by AdDiff-GT and helps explain its consistent recovery performance across large-scale partitioned networks.

\begin{figure*}[ht]
\centering
\begin{minipage}[b]{0.28\textwidth}
\centering
\includegraphics[width=\linewidth]{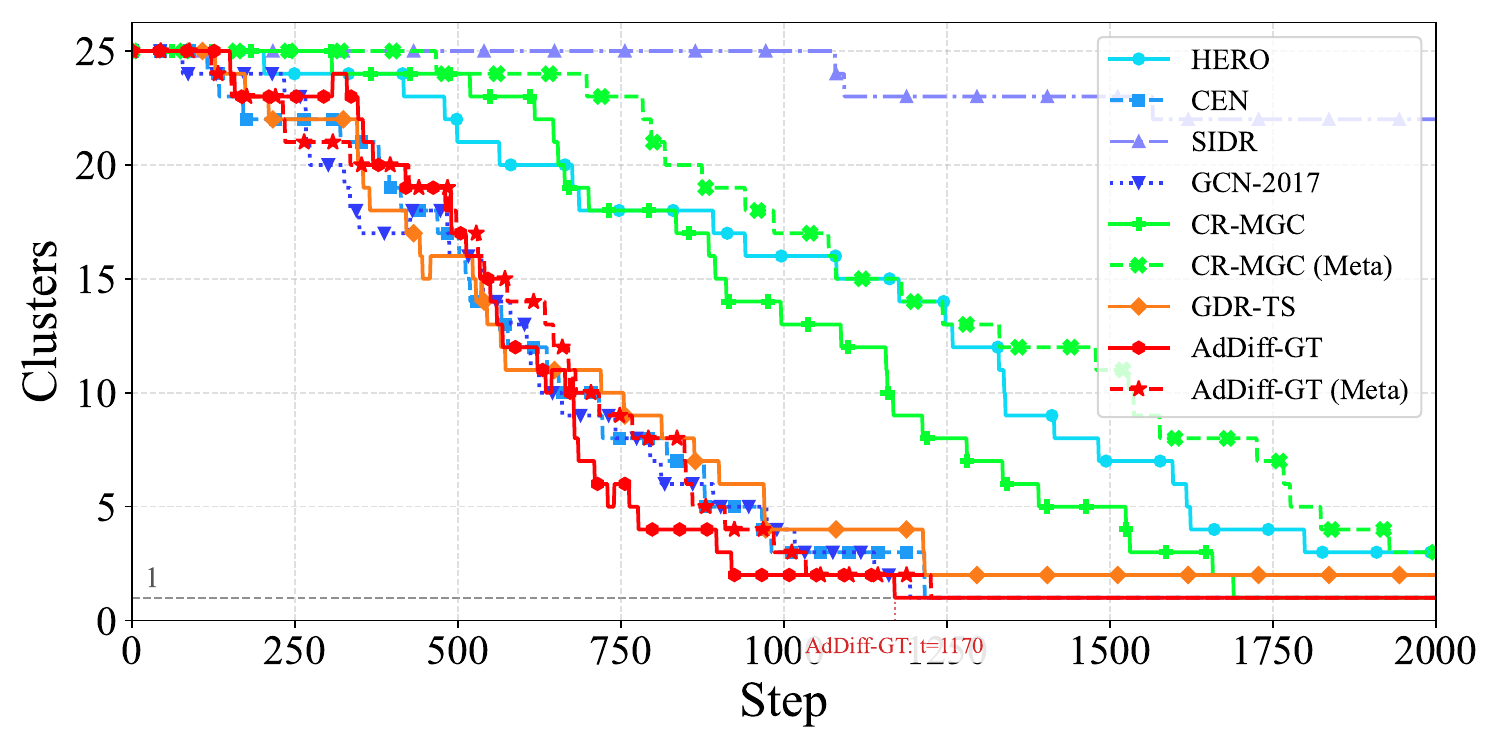}\\
\small (a) $n_t=28$
\end{minipage}
\hfill
\begin{minipage}[b]{0.28\textwidth}
\centering
\includegraphics[width=\linewidth]{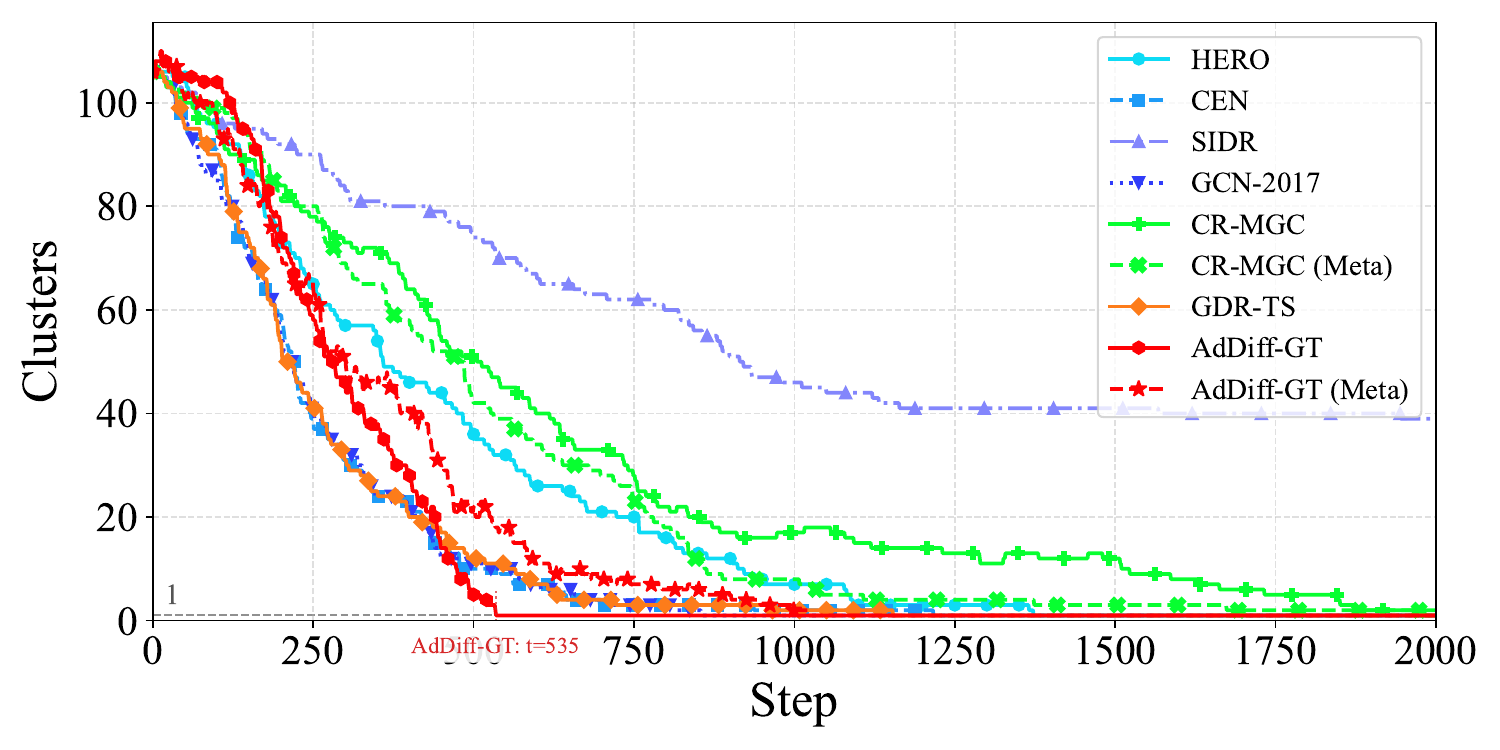}\\
\small (b) $n_t=193$
\end{minipage}
\hfill
\begin{minipage}[b]{0.28\textwidth}
\centering
\includegraphics[width=\linewidth]{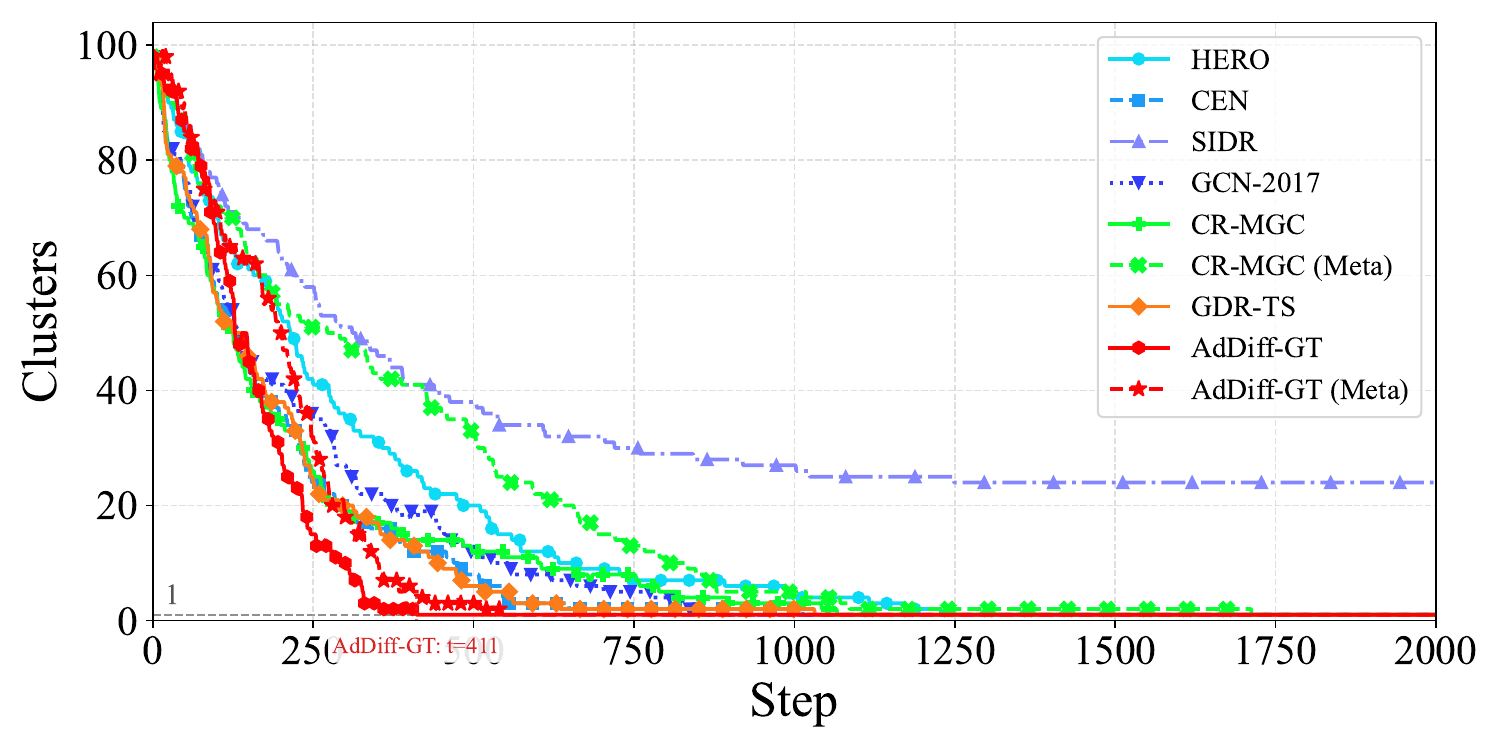}\\
\small (c) $n_t=292$
\end{minipage}
\caption{Representative convergence of the number of clusters under different residual swarm sizes. The dashed horizontal line indicates $C_t=1$, corresponding to a fully connected communication network.}
\label{fig:cluster_convergence_cases}
\end{figure*}
\subsubsection{Computational Efficiency and Meta-Adaptation}
\begin{figure*}[ht]
\centering
\begin{minipage}[b]{0.45\textwidth}
\centering
\includegraphics[width=0.9\linewidth]{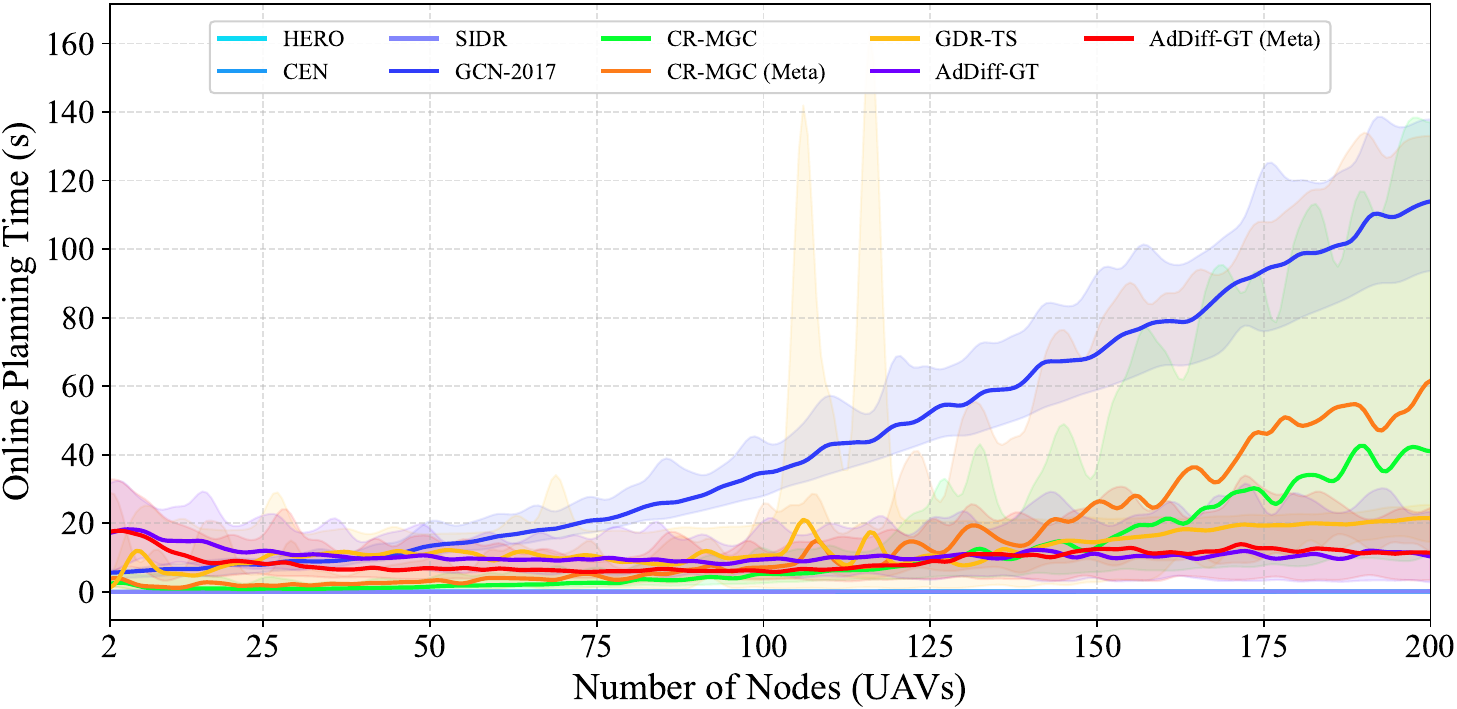}\\
\small (a) Setting I: $1000\times1000\times100 \, \mathrm{m}^3$, $N=200$
\end{minipage}
\hfill
\begin{minipage}[b]{0.45\textwidth}
\centering
\includegraphics[width=0.9\linewidth]{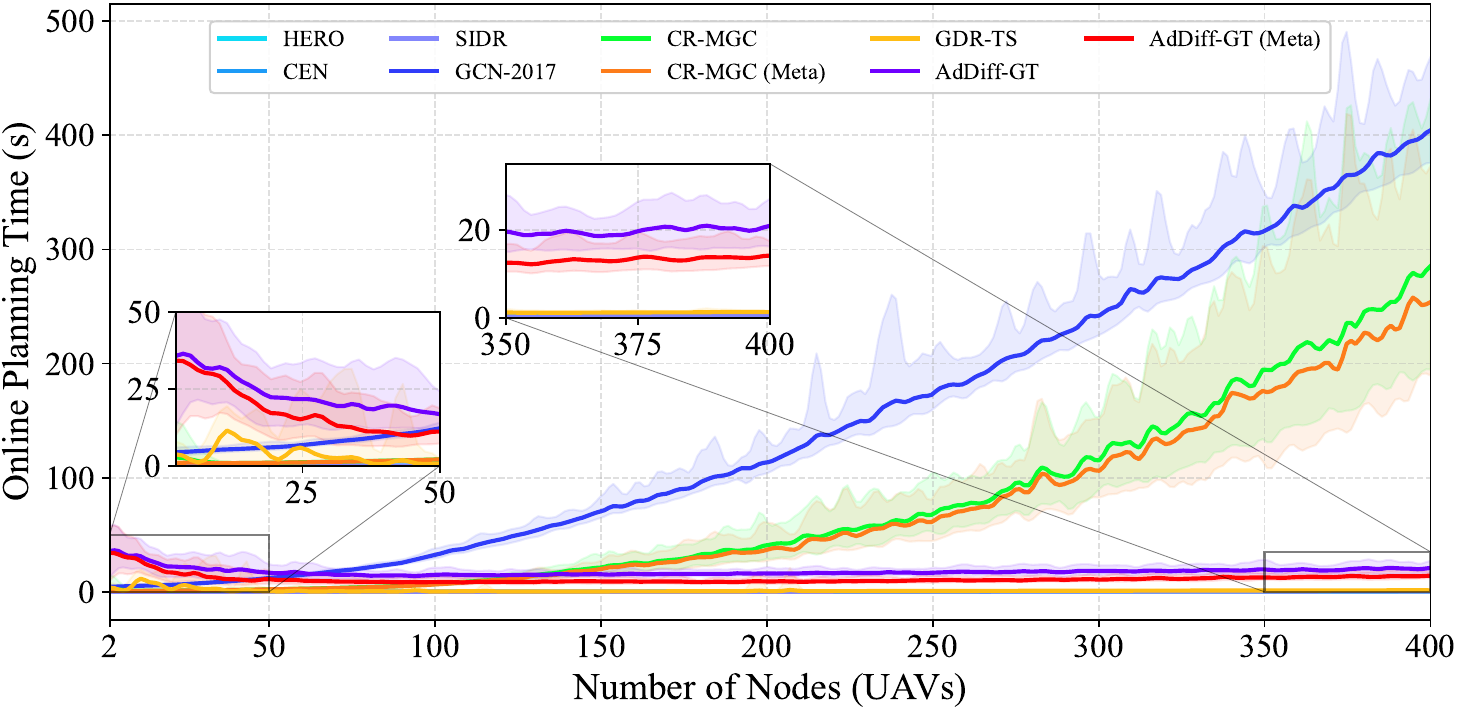}\\
\small (b) Setting II: $2000\times2000\times200 \, \mathrm{m}^3$, $N=400$
\end{minipage}
\caption{Online planning time  comparison across residual swarm sizes and experimental settings.}
\label{fig:inference_time}
\end{figure*}
Fig.~\ref{fig:inference_time} compares the online planning latency $T_{\mathrm{plan}}$ across surviving-node scales. GCN-2017 and CR-MGC exhibit rapidly increasing planning latency as the network size increases, whereas AdDiff-GT incurs additional computational overhead from hybrid graph construction, advection-diffusion propagation, and online parameter adaptation. Meta-initialization substantially reduces the average planning latency from 10.6942 s to 9.1844 s in Setting I and from 17.6835 s to 11.1862 s in Setting II, corresponding to a 36.7$\%$ reduction in the latter. Both variants maintain a 100$\%$ connectivity success rate in both settings. 
\begin{figure}[htp]
    \centering    \includegraphics[width=0.9\linewidth]{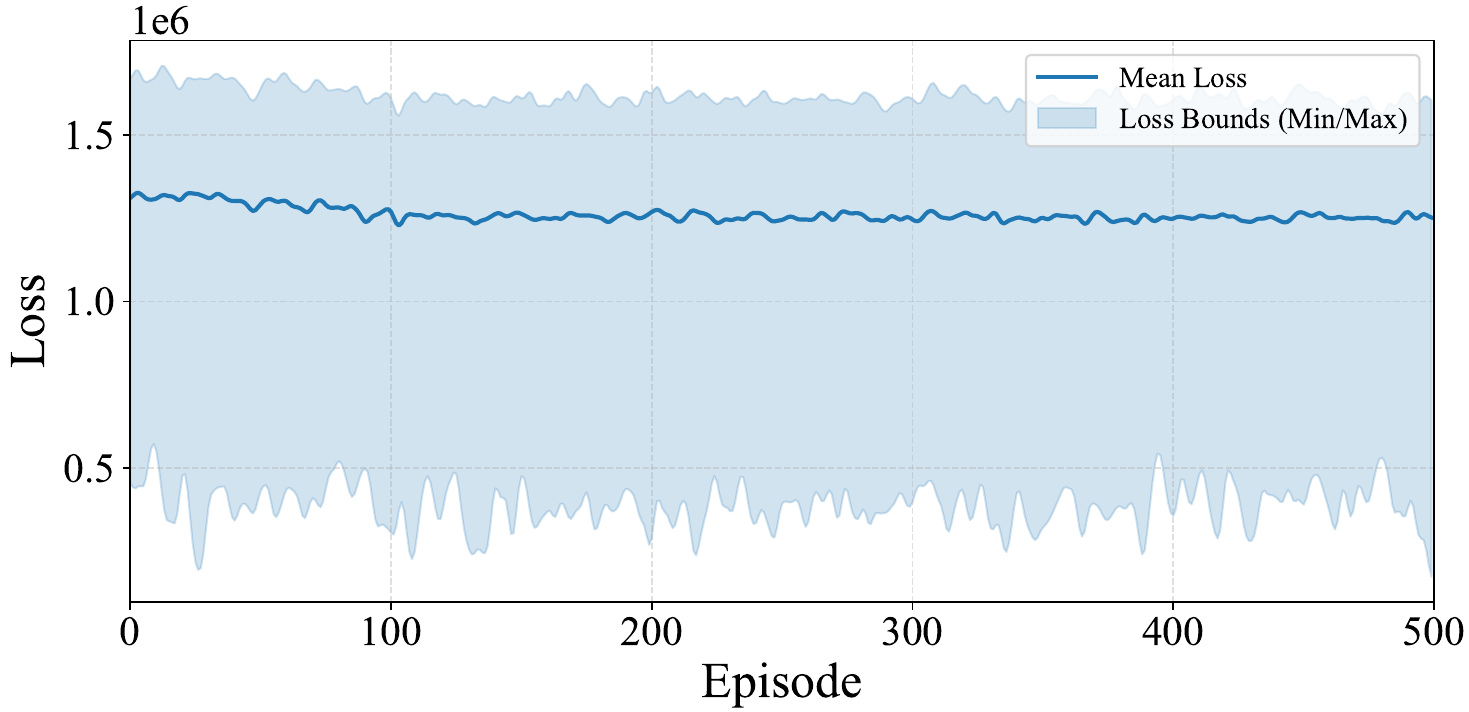}
    \vspace{-0.1in}
    \caption{Overall loss convergence during the offline meta-learning stage.}
    \label{fig:meta_learning_loss}
\end{figure}
\begin{figure}[htp]
    \centering
\includegraphics[width=0.9\linewidth]{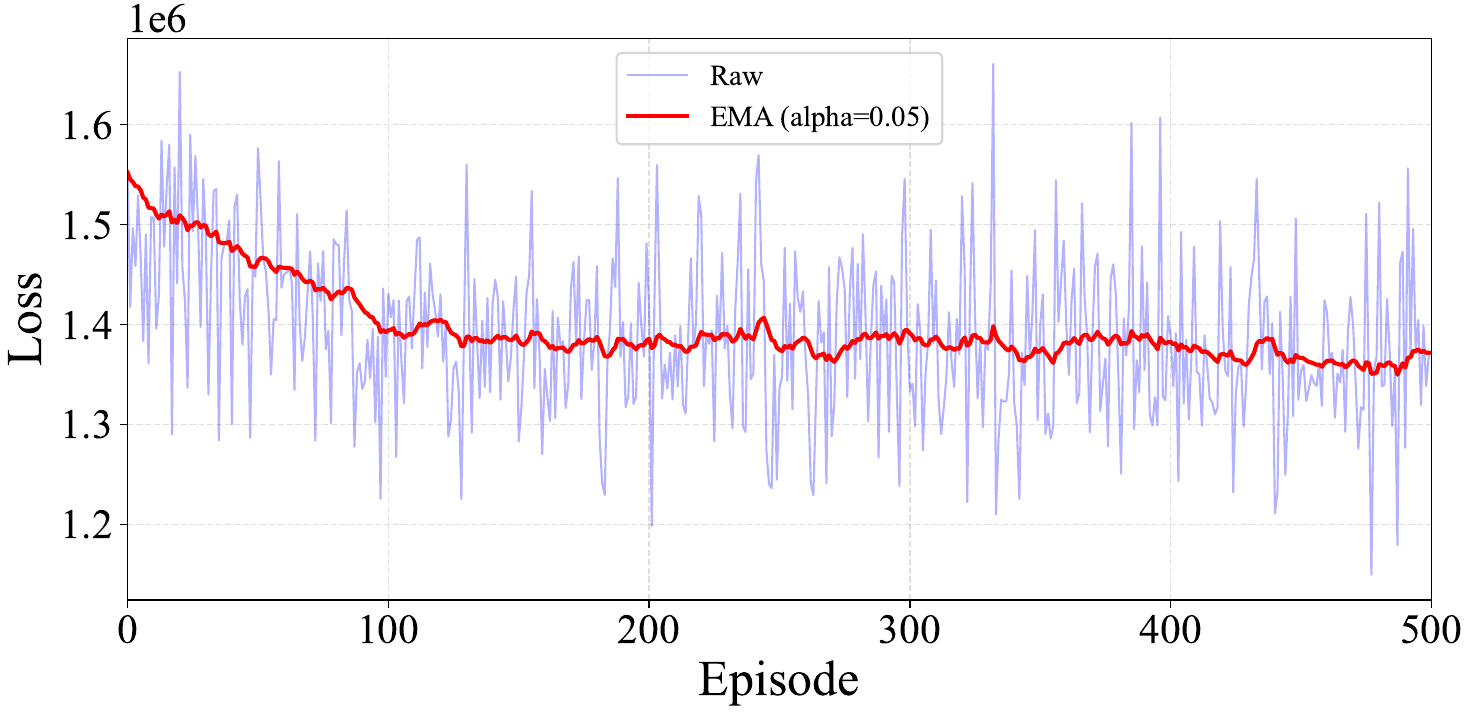}
    \vspace{-0.1in}
    \caption{Meta-learning loss convergence for a residual swarm size of $n=300$.}
    \label{fig:meta_learning_loss_300}
\end{figure}

Meta-learning also yields a favorable trade-off between planning efficiency and recovery performance. In Setting I, AdDiff-GT+Meta reduces the average recovery steps from 162.38 to 150.72. In Setting II, it requires slightly more steps (715.73 vs. 681.44) while maintaining 100$\%$ recovery and high spatial dispersion (0.9881). Figs.~\ref{fig:meta_learning_loss} and \ref{fig:meta_learning_loss_300} further show stable convergence of the offline meta-learning loss, supporting the effectiveness of scale-specific meta-initialization. 

\subsubsection{Communication Efficiency and Security}
\begin{figure*}[htbp]
\centering
\begin{minipage}[b]{0.23\textwidth}
\centering
\includegraphics[width=\linewidth]{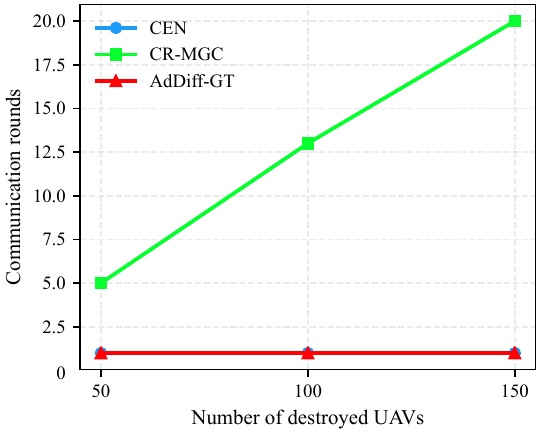}\\
\small (a) Interaction rounds
\end{minipage}\hfill
\begin{minipage}[b]{0.23\textwidth}
\centering
\includegraphics[width=\linewidth]{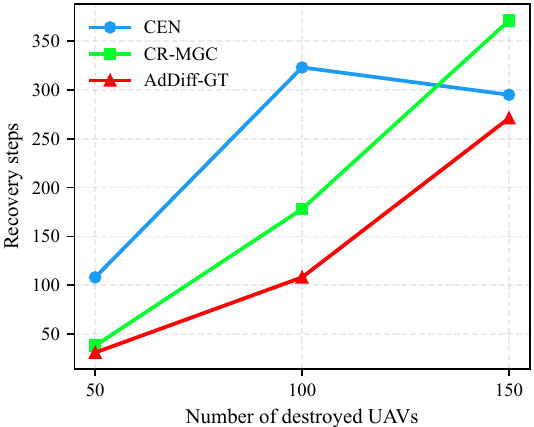}\\
\small (b) Recovery steps
\end{minipage}\hfill
\begin{minipage}[b]{0.23\textwidth}
\centering
\includegraphics[width=\linewidth]{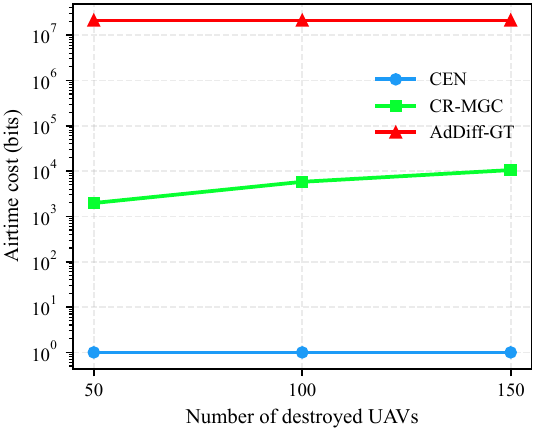}\\
\small (c) Airtime cost
\end{minipage}\hfill
\begin{minipage}[b]{0.23\textwidth}
\centering
\includegraphics[width=\linewidth]{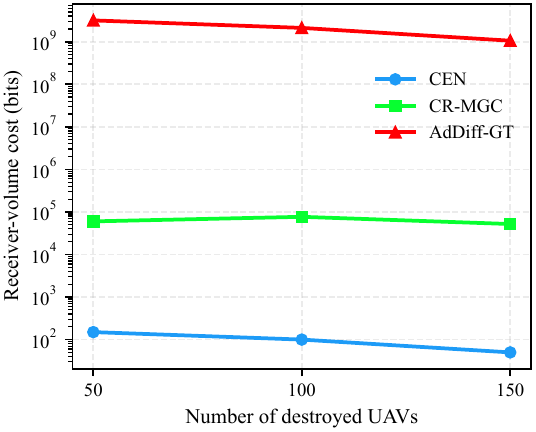}\\
\small (d) Receiver volume
\end{minipage}
\caption{Communication overhead across different damage scales.}
\label{fig:communication_cost}
\end{figure*}
Fig.~\ref{fig:communication_cost} compares the communication overhead of the proposed air-ground asymmetric architecture. As the number of damaged UAVs increases from 50 to 150, CR-MGC requires substantially more interaction rounds
due to repeated GCS-assisted optimization, whereas AdDiff-GT
maintains a single interaction round through local reporting, GCS tuning, and parameter broadcasting. Meanwhile, Fig.~\ref{fig:communication_cost}(b) shows that AdDiff-GT requires fewer recovery steps than CEN and CR-MGC, indicating that reduced interaction does not compromise recovery efficiency. Although AdDiff-GT incurs
higher wireless airtime cost and receiver-volume cost due to
one-shot model-parameter broadcasting, its communication
advantage lies in minimizing interaction rounds and avoiding
position-sensitive information exchange. This design shortens
the post-damage communication exposure window and improves
the security-efficiency trade-off under adversarial conditions.


\subsubsection{Spatial Distribution Preservation}
\begin{figure*}[!t]
\centering
\begin{minipage}[b]{0.45\textwidth}
\centering
\includegraphics[width=0.9\linewidth]{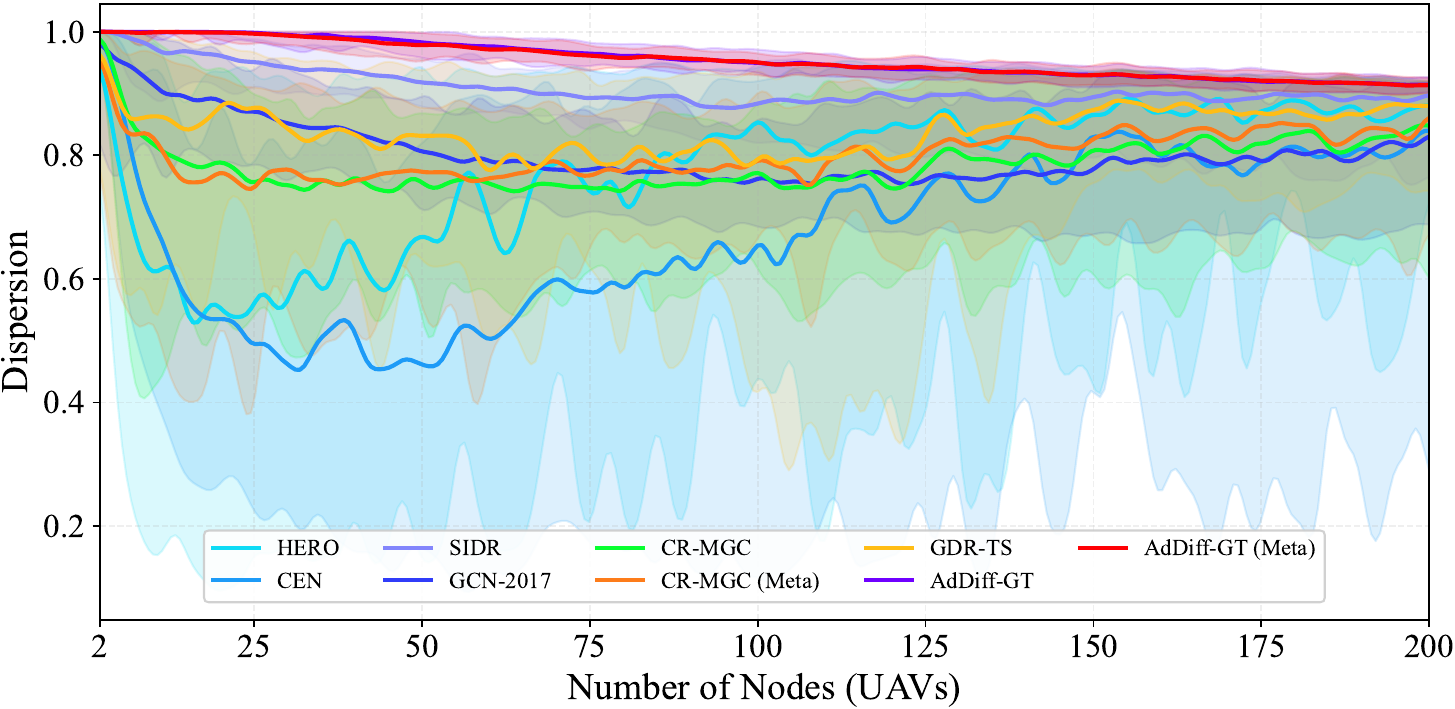}\\
\small (a) Setting I: $1000\times1000\times100 \, \mathrm{m}^3$, $N=200$
\end{minipage}
\hfill
\begin{minipage}[b]{0.45\textwidth}
\centering
\includegraphics[width=0.9\linewidth]{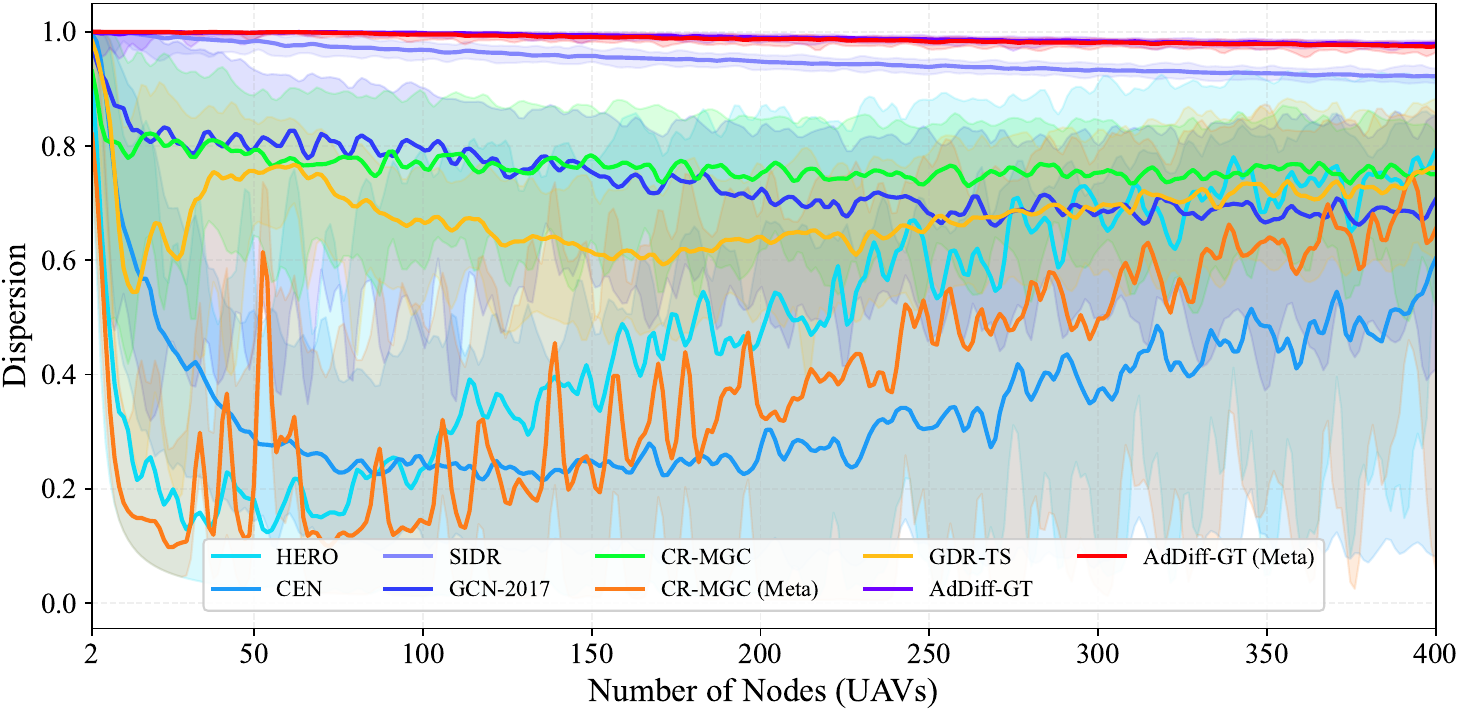}\\
\small (b) Setting II: $2000\times2000\times200 \, \mathrm{m}^3$, $N=400$
\end{minipage}
\caption{Spatial dispersion distributions across different residual swarm sizes and experimental settings.}
\label{fig:dispersion_distribution}
\end{figure*}
Fig.~\ref{fig:dispersion_distribution} compares the spatial dispersion of different methods. In Setting I, AdDiff-GT and AdDiff-GT+Meta achieve high average dispersions of 0.9548 and 0.9540, respectively, substantially outperforming
CEN, CR-MGC, and GDR-TS. AdDiff-GT attains the highest dispersion in 43.68$\%$ of scenarios, while AdDiff-GT+Meta achieves
the best recovery performance in 46.43$\%$ of scenarios. In Setting II, their dispersions further increase to
0.9900 and 0.9881, respectively, indicating effective preservation
of wide-area spatial distribution. In contrast, CEN achieves only
0.3493 due to excessive clustering, while GDR-TS reaches 0.6796.
Although SIDR exhibits high dispersion, its 5,000 failed scenarios
indicate that this dispersion mainly results from unreconnected
clusters. These results show that AdDiff-GT effectively balances connectivity
recovery and spatial coverage by jointly modeling UAV-level and
cluster-level topology.

\subsubsection{Visual Analysis of Self-Healing Trajectories}
\begin{figure*}[t]
\centering
\begin{minipage}[b]{0.30\textwidth}
\centering
\includegraphics[width=\linewidth]{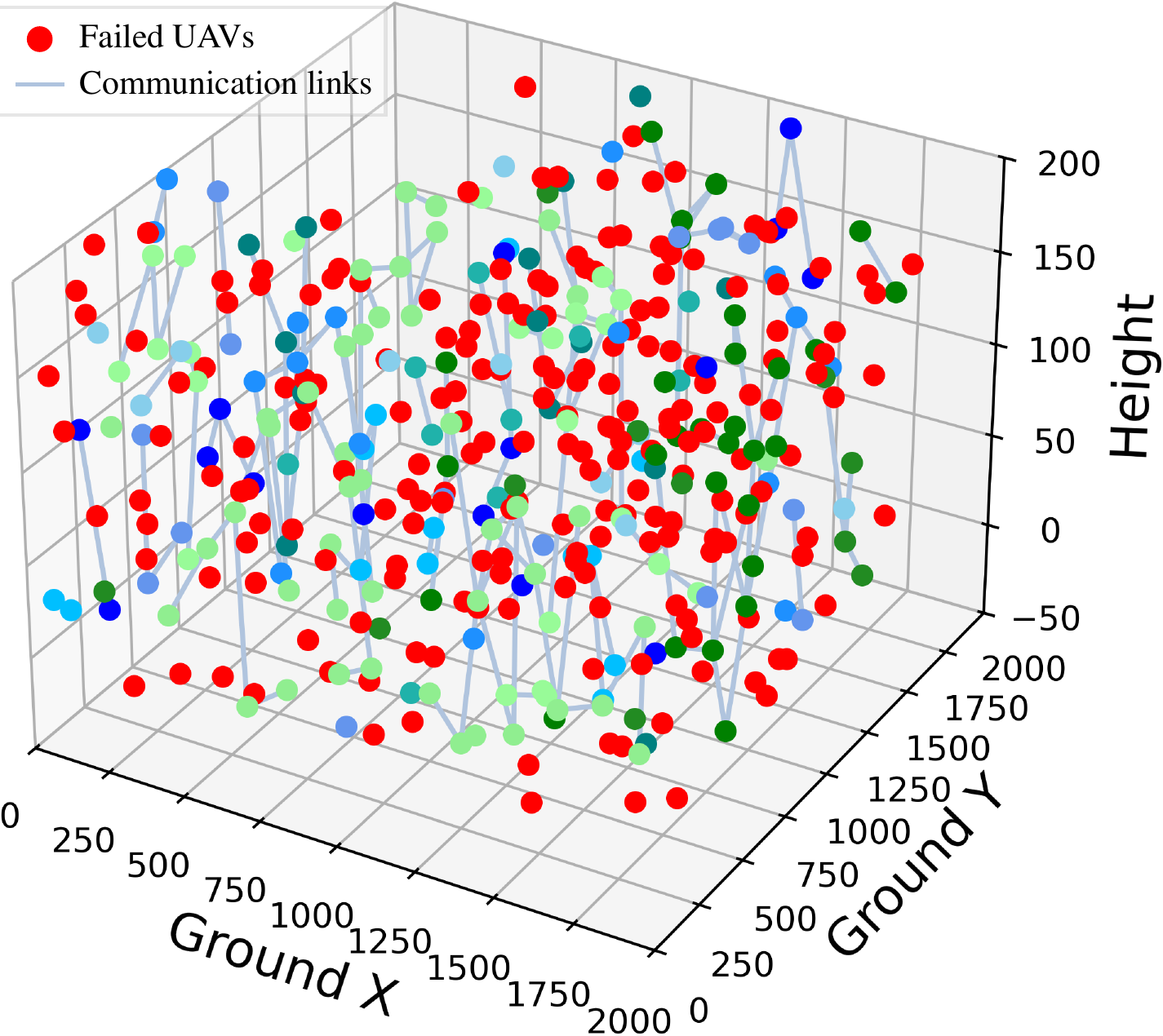}\\
\small (a) Initial residual topologies
\end{minipage}
\hfill
\begin{minipage}[b]{0.30\textwidth}
\centering
\includegraphics[width=\linewidth]{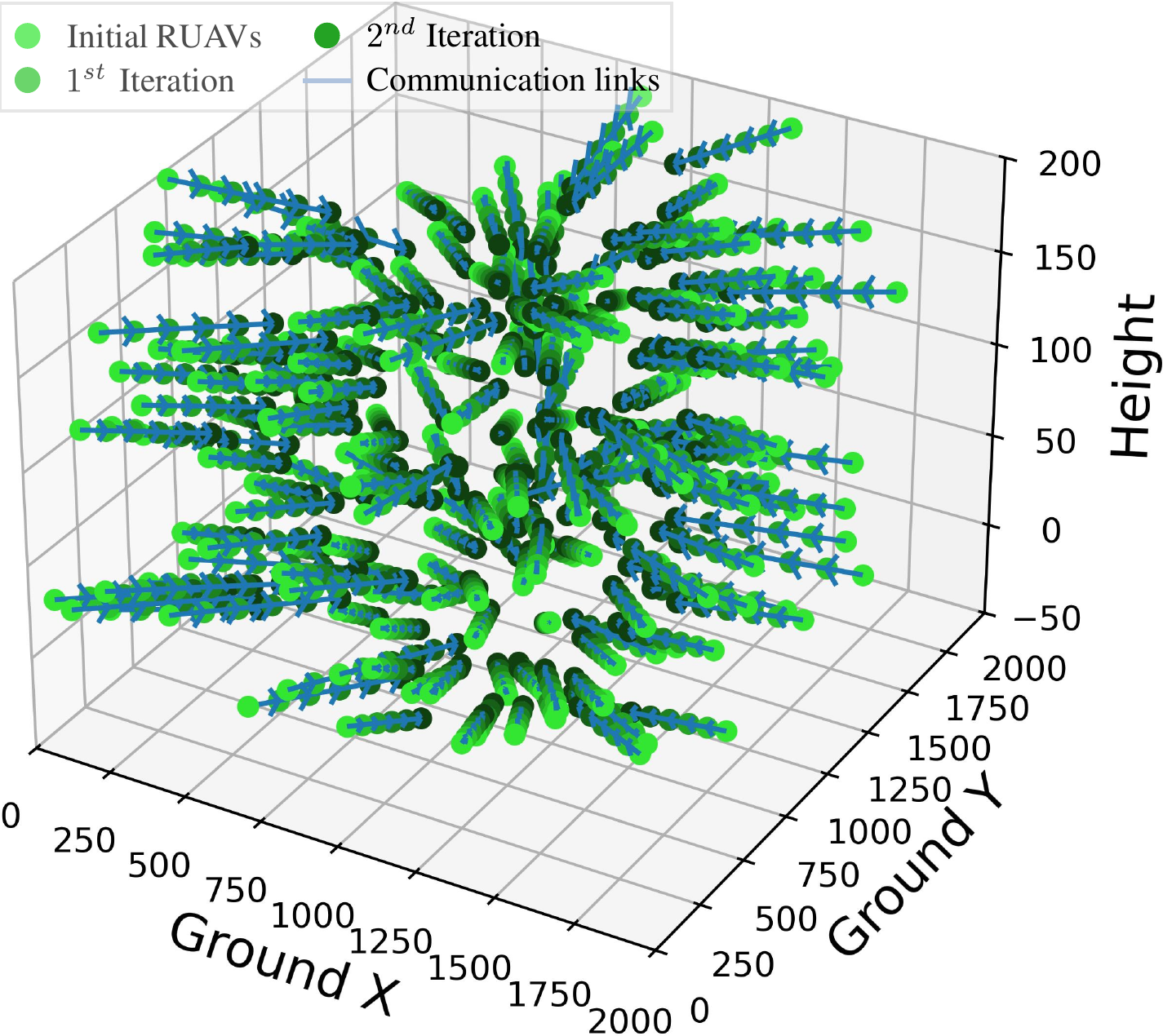}\\
\small (b) 3D recovery trajectories
\end{minipage}
\hfill
\begin{minipage}[b]{0.30\textwidth}
\centering
\includegraphics[width=\linewidth]{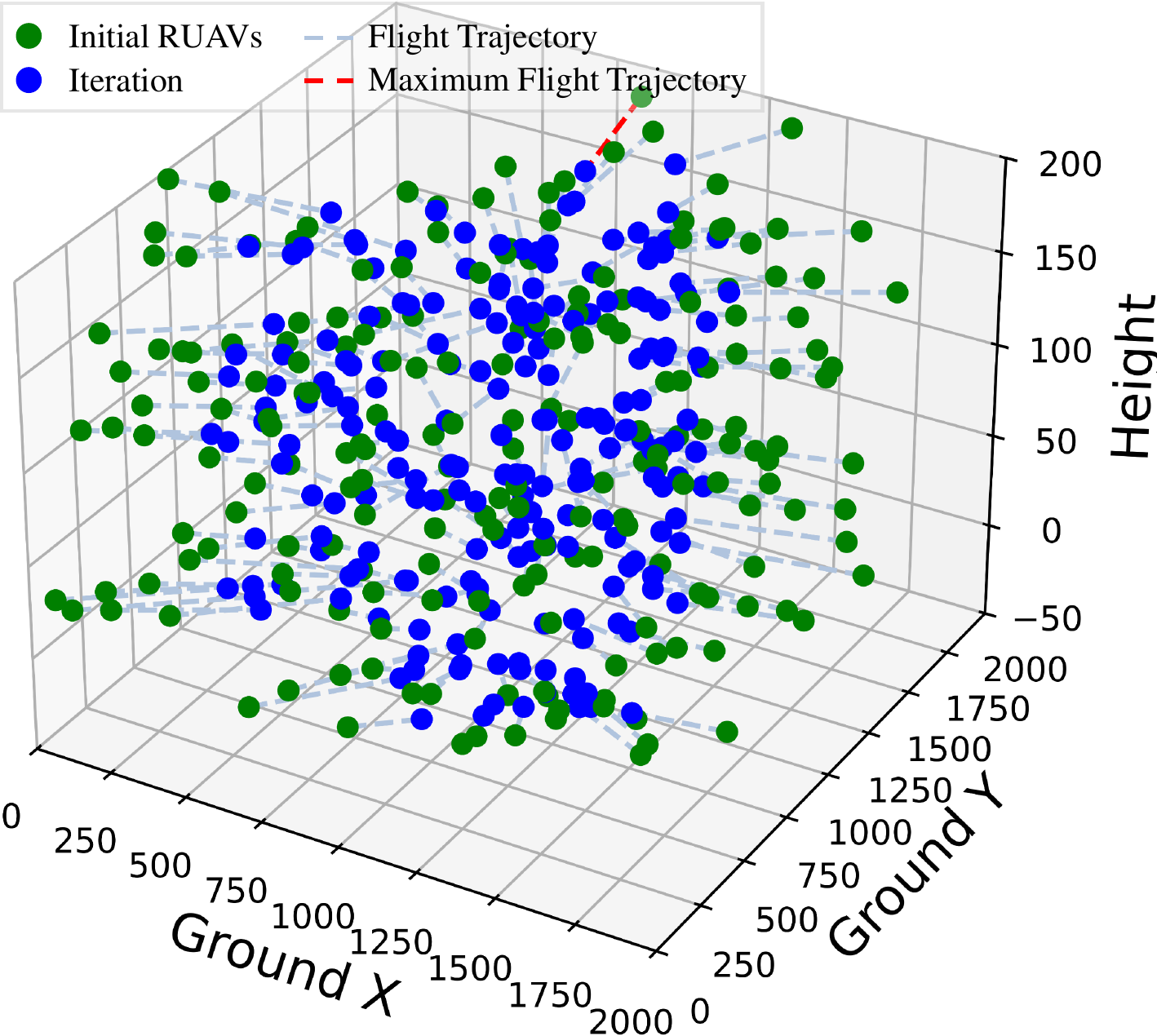}\\
\small (c) Final spatial distributions
\end{minipage}
\caption{Evolution of self-healing performance with 200 damaged nodes.}
\label{fig:cluster_groups}
\end{figure*}

Fig.~\ref{fig:cluster_groups} illustrates a representative self-healing trajectory with 200 damaged nodes in an initial network of $N=$ 
400 UAVs. The corresponding results for 100 and 300 damaged nodes are provided in Appendix~\ref{app:exp_details}.
As damage increases, the surviving topology becomes more fragmented with larger
inter-cluster gaps. AdDiff-GT generates asymmetric displacement
fields, where a small number of boundary or bridging nodes
undertake larger movements to establish cross-cluster links,
while most nodes undergo only local adjustments.  This mechanism
avoids excessive aggregation toward a geometric center and
preserves peripheral nodes and wide-area spatial distribution,
consistent with the high spatial dispersion values in Table~\ref{tab:experiment_results}. The visualization cases demonstrate how AdDiff-GT jointly achieves inter-cluster connectivity and spatial preservation through adaptive displacement fields.



\section{Conclusion}
\label{sec:conclusion}
This paper proposed AdDiff-GT, an integrated self-healing framework for USNET under severe network partitioning. An ID-level damage inference mechanism reconstructs post-damage topology from sparse local reports and topological closure, without requiring real-time position sharing. AdDiff-GT  integrates advection-diffusion dynamics with a node-supernode hybrid graph to capture intra- and inter-cluster relationships.  Meta-initialized air-ground asymmetric recovery further enables efficient online adaptation and distributed airborne inference.  Extensive experiments over 8,000 partitioned network scenarios, including 5,000 large-scale cases, demonstrate effective connectivity recovery with favorable spatial dispersion. Meta-initialization further reduces online planning overhead with only a minor recovery-step increase at large scales. Future work will consider channel uncertainties, heterogeneous UAV capabilities, and multi-round damage scenarios.

\bibliographystyle{IEEEtran}
\bibliography{Reference}

\appendices

\section{Notations and Operators}
\label{app:notation}
In this paper, scalars, vectors, and matrices are denoted by lowercase italic letters $x$, lowercase bold letters $\mathbf{x}$, and uppercase bold letters $\mathbf{X}$, respectively; $\mathbb{R}$ denotes the set of real numbers; $\|\cdot\|_2$ denotes the Euclidean or spectral norm; $|\mathcal{S}|$ denotes the cardinality of set $\mathcal{S}$; $\mathbb{I}\{\cdot\}$ denotes the indicator function; $\cup$, $\cap$, and $\setminus$ denote union, intersection, and set difference, respectively; $\mathbf{A}$, $\mathbf{D}$, and $\mathbf{L}$ denote the adjacency, degree, and Laplacian matrices, respectively; $\operatorname{diag}(\cdot)$ denotes the diagonal matrix operator, and $\odot$ denotes the element-wise (Hadamard) product; $\nabla$ and $\nabla^{\ast}$ denote the manifold gradient operator and its adjoint divergence operator, respectively, while $\nabla_{\mathcal{G}}$ and $\nabla_{\mathcal{G}}^{\ast}$ denote their discrete graph counterparts.

\section{AdDiff-GT-Based Self-Healing of USNET}
\label{app:process}
 The proposed AdDiff-GT self-healing process is illustrated in Fig.~\ref{fig:recovery_process}. The process consists of three stages: UED occurrence, local connectivity detection, and onboard forward inference and maneuver-based topology recovery.
\begin{figure*}[htp]
    \centering
    \includegraphics[width=0.9\linewidth]{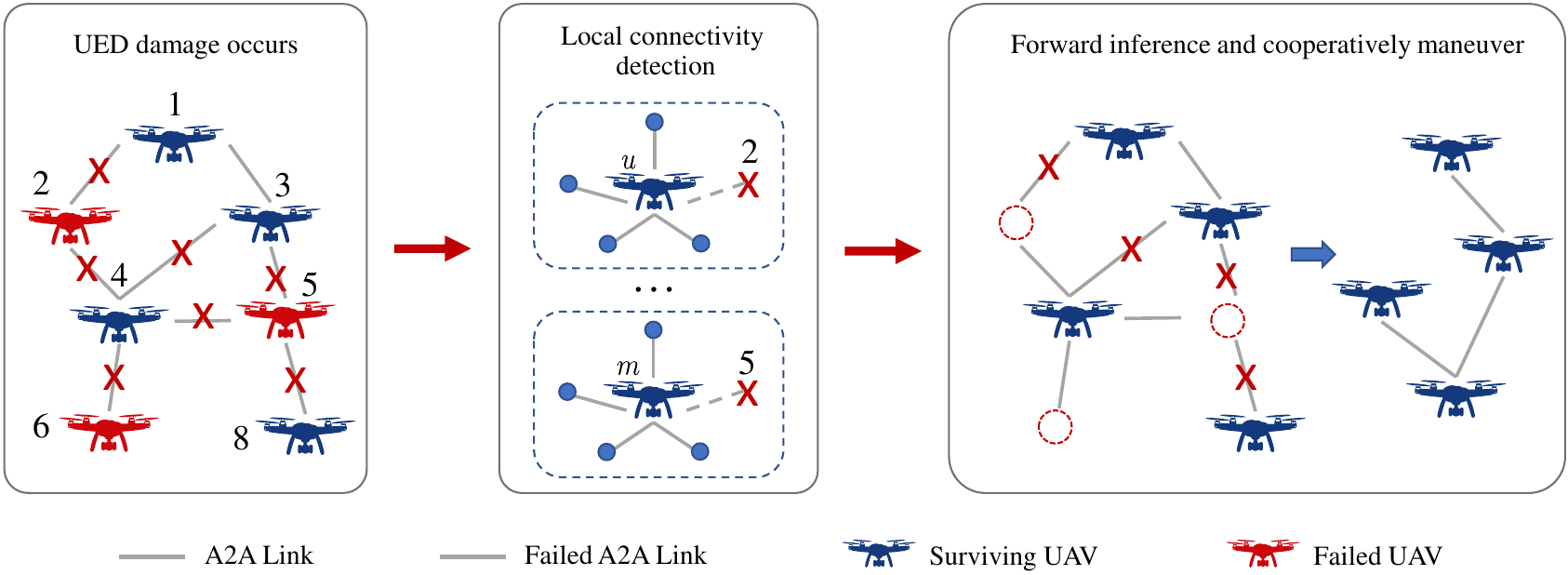}
    \caption{Overview of the proposed AdDiff-GT-based self-healing process for USNET.}
    \label{fig:recovery_process}
\end{figure*}

\section{Theoretical Analysis for AdDiff-GT}
\label{app:theory}
This appendix first provides the proofs of the propositions and the corollary stated in the main text, followed by additional propositions and their corresponding proofs.
\begin{proof}[Proof of Proposition~\ref{prop:convergence}]
Stacking the positions of all surviving nodes yields $\mathbf{P}\in\Omega^{|\mathcal{R}_t|}$. For the convergence analysis, we consider the corresponding position-space surrogate objective $\mathcal{L}(\mathbf{P})$. As defined in Eq.~\eqref{eq:loss}, $\mathcal{L}$ is a weighted combination of connectivity, displacement, maximum-displacement overflow, local smoothness, cluster alignment, inter-cluster attraction, soft-limit, and minimum-link-spacing terms. Each term is either a norm, squared norm, or ReLU-based penalty, or can be replaced by a smooth approximation such as soft-min or soft-plus during online tuning. Hence, there exists a constant $\mathcal{L}_{\min}\ge 0$ such that
\begin{equation}
\mathcal{L}(\mathbf{P})\ge\mathcal{L}_{\min},\quad \forall \mathbf{P}\in\Omega^{|\mathcal{R}_t|} .
\end{equation}

Consider the gradient-based update
\begin{equation} \label{gradient-based update}
\mathbf{P}_{k+1}=\mathbf{P}_{k}-\eta\nabla_{\mathbf{P}}\mathcal{L}(\mathbf{P}_k),
\end{equation}
where the step size is chosen such that the iterates remain within the feasible flight region. If necessary, Eq.~\eqref{gradient-based update} can be implemented using its projected form. 

Since $\nabla\mathcal{L}$ is $L$-Lipschitz continuous, the descent lemma gives
\begin{equation} \label{descent_lemma}
\begin{aligned}
\mathcal{L}(\mathbf{P}_{k+1})
\le\;&
\mathcal{L}(\mathbf{P}_k)
+\nabla\mathcal{L}(\mathbf{P}_k)^{\mathrm{T}}
(\mathbf{P}_{k+1}-\mathbf{P}_k)\\
&+\frac{L}{2}\|\mathbf{P}_{k+1}-\mathbf{P}_k\|_2^2 .
\end{aligned}
\end{equation}

Substituting Eq.~\eqref{gradient-based update} into Eq.~\eqref{descent_lemma} yields
\begin{equation}
\mathcal{L}(\mathbf{P}_{k+1})-\mathcal{L}(\mathbf{P}_k)
\le
-\eta\left(1-\frac{L\eta}{2}\right)
\|\nabla\mathcal{L}(\mathbf{P}_k)\|_2^2 .
\label{eq:appendix_descent}
\end{equation}

For $0<\eta<2/L$, define $c_{\eta}=\eta(1-L\eta/2)>0$. Then,   Eq.~\eqref{eq:appendix_descent} demonstrates that $\{\mathcal{L}(\mathbf{P}_k)\}$ is monotonically non-increasing. Since it is also lower-bounded by $L_{\min}$, the loss sequence converges. Summing Eq.~\eqref{eq:appendix_descent} from $k=0$ to $K$, we obtain
\begin{equation}
c_{\eta}\sum_{k=0}^{K}\|\nabla\mathcal{L}(\mathbf{P}_k)\|_2^2
\le
\mathcal{L}(\mathbf{P}_0)-\mathcal{L}(\mathbf{P}_{K+1})
\le
\mathcal{L}(\mathbf{P}_0)-\mathcal{L}_{\min}.
\end{equation}

Taking $K\to\infty$ yields
\begin{equation}
\sum_{k=0}^{\infty}\|\nabla\mathcal{L}(\mathbf{P}_k)\|_2^2<\infty,
\end{equation}
and therefore $\|\nabla\mathcal{L}(\mathbf{P}_k)\|_2\to0$. 

Because $\Omega^{|\mathcal{R}_t|}$ is bounded and closed, the sequence $\{\mathbf{P}_k\}$ has cluster points. For any cluster point $\mathbf{P}^{\star}$, continuity of $\nabla\mathcal{L}$ gives $\nabla\mathcal{L}(\mathbf{P}^{\star})=0$.
Thus, every cluster point of the online tuning sequence is a first-order stationary point of the surrogate objective. This establishes convergence of the proposed online tuning process to a stationary recovery state under the stated smoothness and step-size conditions. 
\end{proof}

\begin{proof}[Proof of Proposition~\ref{prop:boundedness}]
We first establish a perturbation bound for the matrix exponential. Let $\mathbf{X},\mathbf{E}\in \mathbb{R}^{n\times n}$, and let $\|\cdot\|$ be a submultiplicative matrix norm. Suppose there exist constants $M\ge1$ and $\omega\ge0$ such that $\|\mathrm{e}^{\mathbf{Y}}\|\le M \mathrm{e}^{\omega\|\mathbf{Y}\|}$. Define  $\mathbf{X}(s)=\mathbf{X}+s\mathbf{E}, s\in [0,1]$. Then, 
\begin{equation}
\mathrm{e}^{\mathbf{X}+\mathbf{E}}-\mathrm{e}^{\mathbf{X}}
=\int_0^1\frac{d}{ds}\mathrm{e}^{\mathbf{X}(s)}ds,
\end{equation}
and the derivative of the matrix exponential is 
\begin{equation}
\frac{d}{ds}\mathrm{e}^{\mathbf{X}(s)}
=\int_0^1\mathrm{e}^{(1-\theta)\mathbf{X}(s)}
\mathbf{E}
\mathrm{e}^{\theta\mathbf{X}(s)}d\theta.
\end{equation}

Using submultiplicativity and the assumed exponential growth bound gives
\begin{equation}
\begin{aligned}
\|\mathrm{e}^{\mathbf{X}+\mathbf{E}}-\mathrm{e}^{\mathbf{X}}\|
&\le
\int_0^1\int_0^1
\|\mathrm{e}^{(1-\theta)\mathbf{X}(s)}\|
\|\mathbf{E}\|
\|\mathrm{e}^{\theta\mathbf{X}(s)}\|d\theta ds\\
&\le
\|\mathbf{E}\|M^2
\int_0^1\int_0^1
\mathrm{e}^{\omega\|\mathbf{X}(s)\|}d\theta ds\\
&\le
\|\mathbf{E}\|M^2\mathrm{e}^{\omega(\|\mathbf{X}\|+\|\mathbf{E}\|)} .
\end{aligned}
\label{eq:appendix_exp_bound}
\end{equation}

Consider the continuous-form advection-diffusion propagation
\begin{equation}
\frac{\partial\mathbf{Z}}{\partial t}
=-(\mathbf{I}-\beta\mathbf{V}-\mathbf{C})\mathbf{Z},
\end{equation}
where $\mathbf{V}$ is the normalized local adjacency operator and $\mathbf{C}$ is the global attention coupling matrix. For fixed $\mathbf{V}$ and $\mathbf{C}$, its 
closed-form solution is
\begin{equation}
\mathbf{Z}(T_{\mathrm{prop}})=\mathrm{e}^{-\mathbf{L}T_{\mathrm{prop}}}\mathbf{Z}(0),\quad
\mathbf{L}=\mathbf{I}-\beta\mathbf{V}-\mathbf{C}.
\end{equation}

Under a topological perturbation, $\mathbf{V}'=\mathbf{V}+\Delta\mathbf{V}$ and $\mathbf{C}'=\mathbf{C}+\Delta\mathbf{C}$, and therefore
\begin{equation}
\mathbf{L}'-\mathbf{L}
=-\beta\Delta\mathbf{V}-\Delta\mathbf{C}.
\end{equation}

Applying Eq.~\eqref{eq:appendix_exp_bound} with  $\mathbf{X}=-\mathbf{L}T_{\mathrm{prop}}$ and $\mathbf{E}=-(\mathbf{L}'-\mathbf{L})T_{\mathrm{prop}}$, we obtain 
\begin{equation}
\begin{aligned}
&\left\|\mathrm{e}^{-\mathbf{L}'T_{\mathrm{prop}}}-\mathrm{e}^{-\mathbf{L}T_{\mathrm{prop}}}\right\|_2 \\
&\le T_{\mathrm{prop}}\|\mathbf{L}'-\mathbf{L}\|_2M^2
\mathrm{e}^{\omega T_{\mathrm{prop}}(\|\mathbf{L}\|_2+\|\mathbf{L}'-\mathbf{L}\|_2)}.
\end{aligned}
\label{eq:appendix_L_bound}
\end{equation}

For the purely local propagation scheme without attention compensation, $\Delta\mathbf{C}=0$, and hence $\|\mathbf{L}'-\mathbf{L}\|_2=\beta\|\Delta\mathbf{V}\|_2$. Accordingly, the exponential factor in Eq.~\eqref{eq:appendix_L_bound} contains a term that grows with 
${\omega T_{\mathrm{prop}}}\beta\|\Delta\mathbf{V}\|_2$, indicating that local propagation may amplify topological perturbations as the propagation time increases. By contrast, AdDiff-GT introduces the global coupling matrix $\mathbf{C}$, which provides additional cross-cluster interactions. Under the compensation condition in Eq.~\eqref{compensation condition}, 
\begin{equation}
\|\mathbf{L}'-\mathbf{L}\|_2=
\|\beta\Delta\mathbf{V}+\Delta\mathbf{C}\|_2
\le a_1\|\Delta\mathbf{V}\|_2
+a_2\|\Delta\mathbf{V}\|_2^2 .
\label{eq:attention_comp}
\end{equation}


Because the trained input domain, propagation time, and spectral quantities are bounded, the exponential factor in Eq.~\eqref{eq:appendix_L_bound} can be upper-bounded by a finite constant $B_E$. Thus,
\begin{equation}
\|\mathrm{e}^{-\mathbf{L}'T_{\mathrm{prop}}}-\mathrm{e}^{-{\mathbf{L}T_{\mathrm{prop}}}}\|_2
\le
B_E T_{\mathrm{prop}}
\left(a_1\|\Delta\mathbf{V}\|_2
+a_2\|\Delta\mathbf{V}\|_2^2\right).
\end{equation}

Finally, the encoder, residual transformations, normalization layers, activation functions, and decoder are Lipschitz continuous over the bounded operating domain. Let their composite Lipschitz constant be $L_{\Phi}$. The resulting output representation perturbation therefore satisfies
\begin{equation}
\|\Delta\mathbf{Z}_{\mathrm{out}}\|_2
\le
K_1\|\Delta\mathbf{V}\|_2
+K_2\|\Delta\mathbf{V}\|_2^2 ,
\end{equation}
where $K_1=L_{\Phi}B_ET_{\mathrm{prop}}a_1$ and $K_2=L_{\Phi}B_ET_{\mathrm{prop}}a_2$. This proves that, under the bounded spectral radius, finite propagation time, and attention compensation assumptions, the output representation perturbation of AdDiff-GT admits a first- and second-order bound with respect to the adjacency perturbation.
\end{proof}


\begin{proof}[Proof of Corollary~\ref{cor:finite_step}]
Let $C(t)$ denote the number of connected clusters in the surviving graph $\mathcal{G}'_t$ at time $t$.  The initial post-damage topology contains $C(0)=m$ clusters, where $m\le|\mathcal{R}_t|$ is finite. 

For any two separated clusters $\mathcal{C}_a$ and $\mathcal{C}_b$, define their shortest inter-cluster distance as
\begin{equation}
d_{ab}(t)=\min_{i\in\mathcal{C}_a,j\in\mathcal{C}_b}
\|\mathbf{p}_{i,t}-\mathbf{p}_{j,t}\|_2.
\end{equation}
When $d_{ab}(t)>R_c$, no physical A2A link exists between the two clusters. During each effective recovery phase, assume that at least one separated cluster pair $(a,b)$ satisfying
\begin{equation}
d_{ab}(t+1)-d_{ab}(t)\le-\delta, \delta>0,
\label{eq:delta_merge}
\end{equation}
until the pair becomes connected. Let $t_0$ denote the beginning of the corresponding merging phase. Recursively applying Eq.~\eqref{eq:delta_merge}, we obtain
\begin{equation}
d_{ab}(t_0+\ell)
\le d_{ab}(t_0)-\ell\delta.
\end{equation}

Hence, for
\begin{equation}
\ell\ge
\left\lceil\frac{d_{ab}(t_0)-R_c}{\delta}\right\rceil
\end{equation}
we have $d_{ab}(t_0+\ell)\le R_c$. Therefore, at least one UAV pair from $\mathcal{C}_a$ and  $\mathcal{C}_b$ satisfies the communication radius constraint, establishing a physical A2A link and merging the two clusters. Consequently, $C(t)$ decreases by at least one after each effective merge. 

The candidate solution update strategy prioritizes solutions that reduce the number of connected clusters when $C(T_c)>1$, while rejecting solutions that destroy connectivity once $C(T_c)=1$. Under the stated effective merging condition, $C(t)$ is a non-increasing finite integer sequence. Starting from $m$ clusters, at most $m-1$ effective merges are required to obtain a connected topology.

Let $d_r(0)$ denote the shortest inter-cluster distance at the beginning of the $r$-th effective merging phase. The duration of this phase is bounded by  
\begin{equation}
\ell\le
\left\lceil\frac{d_r(0)-R_c}{\delta}\right\rceil.
\end{equation}
Thus, the total number of recovery steps satisfies
\begin{equation}
T_c
\le
\sum_{r=1}^{m-1}
\left\lceil
\frac{d_r(0)-R_c}{\delta}
\right\rceil
<\infty.
\end{equation}
Therefore, the proposed recovery process achieves finite-step connectivity recovery.
\end{proof}


\begin{proposition}[Completeness of Damaged-Node Inference]
If the historical topology $\mathcal{G}_t$ is connected,
every boundary-surviving UAV adjacent to a failed UAV in  $\mathcal{R}_{\mathrm{rep}}$ reliably generates a report, and at least one surviving UAV maintains a GCS backhaul link, then Algorithm~\ref{alg:damage_inference} exactly recovers the failed-node set,  i.e., $\mathcal{D}_{\mathrm{inf}}=\mathcal{V}_{\mathrm{fail}}$.
\end{proposition}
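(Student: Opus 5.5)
The plan is to prove the two inclusions $\mathcal{D}_{\mathrm{inf}}\subseteq\mathcal{V}_{\mathrm{fail}}$ (soundness) and $\mathcal{V}_{\mathrm{fail}}\subseteq\mathcal{D}_{\mathrm{inf}}$ (completeness) separately. I read the reliable-reporting hypothesis as the \emph{boundary-shielding property} stated after Algorithm~\ref{alg:damage_inference}: every $u\in\mathcal{R}_t$ with a failed historical neighbor lies in $\mathcal{R}_{\mathrm{rep}}$, and $\mathcal{D}_u^{\mathrm{rep}}$ is exactly $\mathcal{N}_{\mathcal{G}_t}(u)\cap\mathcal{V}_{\mathrm{fail}}$. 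In particular, $\mathcal{R}_{\mathrm{rep}}\subseteq\mathcal{R}_t$ and reports contain no false positives.

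\textbf{Soundness.} I argue by induction on the order in which nodes are inserted into $\mathcal{D}_{\mathrm{inf}}$.
\begin{itemize}
\item Initially inserted nodes belong to some $\mathcal{D}_u^{\mathrm{rep}}\subseteq\mathcal{V}_{\mathrm{fail}}$, so they are failed.
\item Suppose a node $n$ is inserted while processing a popped node $d$, and by the induction hypothesis $d\in\mathcal{V}_{\mathrm{fail}}$. Then $n$ is a historical neighbor of $d$ and $n\notin\mathcal{R}_{\mathrm{rep}}$.
\item If $n$ were surviving, it would be a surviving UAV adjacent to the failed node $d$. Boundary shielding would then force $n\in\mathcal{R}_{\mathrm{rep}}$, a contradiction. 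Hence $n\in\mathcal{V}_{\mathrm{fail}}$.
\end{itemize}
Since the algorithm only ever adds nodes, and each node enters $\mathcal{F}$ at most once, it also terminates.

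\textbf{Completeness.} If $\mathcal{V}_{\mathrm{fail}}=\emptyset$, both sets are empty and there is nothing to prove. Otherwise, fix $v\in\mathcal{V}_{\mathrm{fail}}$.
\begin{itemize}
\item The backhaul hypothesis guarantees $\mathcal{R}_t\neq\emptyset$. Since $\mathcal{G}_t$ is connected, there is a historical path $v=x_0,x_1,\dots,x_k$ ending at a surviving node.
\item Let $x_j$ be the first surviving node on this path, so $j\ge1$ and $x_0,\dots,x_{j-1}$ are all failed.
\item Because $x_{j-1}$ is failed and adjacent to the survivor $x_j$, boundary shielding gives $x_j\in\mathcal{R}_{\mathrm{rep}}$ and $x_{j-1}\in\mathcal{D}_{x_j}^{\mathrm{rep}}$. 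So $x_{j-1}$ is in the initial set $\mathcal{D}_{\mathrm{inf}}$ and in $\mathcal{F}$.
\item Now induct backwards along the path. Whenever $x_i$ with $i\le j-1$ is in $\mathcal{D}_{\mathrm{inf}}$, it is eventually popped from $\mathcal{F}$, because every inserted node is pushed and the loop runs until $\mathcal{F}$ is empty.
\item When $x_i$ is popped, its neighbor $x_{i-1}$ is failed, hence not in $\mathcal{R}_{\mathrm{rep}}\subseteq\mathcal{R}_t$. Either $x_{i-1}$ is already in $\mathcal{D}_{\mathrm{inf}}$, or it is inserted at that step.
\end{itemize}
This yields $x_0=v\in\mathcal{D}_{\mathrm{inf}}$.

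\textbf{Main obstacle.} The delicate step is completeness, namely ensuring that every failed connected component of $\mathcal{G}_t[\mathcal{V}_{\mathrm{fail}}]$ touches the observable boundary. This is exactly where connectivity of $\mathcal{G}_t$ and the existence of at least one surviving UAV (guaranteed by the backhaul hypothesis) are used. Without them, a damaged component containing no reporting neighbor, or the case where all UAVs fail, would be invisible to the BFS.

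A second subtlety is that the BFS guard excludes only $\mathcal{R}_{\mathrm{rep}}$, not all of $\mathcal{R}_t$. Soundness therefore relies on the fact that the only survivors adjacent to failed nodes are reporters, so non-reporting survivors are never reached from a failed node. I will state this reading of the reporting hypothesis explicitly at the start of the proof.
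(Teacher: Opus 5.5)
Your proposal is correct and follows the paper's proof essentially step by step: soundness is shown by induction on BFS insertions using boundary shielding, and completeness by tracing a historical path from a failed node to the surviving boundary and then propagating the BFS backwards along that path. Where you differ, you take the \emph{first} surviving node on an arbitrary path. The paper instead uses a shortest path to $\mathcal{R}_{\mathrm{GCS}}$ and asserts, without justification, that its interior nodes are all failed; your choice is a small but genuine tightening of the same argument.
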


\begin{proof}

First, we establish the boundary shielding property. Define the set of surviving UAVs capable of maintaining a GCS backhaul link as 
\begin{equation}
\mathcal{R}_{\mathrm{GCS}} = \{ u \in \mathcal{R}_t \mid u \text{ maintains a GCS backhaul link} \},
\end{equation}
and the reporting set as
\begin{equation}
\mathcal{R}_{\mathrm{rep}}
=\{u\in \mathcal{R}_t\mid \exists v\in\mathcal{V}_{\mathrm{fail}},(u,v)\in \mathcal{E}_t\}.
\end{equation}

By the reporting assumption, any surviving UAV in $\mathcal R_{\mathrm{GCS}}$ adjacent to a failed UAV triggers an event report. Therefore, if $x\in\mathcal{V}_{\mathrm{fail}}$ and $y\in \mathcal{R}_{\mathrm{GCS}} \setminus\mathcal{R}_{\mathrm{rep}}$, there cannot exist an edge $(x,y)\in \mathcal{E}_t$. Otherwise, $y$ would belong to $\mathcal{R}_{\mathrm{rep}}$ due to detecting the link disconnection with neighbor $x$, which contradicts $y\notin\mathcal{R}_{\mathrm{rep}}$. Thus, $\mathcal{R}_{\mathrm{rep}}$ forms the reported boundary between failed nodes and GCS-connected surviving nodes. 

Second, we prove that $\mathcal{D}_{\mathrm{inf}}\subseteq\mathcal{V}_{\mathrm{fail}}$. The initial set is 
\begin{equation}
\mathcal{D}_{\mathrm{inf}}^{(0)}
=\mathcal{D}_{\mathrm{rep}}
=\bigcup_{u\in\mathcal{R}_{\mathrm{rep}}}\mathcal{D}_{u}^{\mathrm{rep}},
\end{equation}
where each reported neighbor  belongs to $\mathcal{V}_{\mathrm{fail}}$. Assume that $\mathcal{D}_{\mathrm{inf}}^{(k)}\subseteq\mathcal{V}_{\mathrm{fail}}$ after the $k$-th BFS step. At  step $k+1$, the algorithm examines the historical neighbors $n$ of nodes $d\in\mathcal{D}_{\mathrm{inf}}^{(k)}$. The algorithm adds $n$ to the inferred set if and only if $n\notin\mathcal{D}_{\mathrm{inf}}^{(k)}$ and $n\notin\mathcal{R}_{\mathrm{rep}}$. If $n\in \mathcal{R}_t$, then  $d\in\mathcal{V}_{\mathrm{fail}}$ and $(d,n)\in \mathcal{E}_t$. By the boundary shielding property,  $n\in\mathcal{R}_{\mathrm{rep}}$, contradicting the expansion condition. Hence $n\in\mathcal{V}_{\mathrm{fail}}$. By induction, it follows that $\mathcal{D}_{\mathrm{inf}}\subseteq\mathcal{V}_{\mathrm{fail}}$.

Finally, we prove that $\mathcal{V}_{\mathrm{fail}}\subseteq\mathcal{D}_{\mathrm{inf}}$. 
For any $x\in\mathcal{V}_{\mathrm{fail}}$, since the historical topology $\mathcal{G}_t$ is connected and $\mathcal{R}_{\mathrm{GCS}}\neq \varnothing$, there must exist a shortest path from $x$ to $\mathcal{R}_{\mathrm{GCS}}$:  
\begin{equation}
P=(x=v_0,v_1,\ldots,v_{m-1},v_m=y),
\end{equation}
where $y\in\mathcal{R}_{\mathrm{GCS}}$ and $v_0,\ldots,v_{m-1}\in\mathcal{V}_{\mathrm{fail}}$.
Thus, $y$ is adjacent to the failed node $v_{m-1}$, so $y\in\mathcal{R}_{\mathrm{rep}}$ and $v_{m-1}\in\mathcal{D}_{\mathrm{rep}}$. Starting from $v_{m-1}$, the BFS expansion inversely traverses $v_{m-2},\ldots,v_0$. Since these nodes are failed and do not belong to $\mathcal{R}_{\mathrm{rep}}$, they satisfy the expansion condition. Here $x=v_0$ is eventually added to $\mathcal{D}_{\mathrm{inf}}$. Therefore, $\mathcal{V}_{\mathrm{fail}}\subseteq\mathcal{D}_{\mathrm{inf}}$. 

Combining the two inclusion relations gives $\mathcal{D}_{\mathrm{inf}}=\mathcal{V}_{\mathrm{fail}}$.

\end{proof}

This property shows that, under reliable boundary reporting, the GCS can reconstruct the post-damage topology from sparse event-triggered observations without global heartbeat polling or real-time position reporting. 

\begin{proposition}[Correctness and Complexity of BFS Connectivity Detection]
For an undirected RUAV graph $\mathcal{G}'_t=(\mathcal{R}_t,\mathcal{E}'_t)$ with  $\mathbf{L}_t=\mathbf{D}_t-\mathbf{A}_t$, Algorithm~\ref{alg:bfs_connectivity} correctly identifies all connected clusters by treating $[\mathbf{L}_t]_{uv}<0$ as the adjacency. Consequently, the returned cluster count $C_t$ is exact, the RUAV graph is connected if and only if $C_t=1$, and the algorithm has complexity  $\mathcal{O}(|\mathcal{R}_t|+|\mathcal{E}'_t|)$.    
\end{proposition}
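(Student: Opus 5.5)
The plan has three parts: show that the neighbor test in Algorithm~\ref{alg:bfs_connectivity} coincides with the edge set $\mathcal{E}'_t$; show that each outer-loop iteration explores exactly one connected cluster; and bound the work.

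\emph{Step 1 (adjacency from the Laplacian).} For $u\neq v$ we have $[\mathbf{L}_t]_{uv}=-a_{uv,t}\in\{0,-1\}$. Hence $[\mathbf{L}_t]_{uv}<0$ holds iff $(u,v)\in\mathcal{E}'_t$. On the diagonal, $[\mathbf{L}_t]_{uu}=\deg(u)\ge 0$, so no self-loop is ever produced. Since $\mathbf{A}_t$ is symmetric, the relation $\mathcal{N}_u=\{j\mid[\mathbf{L}_t]_{uj}<0\}$ is exactly the undirected neighborhood of $u$ in $\mathcal{G}'_t$.

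\emph{Step 2 (one cluster per increment).} Consider an outer iteration that starts at an unvisited node $i$ and increments $C_t$. I would prove two inclusions for the set $S_i$ of nodes marked during that inner BFS.

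First, $S_i$ is contained in the cluster of $i$. By induction on push order, every pushed node is joined to $i$ by a path of edges, because it is pushed only as a neighbor of an already-popped node.

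Second, the cluster of $i$ is contained in $S_i$. Suppose some $w$ in the cluster of $i$ is not marked. Take a path from $i$ to $w$ and let $(x,y)$ be the first edge on it with $x$ marked and $y$ unmarked. Every marked node is eventually popped, and when $x$ is popped, $y$ is scanned and pushed. The only escape is if $y$ was already marked, which contradicts the choice of $(x,y)$.

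We also need that nodes marked in earlier iterations cannot lie in the cluster of $i$. Earlier iterations exhausted exactly their own clusters, and $i$ was unvisited, so $i$ lies in none of them; the clusters are disjoint, so none of their nodes lies in $i$'s cluster. Consequently, the sets produced by the successive iterations are exactly the connected clusters $\mathcal{C}_1,\ldots,\mathcal{C}_{C_t}$. Each node is eventually visited by the outer loop, so the count $C_t$ is exact. It follows that $\mathcal{G}'_t$ is connected iff $C_t=1$, which is what $b_t=\mathbb{I}\{C_t=1\}$ reports.

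\emph{Step 3 (complexity).} Each node is marked at most once, so it is pushed and popped at most once, and the outer loop costs $\mathcal{O}(|\mathcal{R}_t|)$. When $u$ is popped, its neighbors are scanned, so the total scanning cost is $\sum_u\deg(u)=2|\mathcal{E}'_t|$. Together this gives $\mathcal{O}(|\mathcal{R}_t|+|\mathcal{E}'_t|)$.

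The main obstacle is this bound, because it hinges on a data-structure detail. If $\mathcal{N}_u$ is computed by scanning the full row of $\mathbf{L}_t$, each pop costs $|\mathcal{R}_t|$ and the total becomes $\mathcal{O}(|\mathcal{R}_t|^2)$. I would therefore state explicitly that $\mathbf{L}_t$ is stored in sparse form, e.g.\ CSR or adjacency lists of its negative off-diagonal entries, so that extracting $\mathcal{N}_u$ costs $\mathcal{O}(\deg(u))$. This representation can be built in linear time alongside $\mathcal{E}'_t$.
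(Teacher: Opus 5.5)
Your proposal is correct and follows essentially the same route as the paper's proof: it reads the adjacency off the negative off-diagonal entries of $\mathbf{L}_t$, argues that each BFS launched from an unvisited node marks exactly one connected cluster so that $C_t$ is exact, and charges each node once and each edge twice for the $\mathcal{O}(|\mathcal{R}_t|+|\mathcal{E}'_t|)$ bound. Your version is tighter than the paper's in two places --- you prove both inclusions and rule out interference from nodes marked in earlier outer iterations (the paper only argues that every node of the cluster is reached), and you rightly note that the linear bound requires a sparse adjacency-list storage of $\mathbf{L}_t$, since extracting $\mathcal{N}_u$ by dense row scans would cost $\mathcal{O}(|\mathcal{R}_t|^2)$, an assumption the paper's ``each edge is examined at most twice'' leaves implicit.
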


\begin{proof}
Let $\mathbf{A}_t$ denote the adjacency matrix of $\mathcal{G}'_t=(\mathcal{R}_t,\mathcal{E}'_t)$, and let $\mathbf{L}_t=\mathbf{D}_t-\mathbf{A}_t$ denote its Laplacian matrix. For an undirected graph, we have 
\begin{equation}
[\mathbf{L}_t]_{uv}=
\begin{cases}
d_{u,t}, & u=v,\\
-a_{uv,t}, & u\ne v .
\end{cases}
\end{equation}
Hence, for $u\ne v$, $[\mathbf{L}_t]_{uv}<0$ if and only if $a_{uv,t}>0$, meaning that $(u,v)\in\mathcal{E}'_t$. This demonstrates that the neighbor set recovered by Algorithm~\ref{alg:bfs_connectivity} from the negative off-diagonal elements of $\mathbf{L}_t$ is identical to the true edge set.

Consider a BFS initiated at an unvisited node $s$, and let $\mathcal{B}(s)$ denote the visited set. Furthermore, for any node $v$ in the same connected cluster as $s$, there exists a path
\begin{equation}
s=v_0,v_1,\ldots,v_q=v,\quad (v_{\ell},v_{\ell+1})\in\mathcal{E}'_t.
\end{equation}
BFS traverses this path successively, so $v$ is eventually visited. Thus, $\mathcal{B}(s)$ is exactly the connected cluster containing $s$.

The outer loop starts a new BFS only from an unvisited node. Hence, each true connected cluster is visited exactly once, and the returned $C_t$ equals the exact number of connected clusters. Therefore, $C_t=1$ if and only if $\mathcal{G}'_t$ is connected.

For complexity, each node is enqueued and dequeued at most once, and each edge is examined at most twice, once from each endpoint. Thus, the time complexity is $\mathcal{O}(|\mathcal{R}_t|+|\mathcal{E}'_t|)$, while the visited array and queue require $\mathcal{O}(|\mathcal{R}_t|)$ auxiliary space. 
\end{proof}

This result provides an exact graph-theoretic basis for connectivity decisions during online recovery, avoiding heuristic connectivity judgments.

\begin{proposition}[Protocol-Level Non-Transmission of Positional Fields]
Let the post-damage communication transcript be ${\mathcal{M}=(\mathcal{M}_{\mathrm{up}},\mathcal{M}_{\mathrm{down}})}$, where ${\mathcal{M}_{\mathrm{up}}=
\left(\mathcal{R}_{\mathrm{rep}},
\{\mathcal{D}_{u}^{\mathrm{rep}}\}_{u\in\mathcal{R}_{\mathrm{rep}}}\right)}$ represents the uplink ID-level reports, and ${\mathcal{M}_{\mathrm{down}}=
\left(\mathcal{D}_{\mathrm{inf}},\mathbf{\Theta}^{+}\right)}$ denotes the downlink inferred damage set and shared model 
parameters, with $\mathbf{\Theta}^{+}$ being the AdDiff-GT parameters shared among all surviving UAVs. If the communication protocol prohibits setting coordinates, trajectories, or per-UAV waypoint fields in uplink or downlink messages, $\mathcal{M}$ contains no explicit positional fields.
Thus, scenarios with identical ID-level reports and shared model parameters are indistinguishable in terms of explicit positional information in $\mathcal{M}$.
\end{proposition}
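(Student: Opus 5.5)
The plan is a direct structural argument over the message schema: first show that no explicit positional field appears in $\mathcal{M}$, then derive indistinguishability from the fact that $\mathcal{M}$ is a deterministic function of ID-level data and shared parameters. The first step is to fix the message alphabet. Each uplink element is either a UAV index $u\in\mathcal{R}_{\mathrm{rep}}\subseteq\mathcal{V}_t$ or a set $\mathcal{D}_u^{\mathrm{rep}}\subseteq\mathcal{V}_t$ of indices, so every uplink field takes values in the finite index set $\{1,\ldots,N\}$ or its power set. These are identifiers, not elements of $\mathbb{R}^3$. On the downlink, $\mathcal{D}_{\mathrm{inf}}$ is again a subset of $\mathcal{V}_t$. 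By Algorithm~\ref{alg:damage_inference}, it is computed solely from $\mathcal{G}_t$ and the reported ID sets, and it only ever inserts node indices.

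The second step concerns $\mathbf{\Theta}^{+}$. It is a single parameter vector of AdDiff-GT, common to all surviving UAVs: Algorithm~\ref{alg:airground} broadcasts one $\mathbf{\Theta}^{+}$ rather than per-UAV quantities. Its dimension and field layout are fixed by the architecture (projection matrices, $\mathbf{W}_{\mathrm{o}}$, BN and decoder weights), independently of $|\mathcal{R}_t|$ and of any UAV coordinate. So $\mathbf{\Theta}^{+}$ contains no field indexed by a UAV and typed as a coordinate, trajectory, or waypoint. Combining this with the protocol hypothesis, which forbids such fields in either direction, we conclude by inspecting each component that $\mathcal{M}$ carries no explicit positional field.

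The third step is indistinguishability. Consider two scenarios whose transcripts agree componentwise, i.e.\ the same $\mathcal{R}_{\mathrm{rep}}$, the same $\{\mathcal{D}_u^{\mathrm{rep}}\}$, the same $\mathcal{D}_{\mathrm{inf}}$, and the same $\mathbf{\Theta}^{+}$. Their transcripts are then identical as strings. Any explicit positional readout is, by definition, a projection onto a positional field, and we have just shown that no such field exists. Hence every explicit positional readout is undefined or constant on $\mathcal{M}$, and the two scenarios cannot be separated through it.

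The main obstacle is delimiting ``explicit'' from ``implicit'' information. $\mathbf{\Theta}^{+}$ is obtained by online tuning on the GCS-side virtual topology, so it may statistically encode geometry. I would handle this by stating that the claim is syntactic, concerning message fields only, and explicitly excluding inference attacks on $\mathbf{\Theta}^{+}$. The proof therefore rests on a formal definition of a positional field as a field typed in $\mathbb{R}^3$ (or in sequences of such values) that is associated with a UAV identifier.
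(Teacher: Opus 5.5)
Your proposal is correct and follows essentially the same route as the paper, which defines a positional-field projection $\Pi_{\mathrm{pos}}(\cdot)$, checks componentwise that $\Pi_{\mathrm{pos}}(\mathcal{M}_{\mathrm{up}})=\Pi_{\mathrm{pos}}(\mathcal{M}_{\mathrm{down}})=\emptyset$ (uplink entries and $\mathcal{D}_{\mathrm{inf}}$ are labels in $\{1,\ldots,N\}$, and $\mathbf{\Theta}^{+}$ is shared and not indexed by per-UAV positions or waypoints), and then observes that configurations producing identical reports, damage sets, and parameters yield identical transcripts. Your added points, namely the architecture-fixed shape of $\mathbf{\Theta}^{+}$ and the explicit exclusion of implicit leakage through the tuned parameters, correspond to the paper's closing caveat that the guarantee concerns only protocol-level fields.
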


\begin{proof}
Define the positional-field projection operator $\Pi_{\mathrm{pos}}(\cdot)$ to extract all explicit positional and trajectory-related fields, including coordinates, velocities, target positions, and per-UAV waypoints.  According to Algorithms~\ref{alg:damage_inference} and~\ref{alg:airground}, the post-disaster uplink message contains only reporting UAV IDs and disconnected-neighbor IDs:
\begin{equation}
\mathcal{M}_{\mathrm{up}}
=\left(\mathcal{R}_{\mathrm{rep}},
\{\mathcal{D}_{u}^{\mathrm{rep}}\}_{u\in\mathcal{R}_{\mathrm{rep}}}\right),
\end{equation} 
all of whose elements belong to the finite node-label set $\{1,\ldots,N\}$. Hence, $\Pi_{\mathrm{pos}}(\mathcal{M}_{\mathrm{up}})=\emptyset$. 

The post-damage downlink contains the inferred failed-node set and shared AdDiff-GT parameters:
\begin{equation}
\mathcal{M}_{\mathrm{down}}
=\left(\mathcal{D}_{\mathrm{inf}},\mathbf{\Theta}^{+}\right),
\end{equation}
where $\mathcal{D}_{\mathrm{inf}}\subseteq\{1,\ldots,N\}$ is an ID set, while $\mathbf{\Theta}^{+}$ consists of shared model parameters and is not indexed by individual UAV positions, waypoints, or trajectories. Consequently, by the definition of protocol fields, we have
\begin{equation}
\Pi_{\mathrm{pos}}(\mathcal{M}_{\mathrm{down}})=\emptyset.
\end{equation}
It follows that
\begin{equation}
\Pi_{\mathrm{pos}}(\mathcal{M})
=\Pi_{\mathrm{pos}}(\mathcal{M}_{\mathrm{up}}\cup\mathcal{M}_{\mathrm{down}})
=\emptyset .
\end{equation}

Furthermore, consider two post-damage position configurations $\mathbf{P}$ and $\mathbf{P}^{\dagger}$ that generate identical event-triggered reports, inferred damage sets, and shared parameters. By the message-generation rules,
\begin{equation}
\mathcal{M}(\mathbf{P})=\mathcal{M}(\mathbf{P}^{\dagger}).
\end{equation}

Thus, the two configurations are indistinguishable at the protocol-message layer with respect to explicit positional fields. In particular, field decoding of the intercepted transcript does not directly reveal real-time UAV coordinates, target positions, or trajectories.
\end{proof}

This proposition establishes the protocol-level information boundary of the proposed air-ground architecture: passive monitoring of the LDTL alone does not expose explicit real-time coordinates or trajectory fields. This guarantee is limited to protocol-level communication leakage and does not preclude positional inference from external sensing or historical side information. 

\begin{proposition}[Discretization Consistency of Advection-Diffusion]
Within a single propagation layer, if the hybrid graph, attention coupling matrix $\mathbf{C}$, and local structure operator $\mathbf{V}$ are fixed, let the joint propagation operator be defined as  
$\mathcal{P}=\mathbf{C}+\beta\mathbf{V}$.
The continuous dynamics in Eq.~\eqref{eq:advection_diffusion} then form a locally linearized system  with the matrix-exponential expansion expressed as the infinite series of propagation bases $\{\mathcal{P}^{k}\}_{k=0}^{\infty}$. If $\mathcal{P}$ has a bounded spectral radius, this matrix-power series converges, and a finite-order truncation $\{\mathcal{P}^{k}\}_{k=0}^{K_{\text{order}}}$ provides an increasingly accurate approximation as $K_{\mathrm{order}}$ increases under the convergence condition. The multi-head propagation of AdDiff-GT in Eqs.~\eqref{eq:series_stack}--\eqref{eq:series_forward} constructs these multi-order propagation bases for each attention head and adaptively fuses them through a learnable linear mapping, thereby implementing a learnable finite-order polynomial approximation of the layer-wise advection–diffusion evolution.
\end{proposition}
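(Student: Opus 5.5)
The plan has three steps: write the layer dynamics as a matrix exponential, expand it as a power series in $\mathcal{P}$, and then read the AdDiff-GT layer as a truncation of that series with learned coefficients.

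\textbf{Step 1 (closed form and expansion).} With $\mathbf{C}$ and $\mathbf{V}$ frozen within the layer, Eq.~\eqref{eq:advection_diffusion} is a linear time-invariant ODE $\dot{\mathbf{Z}}=(\mathcal{P}-\mathbf{I})\mathbf{Z}$. Its unique solution is Eq.~\eqref{eq:matrix_exp_solution}. Since $\mathbf{I}$ commutes with $\mathcal{P}$, the exponential factorizes, and I will use
\begin{equation*}
\mathrm{e}^{(\mathcal{P}-\mathbf{I})T_{\mathrm{prop}}}=\mathrm{e}^{-T_{\mathrm{prop}}}\sum_{k=0}^{\infty}\frac{T_{\mathrm{prop}}^{k}}{k!}\mathcal{P}^{k}.
\end{equation*}
This identity expresses the evolution exactly as a combination of the propagation bases $\{\mathcal{P}^k\}_{k\ge0}$ with scalar coefficients $\mathrm{e}^{-T}T^k/k!$, which are shared across all feature channels.

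\textbf{Step 2 (convergence and truncation error).} The spectral-radius hypothesis alone does not bound $\|\mathcal{P}^k\|$ uniformly. I will handle this with Gelfand's formula: for any $\epsilon>0$ there is $M_\epsilon$ with $\|\mathcal{P}^k\|_2\le M_\epsilon(\rho(\mathcal{P})+\epsilon)^k$. Substituting this into the series gives absolute convergence, since it is dominated by $M_\epsilon\,\mathrm{e}^{(\rho+\epsilon)T}$.

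A concrete bound is also available directly. Each $\mathbf{C}_h$ is row-stochastic, so $\|\mathbf{C}_h\|_\infty=1$, and $\mathbf{V}$ is symmetric normalized, so $\|\mathbf{V}\|_2\le1$. Hence $\|\mathcal{P}\|\le1+\beta\sqrt{n}$ in the spectral norm, which is finite.

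The tail after order $K_{\mathrm{order}}$ is then bounded by $M_\epsilon\,\mathrm{e}^{-T}\sum_{k>K}\frac{(rT)^k}{k!}$ with $r=\rho+\epsilon$. This is the remainder of an exponential series, so it is at most $M_\epsilon\,\mathrm{e}^{(r-1)T}\frac{(rT)^{K+1}}{(K+1)!}$. It tends to zero monotonically once $K+1>rT$, which gives the claimed "increasingly accurate" statement.

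\textbf{Step 3 (realization by the layer).} Next I show that Eq.~\eqref{eq:series_forward} contains the truncated series.
\begin{itemize}
\item Choose $\mathbf{W}_{\mathrm{o}}$ as a block matrix whose block acting on head $h$, order $k$ is $H w_{h}\,\mathrm{e}^{-T}\frac{T^k}{k!}\mathbf{I}$, where the weights $w_h$ sum to one.
\item This choice reproduces $\sum_h w_h\sum_{k\le K}\frac{\mathrm{e}^{-T}T^k}{k!}\mathcal{P}_h^k\mathbf{Z}$, i.e., a convex combination of per-head truncated exponentials.
\item For a single head, or identical heads, this is exactly the truncation from Step 2.
\item More general $\mathbf{W}_{\mathrm{o}}$ yields channel-mixing polynomials in $\mathcal{P}_h$, so the truncated exponential is one member of the learnable family.
\end{itemize}
This establishes that the layer implements a learnable finite-order polynomial approximation.

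The main obstacle is how to interpret the multi-head case. Each head has its own $\mathcal{P}_h$, while the statement speaks of a single $\mathcal{P}$. I will read the statement per head, or with $\mathcal{P}$ taken as a single representative operator, and show that the weighted head fusion is a convex combination of per-head exponential approximations. I will not claim equality with the exponential of the averaged operator, since the $\mathcal{P}_h$ need not commute.

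A secondary caution concerns the paper's value $T_{\mathrm{prop}}=3000$. There the error bound only becomes small for $K\gg rT$, so with $K_{\mathrm{order}}=2$ the approximation claim is qualitative. I will state the convergence result in the asymptotic-in-$K$ sense that the proposition asserts.
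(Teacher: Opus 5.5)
Your proposal is correct, and its core matches the last two parts of the paper's proof. The paper also factors $\mathrm{e}^{-(\mathbf{I}-\mathcal{P})T_{\mathrm{prop}}}=\mathrm{e}^{-T_{\mathrm{prop}}}\sum_{k}\frac{T_{\mathrm{prop}}^{k}}{k!}\mathcal{P}^{k}$, bounds the tail by an exponential-series remainder, and reads $\mathbf{W}_{\mathrm{o}}$ as replacing the fixed Taylor coefficients with learned ones. Where you differ:

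(i) The paper first derives Eq.~\eqref{eq:advection_diffusion} itself from the manifold PDE. It samples the field at UAV and supernode locations and shows that the graph divergence of the diffusion flux $c_{ij}(\mathbf{z}_j-\mathbf{z}_i)$ equals $((\mathbf{C}-\mathbf{I})\mathbf{Z})_i$ by row-stochasticity, while the advection flux $v_{ij}\mathbf{z}_j$ yields $(\mathbf{V}\mathbf{Z})_i$. You take that equation as given. This is legitimate, since the body of the statement starts from it, but that derivation is what justifies the word \emph{discretization} in the title.

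(ii) For convergence, the paper simply assumes $\|\mathcal{P}\|_2\le B_P$, silently swapping the stated spectral-radius hypothesis for a norm bound. In return it gets the explicit, matrix-independent remainder $\mathrm{e}^{-T_{\mathrm{prop}}}\sum_{k>K}(T_{\mathrm{prop}}B_P)^k/k!$. Your Gelfand argument honors the hypothesis as stated, with a rate governed by $\rho(\mathcal{P})$, but the constant $M_\epsilon$ depends on the matrix and is not uniform over inputs. In finite dimensions the series converges for every matrix anyway, so the hypothesis only matters for uniform quantitative control, which your concrete norm bound supplies.

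(iii) Your Step~3 is sharper than the paper, which only asserts that $\mathbf{W}_{\mathrm{o}}$ learns the order and head contributions. You exhibit a $\mathbf{W}_{\mathrm{o}}$ that reproduces the truncated series. You also correctly note that several heads give a convex combination of per-head truncations, not the truncation for an averaged operator, because the cross terms $\mathcal{P}_h\mathcal{P}_{h'}$ are absent. The paper glosses over this, as it does your caveat about $T_{\mathrm{prop}}=3000$ with $K_{\mathrm{order}}=2$.

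One small slip: from $\|\mathbf{C}_h\|_\infty=1$ and $\|\mathbf{V}\|_2\le 1$, the spectral-norm bound is $\|\mathcal{P}\|_2\le\sqrt{n}+\beta$. Your $1+\beta\sqrt{n}$ bounds $\|\mathcal{P}\|_\infty$, and hence $\rho(\mathcal{P})$, which serves equally well.
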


\begin{proof}
The proof establishes the correspondence between continuous advection-diffusion dynamics and the discrete propagation mechanism of AdDiff-GT through four components: field sampling, graph discretization, matrix-exponential propagation, and finite-order approximation.

\subsection{Field Sampling on the Post-Damage Flight Manifold}
Consider the constrained 3D flight airspace $\Omega\subset\mathbb{R}^3$ as a physical manifold with boundaries. Before the damage event, the UAV positions $\{\mathbf{p}_i\}_{i\in\mathcal{V}_t}$ form discrete sampling points on $\Omega$. After UED-induced damage, failed nodes are removed, leaving the surviving sampling points $\{\mathbf{p}_i\}_{i\in\mathcal{R}_t}$. Let $\varphi(m,t)$ denote the continuous repair state field. The field observation at a surviving UAV is represented as
\begin{equation}
\mathbf{z}_i(t)=\psi(\varphi(\mathbf{p}_{i,t},t)),\quad i\in\mathcal{R}_t,
\end{equation}
where $\psi(\cdot)$ denotes the feature embedding based on geometric, cluster-level, and local topological information. 

For the $c$-th partitioned cluster with centroid $\mathbf{s}_{c,t}$, let $s_c$ denote the corresponding supernode. The supernode representation is sampled as
\begin{equation}
\mathbf{z}_{s_c}(t)=\psi(\varphi(\mathbf{s}_{c,t},t)).
\end{equation}
Therefore, the hybrid graph representation
\begin{equation}
\mathbf{Z}(t)=\left[\mathbf{z}_u(t)\right]_{u\in\mathcal{V}_t^h}
\end{equation}
can be interpreted as discrete observations of the continuous state field $\varphi(m,t)$ at surviving UAV locations and cluster-level sampling points. UAV nodes correspond to executable maneuvering entities, whereas supernodes represent aggregated cluster-level states and do not generate movement commands. The final displacement $\Delta\mathbf{p}_i$ is decoded from $\mathbf{z}_i(T)$ as a tangent-space vector in a discrete approximation of $T_{\mathbf{p}_{i,t}}\Omega$.

\subsection{Discretization of Manifold Operators on the Hybrid Graph}
On the continuous manifold, the gradient $\nabla \varphi$ characterizes local state variation, while its adjoint $\nabla ^{\ast}$ represents the corresponding divergence. For the hybrid graph $\mathcal{G}_t^h=(\mathcal{V}_t^h,\mathcal{E}_t^h)$, the discrete graph gradient is defined as 
\begin{equation}
(\nabla_{\mathcal{G}}\mathbf{Z})_{ij}=\mathbf{z}_j-\mathbf{z}_i,
\quad (i,j)\in\mathcal{E}_t^h,
\end{equation}
and the graph divergence is defined as 
\begin{equation}
(\nabla_{\mathcal{G}} ^{\ast} \mathbf{E})_i
=\sum_{j:(i,j)\in\mathcal{E}_t^h}\mathbf{e}_{ij}.
\end{equation}
These operators provide discrete counterparts of local state variation and net flux divergence. When graph edges represent local or structurally meaningful neighborhoods, they provide finite-difference approximations of the corresponding manifold operators. In AdDiff-GT, physical intra-cluster edges, virtual inter-cluster edges, and UAV-supernode edges jointly define the discrete neighborhood structure.

For the diffusion term, let $\mathbf{C}=[c_{ij}]$ denote the non-local diffusion coupling matrix generated by global attention, where $c_{ij}\ge0$ and  $\sum_j c_{ij}=1$. The discrete diffusion flux is defined as 
\begin{equation}
\mathbf{e}_{ij}^{\mathrm{dif}}=c_{ij}(\mathbf{z}_j-\mathbf{z}_i).
\end{equation}
Applying the graph divergence gives
\begin{equation}
\begin{aligned}
\left(\nabla_{\mathcal{G}} ^{\ast} \mathbf{E}^{\mathrm{dif}}\right)_i
&=\sum_j c_{ij}(\mathbf{z}_j-\mathbf{z}_i)\\
&=\sum_j c_{ij}\mathbf{z}_j-\left(\sum_j c_{ij}\right)\mathbf{z}_i\\
&=(\mathbf{C}\mathbf{Z})_i-\mathbf{z}_i .
\end{aligned}
\end{equation}
Hence, the non-local diffusion term can be expressed in matrix form as
\begin{equation}
\nabla_{\mathcal{G}} ^{\ast} \!\left(\mathbf{C}\odot\nabla_{\mathcal{G}}\mathbf{Z}\right)
=(\mathbf{C}-\mathbf{I})\mathbf{Z}.
\label{eq:appendix_diffusion_discrete}
\end{equation}

In AdDiff-GT, the attention coupling matrix $\mathbf{C}$ serves as the discrete non-local counterpart of the continuous diffusion coefficient. Unlike purely local graph propagation, the global attention mechanism can establish information coupling between UAVs and supernodes belonging to different clusters.

For the advection term, the continuous transport operator describes directional state propagation along a velocity field. In the hybrid graph, the normalized adjacency
\begin{equation}
\mathbf{V}=\mathbf{D_t}^{-1/2}\mathbf{A}_t^{h}\mathbf{D_t}^{-1/2},
\end{equation}
serves as a graph-level surrogate for directional transport along the observable topology. Let
$\mathbf{e}_{ij}^{\mathrm{adv}}=v_{ij}\mathbf{z}_j$
denote the advection edge flux, where $v_{ij}$ is an element of $\mathbf{V}$. Then, 
\begin{equation}
\left(\nabla_{\mathcal{G}} ^{\ast} \mathbf{E}^{\mathrm{adv}}\right)_i
=\sum_j v_{ij}\mathbf{z}_j
=(\mathbf{V}\mathbf{Z})_i .
\label{eq:appendix_advection_discrete}
\end{equation}

Therefore, the advection operator propagates information along physical and virtual graph structures, preserving local structural relationships while providing directional coupling across inter-cluster links.

Consider the continuous advection-diffusion equation
\begin{equation}
\frac{\partial \varphi}{\partial t}
=\nabla ^{\ast} (S\odot\nabla \varphi)
+\beta\nabla ^{\ast} (V\varphi),
\end{equation}
where $S$ denotes the continuous diffusion coefficient and $V$ represents the directional transport field. Under graph discretization, $\mathbf{C}$ serves as the discrete counterpart of $S$, while the normalized hybrid adjacency provides the discrete transport operator.  Combining Eqs.
\eqref{eq:appendix_diffusion_discrete} and   \eqref{eq:appendix_advection_discrete} yields
\begin{equation}\label{graph-level advection-diffusion dynamics}
\frac{\partial\mathbf{Z}(t)}{\partial t}
=
(\mathbf{C}-\mathbf{I})\mathbf{Z}(t)
+\beta\mathbf{V}\mathbf{Z}(t)
=
\left(\mathbf{C}+\beta\mathbf{V}-\mathbf{I}\right)\mathbf{Z}(t),
\end{equation}
which gives the graph-level advection-diffusion dynamics in Eq.~\eqref{eq:advection_diffusion}
used in AdDiff-GT. 

\subsection{Matrix-Exponential Graph Propagation}
Within a single propagation layer, assuming that the hybrid graph, 
$\mathbf{C}$ and $\mathbf{V}$ remain fixed, and define 
\begin{equation}
\mathcal{P}=\mathbf{C}+\beta\mathbf{V},\quad\mathbf{L}= \mathbf{I}-\mathcal{P}=\mathbf{I}-\mathbf{C}-\beta\mathbf{V}.
\end{equation}
Then Eq.~\eqref{graph-level advection-diffusion dynamics} becomes 
\begin{equation}
\frac{\partial\mathbf{Z}(t)}{\partial t}
=-\mathbf{L}\mathbf{Z}(t).
\end{equation}
Integrating over time yields:
\begin{equation}
\mathbf{Z}(t)
=\mathbf{Z}(0)+\int_0^t-\mathbf{L}\mathbf{Z}(\tau)d\tau .
\end{equation}

The solution of this linear system is
\begin{equation}
\mathbf{Z}(T_{\mathrm{prop}})
=\mathrm{e}^{-\mathbf{L}T_{\mathrm{prop}}}\mathbf{Z}(0)
=\mathrm{e}^{-(\mathbf{I}-\mathbf{C}-\beta\mathbf{V})T_{\mathrm{prop}}}\mathbf{Z}(0).
\label{eq:appendix_matrix_solution}
\end{equation}

This result shows that the propagation mechanism of AdDiff-GT can be interpreted as an approximation to the state evolution of the graph-level advection-diffusion system rather than as a simple single-step neighbor-aggregation operation.

\subsection{Finite-Order Multi-Head Approximation}
Directly computing the matrix exponential in Eq.~\eqref{eq:appendix_matrix_solution} typically requires eigendecomposition or other computationally intensive matrix operations, which can be computationally demanding for online recovery in large-scale UAV swarms. To improve scalability, AdDiff-GT approximates the matrix-exponential propagation using a finite-order polynomial expansion of the propagation operator.

Since $\mathbf{L}=\mathbf{I}-\mathcal{P}$ and $\mathbf{I}$ commutes with $\mathcal{P}$, Eq.~\eqref{eq:appendix_matrix_solution} can be written as 
\begin{equation}
\mathrm{e}^{-\mathbf{L}T_{\mathrm{prop}}}
=\mathrm{e}^{-(\mathbf{I}-\mathcal{P})T_{\mathrm{prop}}}
=\mathrm{e}^{-T_{\mathrm{prop}}}\sum_{k=0}^{\infty}\frac{T_{\mathrm{prop}}^k}{k!}\mathcal{P}^k .
\label{eq:appendix_taylor_powers}
\end{equation}

For $\|\mathcal{P}\|_2\le B_P$, the truncation remainder after the $K$-th order satisfies  
\begin{equation}
\begin{split}
&\left\|
\mathrm{e}^{-\mathbf{L}T_{\mathrm{prop}}}
-\mathrm{e}^{-T_{\mathrm{prop}}}\sum_{k=0}^{K}\frac{T_{\mathrm{prop}}^k}{k!}\mathcal{P}^k
\right\|_2 \\
&\qquad\le
\mathrm{e}^{-T_{\mathrm{prop}}}\sum_{k=K+1}^{\infty}\frac{(T_{\mathrm{prop}}B_P)^k}{k!}
\to 0.
\end{split}
\end{equation}
Thus, the matrix-exponential propagation can be approximated using the finite propagation bases $\{\mathcal{P}^k\mathbf{Z}(0)\}_{k=0}^{K_{\text{order}}}$. 
These bases encode propagation information at different graph orders and avoid explicitly computing the matrix exponential.




AdDiff-GT further generalizes the fixed coefficients of the matrix-exponential expansion through learnable multi-head propagation. For the $h$-th attention head, define
\begin{equation}
{\mathcal{P}_h=\mathbf{C}_h+\beta\mathbf{V}.}
\end{equation}
The propagation bases are recursively generated as
\begin{equation}
\mathbf{Z}_{h}^{(0)}=\mathbf{Z}^{(\ell)},\quad
\mathbf{Z}_{h}^{(k)}=\mathcal{P}_h\mathbf{Z}_{h}^{(k-1)},\quad k=1,\ldots,K_{\mathrm{order}}.
\end{equation}
The representations from all propagation orders and attention heads are then concatenated and fused through a learnable linear mapping:
\begin{equation}
\widetilde{\mathbf{Z}}^{(\ell+1)}
=\frac{1}{H}\mathbf{W}_{\mathrm{o}}
\operatorname{Concat}_{h=1}^{H}
\left(\left[\mathbf{Z}_{h}^{(0)},\mathbf{Z}_{h}^{(1)},\ldots,
\mathbf{Z}_{h}^{(K_{\mathrm{order}})}\right]\right).
\end{equation}

Unlike the fixed Taylor coefficients in Eq.~\eqref{eq:appendix_taylor_powers}, the output mapping $\mathbf{W}_{\mathrm{o}}$ learns the contributions of different propagation orders and attention heads. Hence, AdDiff-GT implements a learnable finite-order polynomial approximation of the matrix-exponential advection-diffusion propagation. The factor $1/H$ preserves the averaging convention used in the multi-head aggregation. 

Consequently, Eqs.~\eqref{eq:series_stack}--\eqref{eq:learnable_series} can be interpreted as a finite multi-order approximation to the matrix-exponential solution, with the propagation coefficients learned from recovery data. 
\end{proof}

This result establishes the correspondence between the continuous advection-diffusion dynamics and the discrete propagation mechanism of AdDiff-GT when the attention coupling is fixed within a propagation layer. When $\mathbf{C}$ varies with $\mathbf{Z}(t)$, the system becomes dependent on the current state, 
and the closed-form solution in Eq.~\eqref{eq:appendix_matrix_solution} for a fixed operator no longer applies directly. In this case, the proposed propagation is interpreted as a layer-wise local linearization of the state-dependent advection-diffusion dynamics.

\begin{proposition}[SCD as the Minimum-Bottleneck Connectivity Radius]
Let $\mathcal{G}'_{\omega}=(\mathcal{R}_t,\mathcal{E}_{\omega},\mathbf{X}'_t,\mathbf{W})$ be a complete weighted graph of surviving UAVs, where $\mathcal{E}_{\omega}=\{(i,j)\mid i,j\in\mathcal{R}_t,\ i<j\}$ and the edge weight is $d_{ij}=\|\mathbf{p}_{i,t}-\mathbf{p}_{j,t}\|_2$.  Let $\mathcal{T}$  denote the set of all spanning trees of $\mathcal{G}'_{\omega}$. Then, the SCD returned by Algorithm~\ref{alg:scd} satisfies   
\begin{equation}
d_{\mathrm{SCD}}
=\min_{T\in\mathcal{T}}\max_{(i,j)\in T}d_{ij}.
\end{equation}
That is, $d_{\mathrm{SCD}}$ is the minimum communication radius for which the corresponding distance-threshold graph becomes connected. Equivalently, it is the minimum bottleneck value over all spanning trees, and hence equals the maximum edge weight of a minimum spanning tree.
\end{proposition}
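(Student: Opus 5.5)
We prove the statement in three steps: identify what Algorithm~\ref{alg:scd} returns, prove a min-max equality by two inequalities, and link that equality to minimum spanning trees. Let $\mathcal{G}(d)$ denote the threshold graph on $\mathcal{R}_t$ with edge set $\{(i,j)\mid d_{ij}\le d\}$, and let $b^\star=\min_{T\in\mathcal{T}}\max_{(i,j)\in T}d_{ij}$.

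\textbf{Step 1 (what the algorithm returns).} Connectivity of $\mathcal{G}(d)$ is monotone in $d$, since increasing $d$ only adds edges. Its edge set changes only when $d$ crosses a pairwise distance $d_{ij}$. By the Proposition on correctness of BFS connectivity detection, the test $b=1$ inside Algorithm~\ref{alg:scd} is exact. The algorithm scans the sorted candidates $\{d_{ij}\}$ with $d_{ij}\ge R_c$ in ascending order. It therefore returns the smallest such candidate $d$ for which $\mathcal{G}(d)$ is connected. This value equals $\min\{d\ge R_c\mid C(\mathbf{P};d)=1\}$, because the minimum over the continuum is attained at a pairwise distance or at $R_c$.

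\textbf{Caveat on $R_c$.} The restriction $d\ge R_c$ needs care. The claimed identity with $b^\star$ holds exactly whenever the RUAV graph is initially disconnected, which is the setting in which SCD is invoked ($C_t>1$). In that case no threshold $d<R_c$ can connect the graph, so $b^\star>R_c$ and the constraint is inactive. I will state this assumption explicitly. Otherwise the general formula is $\max(R_c,b^\star)$.

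\textbf{Step 2 (min-max equality).} First, $b^\star\le d_{\mathrm{SCD}}$. Since $\mathcal{G}(d_{\mathrm{SCD}})$ is connected, it contains a spanning tree $T$ (for example a BFS tree). Every edge of $T$ has $d_{ij}\le d_{\mathrm{SCD}}$, so $\max_{(i,j)\in T}d_{ij}\le d_{\mathrm{SCD}}$. Conversely, $d_{\mathrm{SCD}}\le b^\star$. Take a minimizing tree $T^\star$. All its edges lie in $\mathcal{G}(b^\star)$, so $\mathcal{G}(b^\star)$ is connected. Moreover $b^\star$ is itself a pairwise distance, and by the caveat it is at least $R_c$, so $b^\star$ is a candidate. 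The minimality from Step 1 then gives $d_{\mathrm{SCD}}\le b^\star$.

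\textbf{Step 3 (MST equivalence).} It remains to show that any minimum spanning tree $T_{\mathrm{MST}}$ has maximum edge equal to $b^\star$. Suppose instead that some edge $e=(u,v)\in T_{\mathrm{MST}}$ has $d_e>b^\star$. Removing $e$ splits $T_{\mathrm{MST}}$ into two components. The graph $\mathcal{G}(b^\star)$ is connected, so it contains an edge $f$ crossing this cut, and $f$ satisfies $d_f\le b^\star<d_e$. Then $T_{\mathrm{MST}}-e+f$ is a spanning tree of strictly smaller total weight, contradicting minimality. Hence $\max_{e\in T_{\mathrm{MST}}}d_e\le b^\star$. The reverse inequality holds by the definition of $b^\star$ as a minimum over all spanning trees.

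\textbf{Main obstacle.} The graph-theoretic content is standard: the cut-exchange argument in Step 3 is the only nontrivial step. The main subtlety is the boundary condition $d\ge R_c$, which the Proposition as stated leaves implicit.
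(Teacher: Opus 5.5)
Your proof is correct and follows the paper's route: monotone threshold graphs $\mathcal{G}(d)$, the two spanning-tree inequalities, the link to an MST, and the ascending scan over pairwise distances. Your cut-exchange argument in Step~3 supplies the justification the paper skips when it asserts that a spanning tree with smaller bottleneck would ``contradict the minimum spanning tree property''; minimality of total weight does not give this without an exchange step.

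Your $C_t>1$ caveat likewise repairs a real omission. The paper's proof silently rewrites $d_{\mathrm{SCD}}$ as $\min\{d\mid\mathcal{G}(d)\text{ connected}\}$ and drops the constraint $d\ge R_c$. One small correction: for an initially connected graph, Algorithm~\ref{alg:scd} returns the smallest pairwise distance that is at least $R_c$ (if any exists), not $\max(R_c,b^\star)$. So your Step~1 identification of the algorithm's output with $\min\{d\ge R_c\mid C(\mathbf{P};d)=1\}$ holds only under that assumption.
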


\begin{proof}
Let $\mathcal{G}'_{\omega}=(\mathcal{R}_t,\mathcal{E}_{\omega},\mathbf{X}'_t,\mathbf{W})$  be the complete weighted graph defined above. For any communication radius $d$, define the corresponding distance threshold graph as $\mathcal{G}(d)=(\mathcal{R}_t,\mathcal{E}(d))$, where $\mathcal{E}(d)=\{(i,j)\in \mathcal{E}_{\omega}, d_{ij} \le d\}$.
If $d_1\le d_2$, then $\mathcal{E}(d_1)\subseteq\mathcal{E}(d_2)$, so connectivity is monotone with respect to the threshold $d$. 

First, consider the relationship between threshold graph connectivity and spanning tree bottleneck value. 
If $\mathcal{G}(d)$ is connected, it contains a spanning tree $T_d$. Hence,
\begin{equation}
\min_{T\in\mathcal{T}}\max_{(i,j)\in T}d_{ij}
\le
\max_{(i,j)\in T_d} d_{ij}
\le d.
\end{equation}

Conversely, if there exists a spanning tree $T$ satisfying $\max_{(i,j)\in T}d_{i,j}\le d$, then all edges of $T$ belong to $\mathcal{G}(d)$. Thus, $\mathcal{G}(d)$ contains at least one spanning tree and must be connected. Consequently, 
\begin{equation}
d_{\mathrm{SCD}}=\min\{d\mid \mathcal{G}(d)\ \text{is connected}\}
=\min_{T\in\mathcal{T}}\max_{(i,j)\in T}d_{ij}.
\end{equation}

Next, let $T_{\mathrm{MST}}$ be any minimum spanning tree and define $d^\star = \max_{(i,j)\in T_{\mathrm{MST}}} d_{ij}$. By Kruskal's algorithm, edges are processed in nondecreasing order of weight. For any $d<d^\star$, the edges with weight at most $d$ cannot contain a spanning tree. Otherwise, a spanning tree with maximum edge weight smaller than $d^\star$ would exist, contradicting the minimum spanning tree property. Hence, $\mathcal{G}(d)$ is disconnected for $d<d^\star$. At $d=d^\star$, all edges of $T_{\mathrm{MST}}$ are contained in $\mathcal{G}(d^\star)$, so  $\mathcal{G}(d^\star)$ is connected. Thus, 
\begin{equation}
d_{\mathrm{SCD}} = d^\star
=\max_{(i,j)\in T_{\mathrm{MST}}}d_{ij}
=\min_{T\in\mathcal{T}}\max_{(i,j)\in T}d_{ij}.
\end{equation}

Algorithm~\ref{alg:scd} sorts the candidate pairwise distances in ascending order and activates the corresponding edges sequentially. It returns the first distance at which BFS identifies a single connected cluster. Since the connectivity threshold can occur only at an inter-UAV distance, the returned value is exactly $d_{\mathrm{SCD}}$. 
\end{proof}

Therefore, $T_m=d_{\mathrm{SCD}}-R_c$ quantifies the additional spatial scale relative to the physical communication radius $R_c$ that must be compensated for to restore connectivity. A larger $d_{\mathrm{SCD}}$ indicates a larger connectivity gap in the current partitioned topology. Thus, $T_m$ provides a topology-aware reference for configuring the displacement search range of the online recovery algorithm.

\begin{proposition}[Exponential Perturbation Bound for Purely Local Diffusion]
If graph propagation reduces to purely local diffusion, where the propagation coupling is provided solely by the normalized adjacency operator without global attention compensation, then, under the same bounded spectral conditions, the perturbation induced deviation in the output node representations satisfies
\begin{equation}
\|\Delta\mathbf{Z}_{\mathrm{out}}\|_2
=\mathcal{O}\!\left(
\|\Delta\mathbf{V}\|_2
\mathrm{e}^{T_{\mathrm{prop}}\|\Delta\mathbf{V}\|_2}\right)
,
\end{equation}
where $\mathcal{O}(\cdot)$ characterizes the asymptotic order of the representation perturbation with respect to $\|\Delta\mathbf{V}\|_2$.
\end{proposition}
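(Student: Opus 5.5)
The plan is to reuse the matrix-exponential perturbation estimate \eqref{eq:appendix_exp_bound} from the proof of Proposition~\ref{prop:boundedness}, specialized to $\mathbf{C}\equiv 0$ (or $\Delta\mathbf{C}=0$). In that setting the propagation generator is $\mathbf{L}=\mathbf{I}-\beta\mathbf{V}$, and the perturbed generator is $\mathbf{L}'=\mathbf{I}-\beta(\mathbf{V}+\Delta\mathbf{V})$. Hence $\mathbf{L}'-\mathbf{L}=-\beta\Delta\mathbf{V}$ exactly, with no compensation term available.

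First, I take $\mathbf{X}=-\mathbf{L}T_{\mathrm{prop}}$ and $\mathbf{E}=\beta\Delta\mathbf{V}\,T_{\mathrm{prop}}$ in the Duhamel-type bound \eqref{eq:appendix_exp_bound}. This gives
\begin{equation*}
\bigl\|\mathrm{e}^{-\mathbf{L}'T_{\mathrm{prop}}}-\mathrm{e}^{-\mathbf{L}T_{\mathrm{prop}}}\bigr\|_2
\le \beta T_{\mathrm{prop}}\|\Delta\mathbf{V}\|_2\,M^2\,
\mathrm{e}^{\omega T_{\mathrm{prop}}\|\mathbf{L}\|_2}\,
\mathrm{e}^{\omega\beta T_{\mathrm{prop}}\|\Delta\mathbf{V}\|_2}.
\end{equation*}
For the $2$-norm one may take $M=1$ and $\omega=1$, since $\|\mathrm{e}^{\mathbf{Y}}\|_2\le \mathrm{e}^{\|\mathbf{Y}\|_2}$. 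A cleaner route is the direct Duhamel integral $\int_0^{T}\mathrm{e}^{-(T-s)\mathbf{L}'}(\beta\Delta\mathbf{V})\mathrm{e}^{-s\mathbf{L}}\,ds$. With it, the bound becomes $\beta T\|\Delta\mathbf{V}\|_2\,\mathrm{e}^{T(\beta\|\mathbf{V}\|_2-1)^+}\mathrm{e}^{\beta T\|\Delta\mathbf{V}\|_2}$, using $\|\mathrm{e}^{-t\mathbf{L}'}\|_2\le \mathrm{e}^{-t}\mathrm{e}^{t\beta(\|\mathbf{V}\|_2+\|\Delta\mathbf{V}\|_2)}$.

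Second, I absorb everything that does not depend on $\Delta\mathbf{V}$ into a constant. The assumed bounded spectral conditions give $\|\mathbf{V}\|_2\le 1$ for a normalized adjacency, together with bounded $\beta$, $T_{\mathrm{prop}}$ and layer count. The factor $\beta$ in the exponent is likewise a fixed constant that I fold into the $\mathcal{O}(\cdot)$ notation (or into a rescaling, for $\beta\le1$). Third, I propagate this through the network exactly as in Proposition~\ref{prop:boundedness}. I multiply by $\|\mathbf{Z}(0)\|_2$, which is bounded on the trained input domain, and by the composite Lipschitz constant $L_\Phi$ of encoder, residual, BN, ReLU and decoder. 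Across finitely many layers, a telescoping argument shows that each layer contributes one such term plus a Lipschitz-amplified upstream perturbation. This yields $\|\Delta\mathbf{Z}_{\mathrm{out}}\|_2\le K\|\Delta\mathbf{V}\|_2\mathrm{e}^{\beta T_{\mathrm{prop}}\|\Delta\mathbf{V}\|_2}$, which is the claimed order.

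The main obstacle is justifying that the exponential factor $\mathrm{e}^{\beta T_{\mathrm{prop}}\|\Delta\mathbf{V}\|_2}$ genuinely remains. In Proposition~\ref{prop:boundedness} it was absorbed into $B_E$, whereas here it must be kept explicit as the essential difference. I will handle this by not bounding $\|\Delta\mathbf{V}\|_2$ a priori: on the bounded domain only $\|\mathbf{V}\|_2$ is controlled, and the growth factor of $\mathrm{e}^{-t\mathbf{L}'}$ is retained. A secondary subtlety is that across multiple layers the exponent could compound to $\mathrm{e}^{\ell\beta T\|\Delta\mathbf{V}\|_2}$. Since the depth is bounded, this is still of the stated form up to constants in the exponent, and I will note that the $\mathcal{O}$ is meant in that sense.
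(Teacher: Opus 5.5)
Your argument is essentially the paper's own proof: the paper also specializes the matrix-exponential perturbation bound \eqref{eq:appendix_exp_bound} to the attention-free generator (it takes $\mathbf{L}=\mathbf{I}-\mathbf{V}$, i.e.\ without your factor $\beta$), absorbs $\mathrm{e}^{\omega T_{\mathrm{prop}}\|\mathbf{I}-\mathbf{V}\|_2}$ into a constant, and finishes with the Lipschitz constants of the encoder, inter-layer maps and decoder. Your additions refine that route rather than change it: taking $M=\omega=1$ for the spectral norm justifies the paper's tacit dropping of $\omega$ in the exponent, and you also supply a sharper direct Duhamel estimate and the multi-layer bookkeeping; the ``obstacle'' of showing that the exponential factor genuinely survives is not needed, since the statement is only an upper bound.
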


\begin{proof}
Consider the pure local diffusion propagation without global attention compensation:
\begin{equation}
\frac{\partial\mathbf{Z}}{\partial t}
=-(\mathbf{I}-\mathbf{V})\mathbf{Z},
\end{equation}
where $\mathbf{V}$ is the normalized adjacency operator. For a fixed initial representation $\mathbf{Z}(0)$, the propagated representation at time $T_{\mathrm{prop}}$ is 
\begin{equation}
\mathbf{Z}(T_{\mathrm{prop}};\mathbf{V})
=\mathrm{e}^{-(\mathbf{I}-\mathbf{V})T_{\mathrm{prop}}}\mathbf{Z}(0).
\end{equation}

Under a topological perturbation, $\mathbf{V}'=\mathbf{V}+\Delta\mathbf{V}$, and hence
\begin{equation}
\Delta\mathbf{L}
=-(\mathbf{V}'-\mathbf{V})
=-\Delta\mathbf{V}.
\end{equation}
Applying the matrix exponential perturbation bound in Eq.~\eqref{eq:appendix_exp_bound} with $\mathbf{X}=-(\mathbf{I}-\mathbf{V})T_{\mathrm{prop}}$ and $\mathbf{E}=T_{\mathrm{prop}}\Delta\mathbf{V}$, we obtain
\begin{equation}
\begin{aligned}
&\left\|
\mathrm{e}^{-(\mathbf{I}-\mathbf{V}')T_{\mathrm{prop}}}
-\mathrm{e}^{-(\mathbf{I}-\mathbf{V})T_{\mathrm{prop}}}
\right\|_2  \\
&\quad \le
T_{\mathrm{prop}}\|\Delta\mathbf{V}\|_2 M^2
\mathrm{e}^{\omega T_{\mathrm{prop}}(\|\mathbf{I}-\mathbf{V}\|_2
+\|\Delta\mathbf{V}\|_2)} .
\end{aligned}
\end{equation}

Within the bounded spectral domain, $\mathrm{e}^{\omega T_{\mathrm{prop}}\|\mathbf{I}-\mathbf{V}\|_2}$ can be absorbed into a constant term. Therefore, 
\begin{equation}
\left\|
\mathrm{e}^{-(\mathbf{I}-\mathbf{V}')T_{\mathrm{prop}}}
-\mathrm{e}^{-(\mathbf{I}-\mathbf{V})T_{\mathrm{prop}}}
\right\|_2
=O\!\left(
\|\Delta\mathbf{V}\|_2
\mathrm{e}^{T_{\mathrm{prop}}\|\Delta\mathbf{V}\|_2}
\right).
\end{equation}

If the encoder, inter-layer mappings, and decoder are Lipschitz continuous within the bounded domain, these mappings introduce only finite multiplicative constants. Consequently,
\begin{equation}
\|\Delta\mathbf{Z}_{\mathrm{out}}\|_2
=O\!\left(
\|\Delta\mathbf{V}\|_2
\mathrm{e}^{T_{\mathrm{prop}}\|\Delta\mathbf{V}\|_2}
\right).
\end{equation}

\end{proof}

This result shows that purely local diffusion exhibits exponential sensitivity to topological perturbations with the propagation depth $T_{\mathrm{prop}}$, whereas the global attention compensation in AdDiff-GT reduces the perturbation dependence to a polynomial form under bounded input and bounded compensation conditions.

\section{Supplementary Experimental Results}
\label{app:exp_details}

\begin{figure*}[t]
\centering
\begin{minipage}[b]{0.28\textwidth}
\centering
\includegraphics[width=\linewidth]{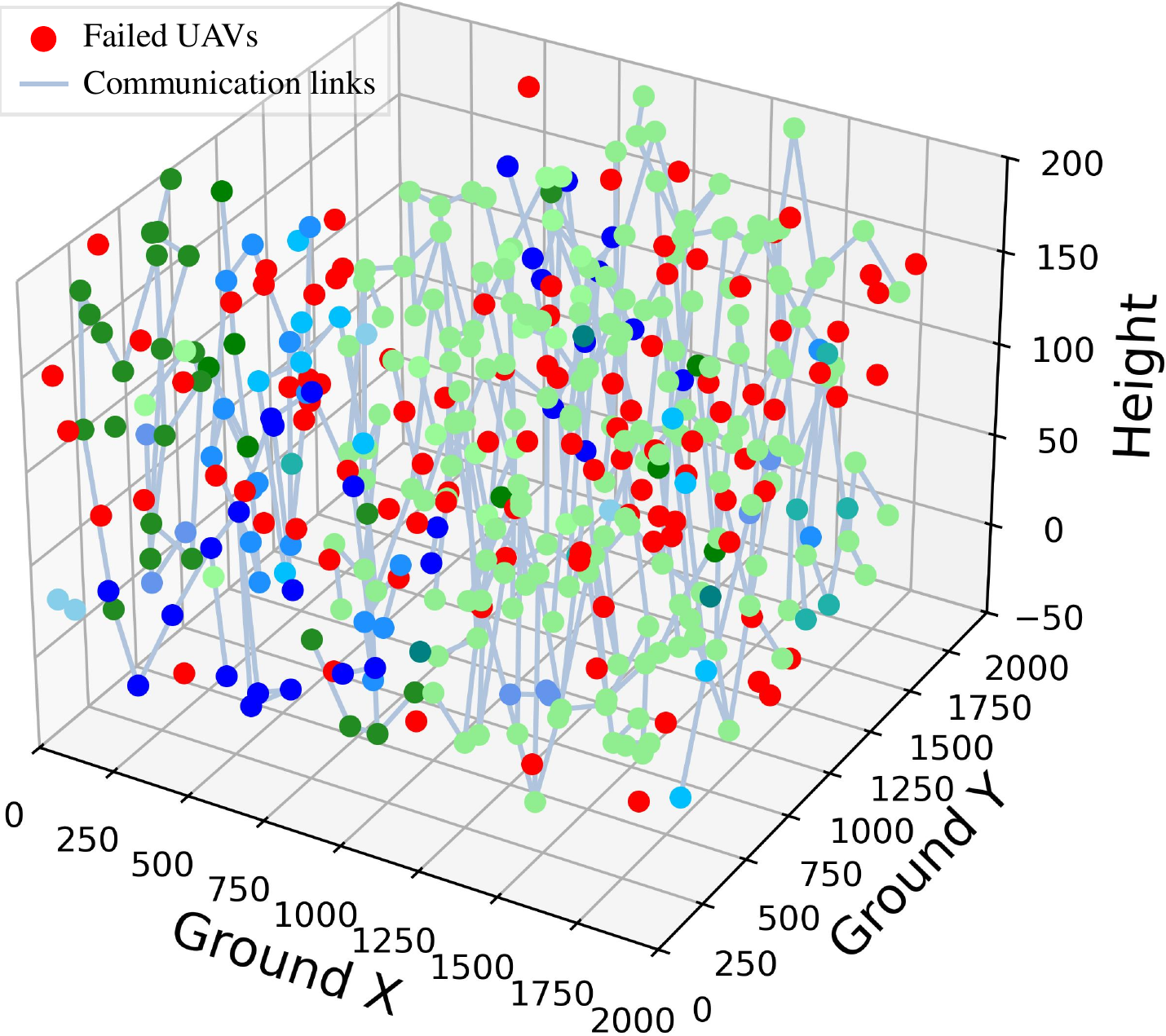}\\
\small (a) Initial residual topologies
\end{minipage}
\hfill
\begin{minipage}[b]{0.28\textwidth}
\centering
\includegraphics[width=\linewidth]{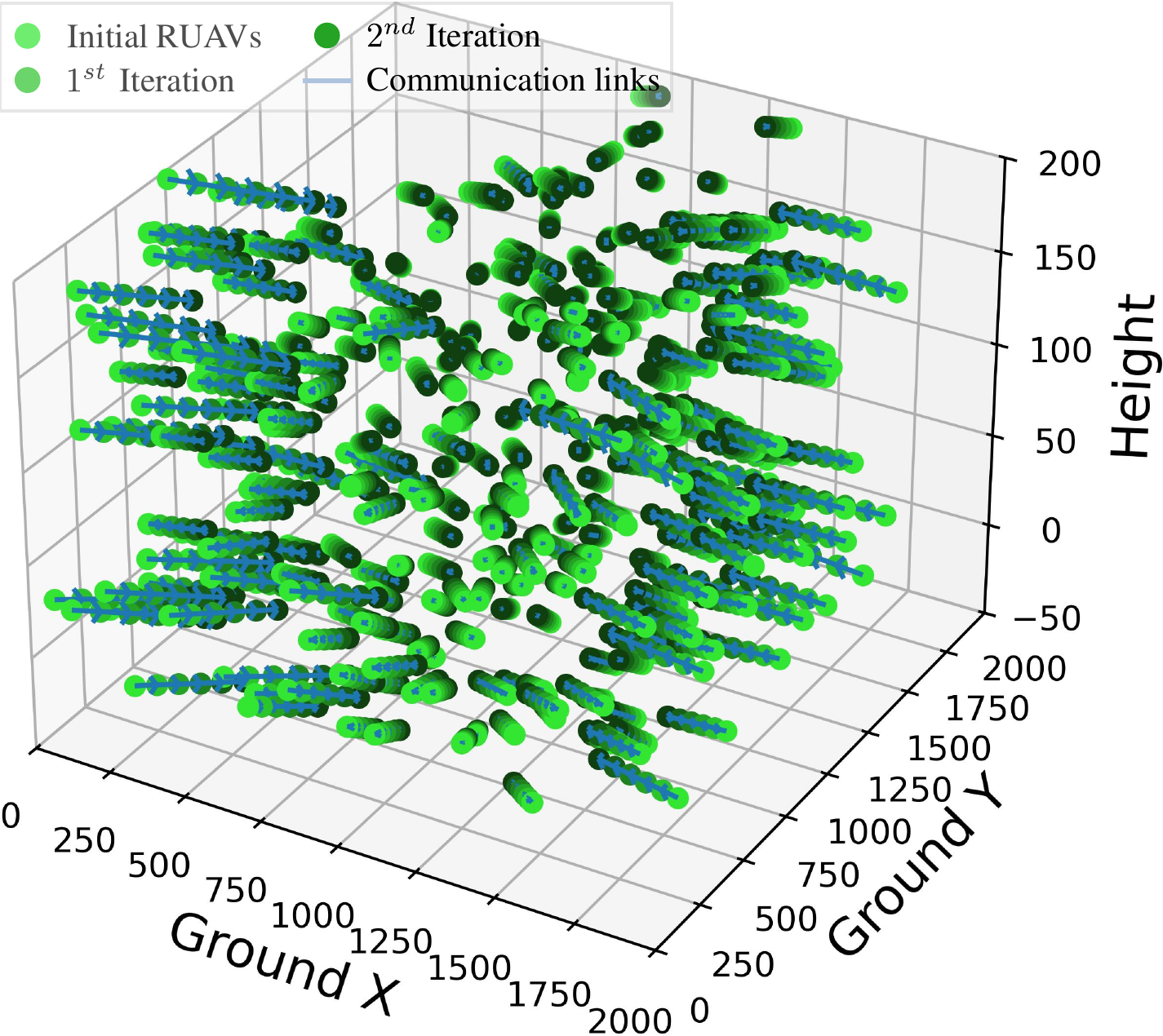}\\
\small (b) 3D recovery trajectories
\end{minipage}
\hfill
\begin{minipage}[b]{0.28\textwidth}
\centering
\includegraphics[width=\linewidth]{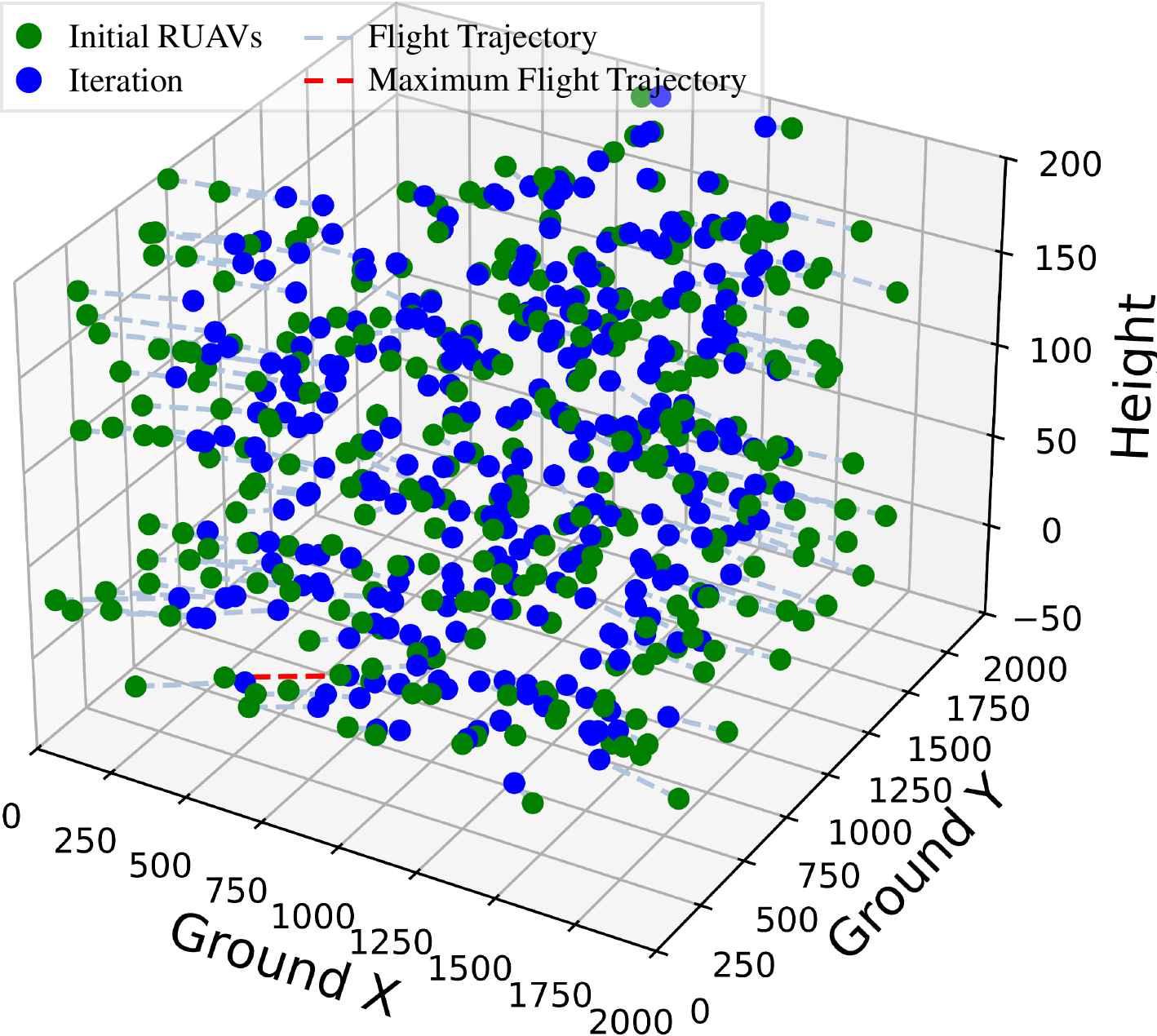}\\
\small (c) Final spatial distributions
\end{minipage}
\caption{Evolution of self-healing performance with 100 damaged nodes.}
\label{fig:Evolutionary_visualization_100}
\end{figure*}

\begin{figure*}[t]
\centering
\begin{minipage}[b]{0.28\textwidth}
\centering
\includegraphics[width=\linewidth]{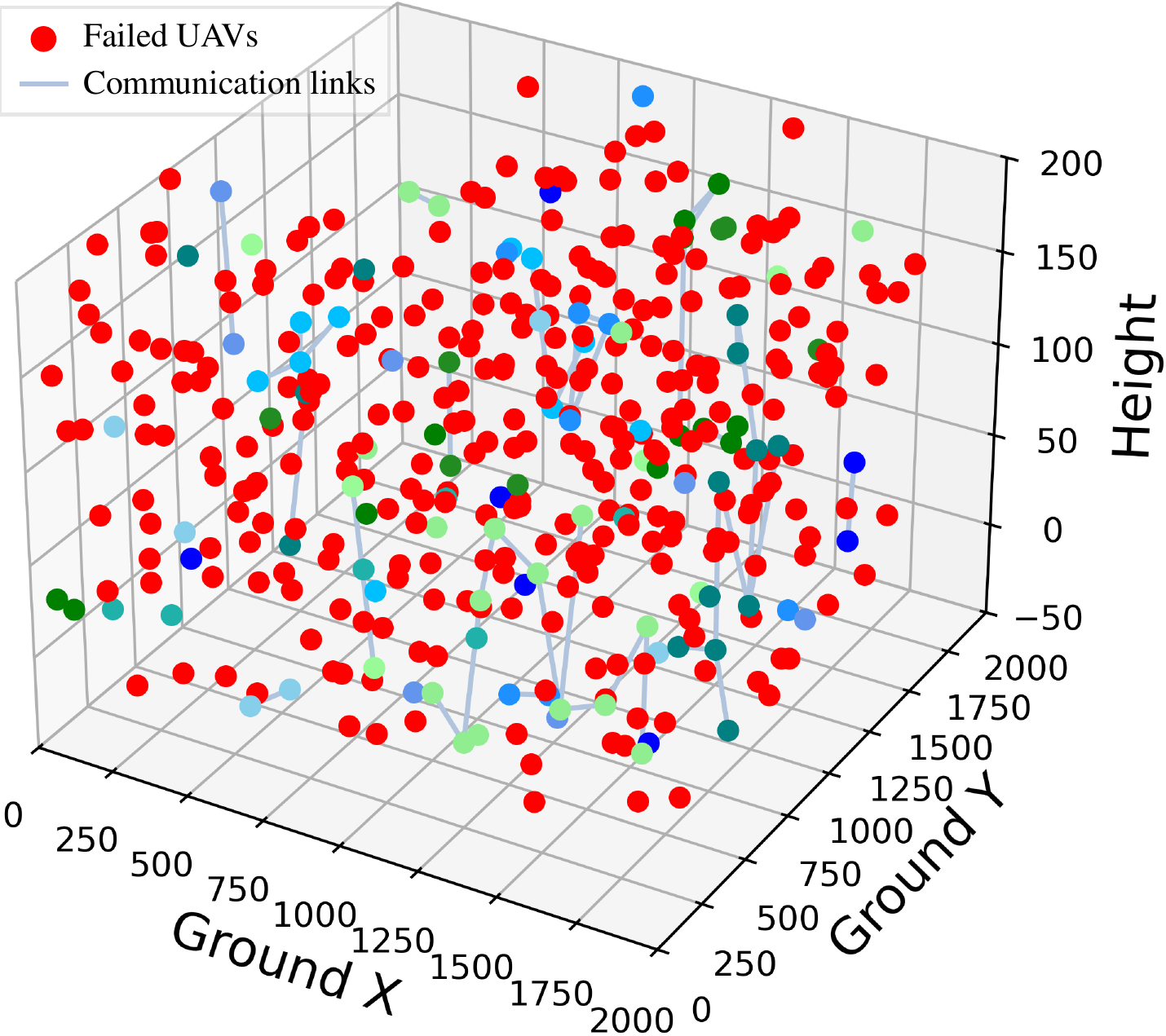}\\
\small (a) Initial residual topologies
\end{minipage}
\hfill
\begin{minipage}[b]{0.28\textwidth}
\centering
\includegraphics[width=\linewidth]{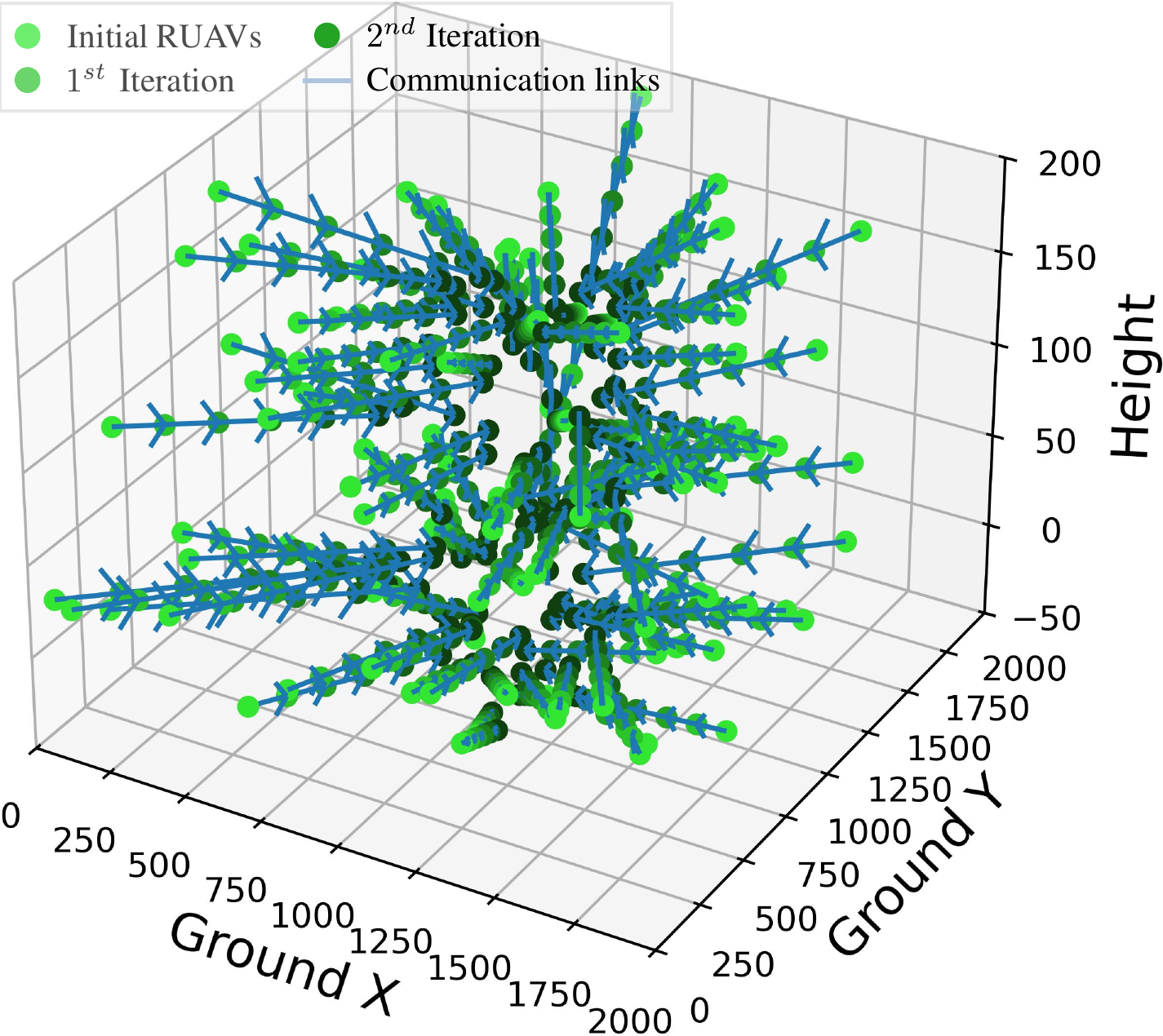}\\
\small (b) 3D recovery trajectories
\end{minipage}
\hfill
\begin{minipage}[b]{0.28\textwidth}
\centering
\includegraphics[width=\linewidth]{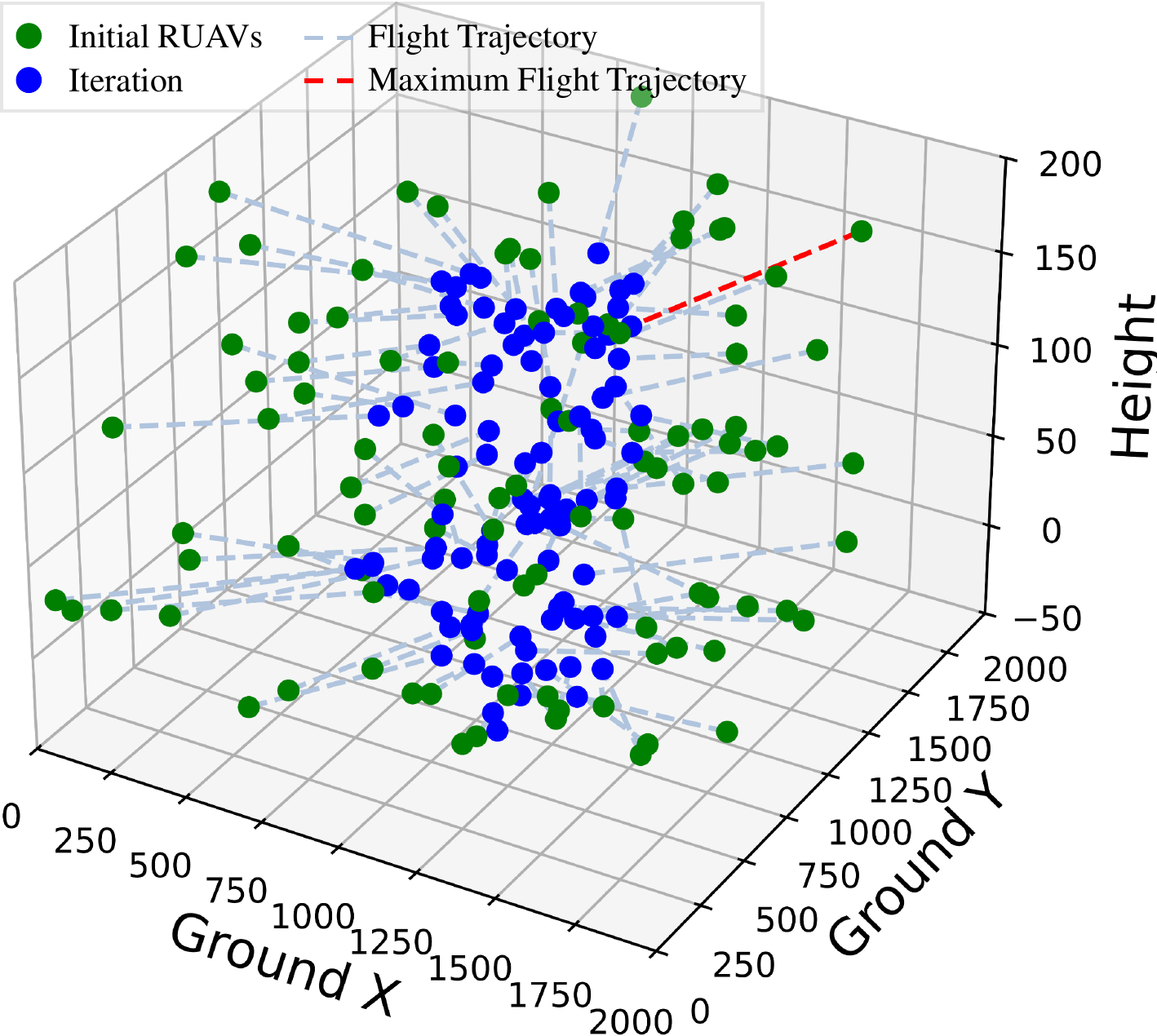}\\
\small (c) Final spatial distributions
\end{minipage}
\caption{Evolution of self-healing performance with 300 damaged nodes.}
\label{fig:Evolutionary_visualization_300}
\end{figure*}

Figs.~\ref{fig:Evolutionary_visualization_100} and \ref{fig:Evolutionary_visualization_300} illustrate representative self-healing trajectories under 100 and 300 damaged nodes
with an initial network size of $N=$ 
400.

\makeatletter
\setlength{\@fptop}{0pt}
\setlength{\@fpsep}{0.75\baselineskip}
\setlength{\@fpbot}{0pt plus 1fil}
\setlength{\@dblfptop}{0pt}
\setlength{\@dblfpsep}{0.75\baselineskip}
\setlength{\@dblfpbot}{0pt plus 1fil}
\makeatother

\vfill

\end{document}